\DeclareRobustCommand{\cev}[1]{%
  {\mathpalette\do@cev{#1}}%
}
\newcommand{\do@cev}[2]{%
  \vbox{\offinterlineskip
    \sbox\z@{$\m@th#1 x$}%
    \ialign{##\cr
      \hidewidth\reflectbox{$\m@th#1\vec{}\mkern4mu$}\hidewidth\cr
      \noalign{\kern-\ht\z@}
      $\m@th#1#2$\cr
    }%
  }%
}
\newcommand{\Cb}{\mathbb{C}}
\newcommand{\Nb}{\mathbb{N}}
\newcommand{\Rb}{\mathbb{R}}
\newcommand{\Zb}{\mathbb{Z}}
\newcommand{\Sb}{\mathbb{S}}
\newcommand{\M}{\mathscr{M}}
\newcommand{\ii}{\mathrm{i}}
\newcommand{\ee}{\mathrm{e}}
\newcommand{\dd}{\,\mathrm{d}}
\renewcommand{\phi}{\varphi}
\renewcommand{\theta}{\vartheta}
\newcommand{\p}{\partial}
\renewcommand{\Re}{\,\mathrm{Re} \,}
\newcommand{\supp}{\mathrm{supp}\,}
\newcommand{\ind}{\mathrm{ind}}
\newcommand{\dom}{\mathrm{dom}}
\newcommand{\APS}{\mathrm{APS}}
\newcommand{\ie}{\emph{i.e.}~ }
\newcommand{\eg}{\emph{e.g.}~ }
\newcommand{\cf}{\emph{cf.} }
\newcommand{\floor}[1]{\left\lfloor #1 \right\rfloor}
\newcommand{\demph}[1]{\textbf{\textup{#1}}}
\newtheorem{theorem}{Theorem}
\newtheorem{definition}[theorem]{Definition}
\newtheorem{remark}[theorem]{Remark}
\newtheorem{example}[theorem]{Example}
\newtheorem{prop}[theorem]{Proposition}
\newtheorem{lemma}[theorem]{Lemma}
\newtheorem{corollary}[theorem]{Corollary}
\newtheorem{notation}[theorem]{Notation}
\begin{document}

\clearpage

%

\begin{center}
{\LARGE\textbf{Aharonov--Casher theorems for Dirac operators on manifolds with boundary and APS boundary condition} \par} 
\medskip 
{\normalsize M.~Fialov\'{a}} 
\footnote{ISTA, Klosterneuburg, Austria. email: \emph{mfialova@ist.ac.at}, ORCID: $0000-0002-2375-7263$}
\medskip \\
{\small 15 May 2023}
\end{center}


\bigskip

\begin{abstract}
	The Aharonov--Casher theorem is a result on the number of the so-called zero modes of a system described by 
	the magnetic Pauli operator in $\Rb^2$.
	In this paper we address the same question for the Dirac operator on a flat two-dimensional manifold with boundary and 
	Atiyah--Patodi--Singer boundary condition. 
	More concretely we are interested in the plane and a disc with a finite number of circular holes  cut out.
	We consider a smooth compactly supported magnetic field on the manifold and an arbitrary magnetic field inside the holes.
\end{abstract}
\bigskip 
\textbf{key words:} Dirac operator, APS boundary condition, 
APS index theorem, Spin$^c$ spinor bundle
\bigskip

\setcounter{tocdepth}{2}
\tableofcontents
\vfill

\clearpage

\vfill

\newpage
\label{page:notation}
\section*{Notation}
\bigskip

\begin{tabular}[c]{l l}
	$\Omega_k$  	& Open ball in $\Cb$ with centre at $w_k\in \Cb$ and radius $R_k$,	\\
	$(r_k, \phi_k)$	&  Polar coordinates around the point $w_k$, 	
					 we set $\phi_k = 0$   to be the axis parallel with the \\
					 & Cartesian positive $x$-axis	\\
	$Cl(V)$ 		 & Clifford algebra on a vector space $V$ 	\\
	$C_0^{\infty}(X)$ 	& Smooth  functions with compact support in $X$	\\
	$const$ 		 & A general constant which can be of 
						different value from one  
					 	(in)equality sign to 		\\
					 & another	\\	 
	$int \, \gamma$ 	& Interior of a curve $\gamma$	\\
	$(\cdot, \cdot)_E$ 	& Inner product on fibres of a bundle $E$ 	\\
	$\Gamma(M, E)$ 	& Smooth sections of a bundle $E$ over a manifold $M$ 	\\
	$L^2(M, E) $ 
				& Square integrable sections of the bundle $E$ over a Riemannian manifold $M$ \\		
	$L^2(M,g;\Cb^2) $ 
				& $\Cb^2$-valued square integrable functions on a Riemannian manifold $M$ with metric $g$ \\	
	$M^{\circ}$		& Interior of a manifold $M$ \\		
	$TM$			& Tangent space of a manifold $M$ 	\\
	$T ^{\ast}M$	& Cotangent space of a manifold $M$	\\
	$T_p^{\ast}M$	& Fibre of the cotangent space above the point $p\in M$ \\
	$(\cdot)^T$	& vector transposition 	\\
	$\p X$ 		& Boundary of a region $X$	\\
	$\overline{X}$ 	& Closure of a subset $X \subset \Cb$	\\
	$\floor{y}$		& The biggest  integer strictly less than $y \in \Rb$	\\
	$\|\cdot\|_{S}$	& The norm on a space $S$ \\
\end{tabular}

\section{Introduction} \label{sec:Introduction}
This paper is inspired by work of Aharonov and Casher (AC), \cite{AC79}, discussing the number of zero modes
(\ie of eigenfunctions corresponding to the zero eigenvalue)
of the Dirac operator with magnetic field.
 We extend their result on $\Rb^2$
 to a plane with holes, Thm.~\ref{thm:unbdd}, 
 a disc with holes, Thm.~\ref{thm:bdd},
 and finally to a sphere with holes, Thm.~\ref{thm:sphere}.
 All our results are concerning a particular choice of the extension of the Dirac operator, given by the Atiyah--Patodi--Singer (APS) boundary condition.
 In \cite{AC79} the zero modes have a definite chirality, 
 which depends on the sign of the flux $\Phi$ of the magnetic field.
 The number of zero modes then depends on the magnitude of the flux, namely it is $\floor{\frac{|\Phi|}{2\pi}}$. 
 On the plane with holes we reproduce the same result. 
 We would like to point out, that considering a self-adjoint realisation $D$
	of the Dirac operator,
	the zero modes of $D$ coincide with the zero modes of its square
	$D^2$ which we may refer to as an operator of Pauli type.
	Such operators describe the non-relativistic limit of Dirac operators and
	due to its positivity the zero energy states 
	are also its ground states.

 The analytic index of the Dirac operator 
 is a closely related quantity as it computes 
 the difference of the number of zero modes with positive and negative   
 chirality
  \footnote{
A reader that is not familiar with the notion of positive (negative) chirality can, for purposes of the results in this paper, think of the eigenvectors of the third Pauli matrix  which is the diagonal matrix $\sigma_3 \coloneqq \mathrm{diag} (1, -1)$, corresponding to eigenvalue $+1$ ($-1$). The concept will be more generally introduced in Sec.~\ref{sec:geometry}.
}.
These are also referred to as modes with spin up or spin down.
Atiyah and Singer proved in~\cite{AS63} that the analytic index is equal to the topological index
of the underlying closed manifold.
Using the stereographic projection, the AC theorem can be reformulated as
a result on a sphere (see \eg \cite[Thm.~8.3.]{ES01}).
Since the sphere and the disc with holes are compact manifolds we 
also have the index theorem for them. It gives the formula for the difference between the number of the zero modes with positive and negative chirality. Our adaptation of the AC theorem then gives each of these numbers separately and is thus a stronger result. Although, one should of course keep in mind that the index theorem is valid for a very general setting.

A continuation of Atiyah and Singer's work resulted in the generalization 
of the index formula for manifolds with boundary by Atiyah, Patodi and Singer
in the series of papers \cite{APS1, APS2, APS3}.
The authors introduced a boundary condition, nowadays known as the APS boundary condition, which we adopted here for the definition of the domain of our Dirac operator.
It is a global boundary condition based on preservation of chirality upon reflection on the boundary. 
A manifold $X$ that has a product structure near the boundary can be extended by an infinite cylinder.
From the analytical point of view the APS boundary condition is tailored so that any zero mode of the Dirac operator on $X$ satisfying this condition can be extended to this infinite cylinder as a square integrable function plus a function constant along the infinite direction of the cylinder. For more details see \cite[Sec.~22E]{BW93}.

The literature on zero modes is vast and we will mention only a couple
of works generalizing the AC theorem.
A proof of the result on a two sphere is due to Avron and Tomaras (but was not published) 
and it can be found e.g. in \cite{CFKS87} or \cite[Appx.~A.3]{ES01}.
For generalization to measure-valued magnetic fields see~\cite{EV01}.
Singular Aharonov--Bohm type fields were considered 
by Hirokawa and Ogurisu in \cite{HO01}, by Persson in \cite{Per06} and
 by Geyler and \v{S}\v{t}ov\'{\i}\v{c}ek in \cite{GS04}.
 Rozenblum and Shirokov, \cite{RS06}, showed that for certain singular magnetic fields there could be an infinite
dimensional space of zero modes with having possibly both spin up and spin down modes.
 Results for the case of even dimensional Euclidean spaces were discussed by Persson in \cite{Per08}.
Bony, Espinoza and Raikov investigate almost periodic potentials in \cite{BER19}.
On a bounded domain with Dirichlet boundary condition the related result was studied
in \cite{Elton16}.

\bigskip 

The aim of this paper is to extend the Aharonov--Casher theorem exactly to these cases when a compact boundary is present. 
Since the APS boundary condition is a condition forged for the index theorem it seems to be a reasonable candidate to start with when  studying the AC formula (due to the relation between the AC and index formulas mentioned above) for Dirac operators on manifolds with boundary. Of course, there are many other choices of boundary conditions that would make the Dirac operator self-adjoint 
(that are not discussed here) and the validity of the AC formula is heavily dependent on the realisation we use.
For a detailed classification of 
self-adjoint realisations of Dirac type operators on manifolds with boundary we refer an interested reader to 
works \cite{BB,BB12}.

Our main motivation to study this problem is the mathematical interest to see how could the AC theorem be influenced by a presence of a boundary. We also find it a curious problem to explore the zero modes on a non-compact manifold, where the index theorem is not applicable.
The particular setting of a plane or a ball with holes is of interest also due to the Aharonov--Bohm (AB) effect. This phenomenon gives a possibility to observe magnetic field in quantum mechanics even if the field is supported in a region inaccessible to the particle. The net observable effect then depends only on the flux of the field in this region. In our setting of magnetic field supported in the holes the AC formula precisely demonstrates such properties. Let us mention, that the AB effect is often studied using the model of an infinitesimally thin and infinitely long solenoid. This corresponds to the magnetic field formally given as $\alpha \delta_0$, where $\alpha$ is the magnetic flux and $\delta_0$ the delta distribution with support at $\{0\}$.
There are many works considering the Schr\"{o}dinger-type operators with such a point interaction (also referred to as the AB field). 
The domain of a realisation is then characterised by the behaviour of the functions at the singularity occurring at the origin.
The self-adjoint extensions were classified in  \cite{AT98} and \cite{DS98} for Schr\"{o}dinger and Pauli operators with the AB field, respectively.
From the recent literature studying such singular interaction let us mention \cite{BCF23, Per06} for results related to Pauli and Dirac operators,
\cite{PR11, CF23} studying Schr\"{o}dinger operators and \cite{DG21, DF23} investigating Bessel operators that include Schr\"{o}dinger operators with the AB field.

\bigskip

Let us sketch the central points of the proofs of our main results Thms.~\ref{thm:bdd} and~\ref{thm:unbdd}. The first steps rely on the idea used in the original Aharonov--Casher paper. In particular the equation for the zero modes 
decouples and we can analyse each component separately. Each of the components then factorizes as a product of an (anti)analytic function $g$ and an exponential whose argument depends on the magnetic scalar potential.
While AC consider the simply connected manifold $\Rb^2$, 
where the function $g$ has a Taylor series, in our case $g$ has only the Laurent series on a neighbourhood of each of the holes. To achieve the starting point of Aharonov and Casher we use the APS boundary condition to 
extend $g$ (anti)analytically to the interior of the holes. 
The main difficulty here is to find a suitable way to compare the boundary values to the boundary condition. This is a local analysis and this step requires that the boundaries of our holes are indeed circular. In the unbounded case of $\Rb^2$ with holes we then 
complete the analysis by cutting off the Taylor series of $g$. For that we use the condition that the zero modes need to be in the domain of the operator and therefore have to be square integrable at infinity. This is the same mechanism as in \cite{AC79}.
For the case of a disc the eigenfunctions have to again satisfy the APS boundary condition, which provides us with the cut off on the Taylor series of $g$.
The highest possible power in the series determines the number of the
zero modes. Due to the different source of the constraint on this power
we arrive at different results. In Rem.~\ref{rem:zero_modes_result_form} we however show that for those values of fluxes, where the results yield a different number of zero modes, the extra zero mode on the disc is not square integrable at infinity when considered on a disc of a growing radius.

\bigskip	 

Finally, we briefly outline the content of this paper.
In this introduction we give the definition of the Dirac operator on 
	an orientable two dimensional Riemannian manifold, 
	and the APS boundary condition.
	We further discuss the magnetic field. 
	Introducing our particular setting we find an explicit form of
	the APS boundary condition and establish the gauge invariance of	
	 the problem in Lem.~\ref{le:gauge_invariance}.
	 Using Lem.~\ref{le:gauge_invariance} we can without loss of generality 
	 study only fluxes mod $2\pi$ inside each hole.
	We refer to this as ``gauging away'' integer-multiples of $2\pi$ 
	of the flux.

	In Sec.~\ref{chap:main_thms} we state and prove the main results.
	To briefly summarise, we obtain the same result as Aharonov and Casher in the case of the Dirac operator on a plane with holes. On a disc with holes our statement is in accordance with the index theorem.
	
	We extend the Aharonov--Casher theorem to a sphere with holes in 
	Sec.~\ref{sec:sphere}.
	Despite this being a direct consequence of our result on a disc with holes, 
	due to the fact that the two cases are related by stereographic projection,
	we first need some theoretical preparation in form of treating the Dirac operator with APS boundary condition in a conformal metric.
	The proof also requires analysis of the spinors under the change of 
	coordinates by the M\"{o}bius transform which we discuss in Appx.~\ref{ap:mobius_transform}.
	
	In Sec.~\ref{sec:index} we use the generalized index formula by 
	Grubb \cite{Gru92} and Gilkey \cite{Gi93}
	of the index theorem on manifolds with boundary,
	to compute the index of the magnetic Dirac operator
	and compare it to our result on the disc region. 
	Let us remark, that the original result in \cite{APS1} 
	cannot be applied directly 
	since it was restricted to manifolds 
	that have a product structure near the boundary.

	Let us mention that
	this work is based on the author's PhD thesis~\cite{Thesis}.

\bigskip

\noindent\textbf{Acknowledgment:}
First and foremost I am grateful to Jan Philip Solovej for fruitful meetings during (and after)
my PhD program, when this work was done.
Further I would like to thank Joshua Hunt, Anna Sisak, Jakub L\"{o}wit, 
B\l{}a\.{z}ej Ruba, Volodymir Riabov, Lukas Schimmer and Georgios Koutentakis for valuable discussions.
Many thanks belong to Rafael Benguria for hosting my visit, during which some of the work has been done.
I am also grateful to Marina Prokhorova who first initiated the discussion of this project topic
and to Annemarie Luger for her valuable comments during my PhD defence
and in particular pointing out the qualitative difference in our two main results.
I would like to acknowledge 
support for research on this paper from VILLUM FONDEN through the QMATH
Centre of Excellence grant. nr. $10059$.
This project also received funding from the European Union’s Horizon 2020 research and innovation
programme under the Marie Sk\l{}odowska-Curie grant agreement No~$101034413$.

I am grateful to the two reviewers for reading carefully my manuscript and pointing out several issues contributing thus significantly to the readability and clarity of this paper.

\subsection{Dirac operator and the APS boundary condition}
\label{sec:geometry}
	Let $M$ be a two-dimensional oriented Riemannian manifold  
	with compact boundary $\p M$ and metric $g$. 
	Let further $E$ be a two-dimensional complex vector bundle equipped 
	with an inner product $(\cdot, \cdot)_E$
	on the fibres of $E$. 
	Denote by $\mathrm{End}(E)$ the bundle of endomorphisms of the bundle $E$.
	If there is a vector bundle map
	\begin{align*}
 		\sigma: T ^{\ast}M  \rightarrow \mathrm{End}(E)
	\end{align*}
	which is Hermitian, \ie
	 $\sigma(\zeta) = \sigma(\zeta)^{\ast}$ 
	for all $\zeta \in T ^{\ast}M$,
	and satisfies the Clifford relations
	\begin{align}
	\label{eq:clifford_rel}
		\sigma(\zeta) \sigma(\mu) + \sigma(\mu) \sigma(\zeta)= 2g(\zeta, \mu) 
		\qquad \text{ for any }
		\zeta, \mu \in T_p ^{\ast}M 
		 \text{ at all } p\in M \,,
	\end{align}
	we call $E$ a \demph{Spin$^c$ spinor bundle}\footnote{
		One might rather call $E$ a bundle 
		of irreducible complex Clifford modules. These two 
		concepts were, however, shown to be identical, see
		\cite[Sec.~3]{Chen17}.
		}
	over $M$.
	The mapping $\sigma$ is called the Clifford multiplication. 
	 The Clifford multiplication further extends to a unique mapping 
	 from the bundle of Clifford algebras $Cl(T^{\ast}M)$. 
	 That is the quotient $\otimes T^{\ast} M / I_g $ 
	 of the bundle of tensor algebras 
	 $\otimes T^{\ast}M \coloneqq \oplus_{k\geq 0} (T^{\ast}M)^{\otimes k}$
	 by the bundle of ideals $I_g$ generated by 
	 $\{\zeta \otimes \zeta -2 g(\zeta, \zeta) \mid 
	  	\zeta \in T^{\ast}M  \}$.
	In even dimensions we call the 
	$\ii^{3\dim(M)/2}$-multiple of the Clifford multiplication 
	by the volume form (which is an involution) 
	the \textbf{chirality operator}
	and refer to its eigenvectors with eigenvalue 
	$+1$ (or $-1$) as \textbf{spin up} (or \textbf{down})
	vectors. 
	Locally we can always choose the representation of
	$Cl(\Rb^2)$ by the constant Pauli matrices
	\begin{align} \label{eq:Pauli_matrices}
		\sigma^1 = \sigma(\dd x) =
			\begin{pmatrix}
			 	0 	& 	1 \\
			 	1	& 	0
			\end{pmatrix}	
		\text { and }
		\quad
		\sigma^2 = \sigma(\dd y) =
			\begin{pmatrix}
			 	0 	& 	-\ii \\
			 	\ii	& 	0
			\end{pmatrix} \,.		
	\end{align}
	 Note that in $\Rb^2$ with the standard metric the chirality operator corresponds to the 
	third Pauli matrix $\sigma_3= \mathrm{diag} (1, -1)$.
	In what follows we use the standard notation 
	$\Gamma(M, E)$
	for smooth sections of the bundle $E$
	and $L^2(M, E)$ for the square integrable sections w.r.t. the volume element
	generated by the Riemannian metric on $M$.
	Let us next equip $E$ with a connection $\nabla$. We call
	$\nabla$ a \demph{Spin$^c$ connection} if
	it is
	\begin{enumerate}
	 \item metric:
	 \begin{align*}
	 	X (\zeta, \mu)_E = (\nabla_{X} \zeta, \mu)_E + (\zeta, \nabla_{X} \mu)_E\,,
	 \end{align*}
	 for any sections $\zeta, \mu \in \Gamma(M, E)$ and any vector field $X \in TM$, and,
	 \item compatible with the Clifford multiplication ${\sigma}$:
	 \begin{align*}
	  	[\nabla_X, {\sigma}(\mu)] = {\sigma}(\nabla_X^{LC}\mu) \,,
	 \end{align*}
	 for all vector fields $X$ and one-forms $\mu$ on $M$. 
	 Here, $\nabla^{LC}$ is the Levi-Civita connection 
	 on the cotangent space $T^{\ast}M$ of $M$.
	\end{enumerate}
	\begin{definition}\label{def:Dirac_op}
		Let $E$ be a Spin$^c$ spinor bundle over $M$
		with Clifford multiplication $\sigma$ and
		a $Spin^c$ connection $\nabla$.
		The Dirac operator $D:\Gamma(M, E) \rightarrow \Gamma(M, E)$ is the following composition
		\begin{align*}
			D = -\ii \sum_{j\leq 2} {\sigma}(e^j) \nabla_{e_j} \,,
		\end{align*}
		where $(e_j)_{j \in \{1,2\}}$ is an orthonormal basis on $TM$ and $(e^j)_{j\leq 2}$ is
		the corresponding dual basis on $T^{\ast}M$.
	\end{definition}
	Note that the definition is independent of a particular choice of $(e_j)_{j\leq 2}$, 
	so $D$ is well defined globally. 
	We remark, that one can use a Spin$^c$ connection on a Spin$^c$ 
	bundle to form a Dirac operator with magnetic field 
	(see further Sec.~\ref{ssec:mag_field}).
	The Dirac operator is a first order operator which is elliptic, symmetric  and
	 whose principal symbol is the Clifford multiplication.
	Furthermore it can be extended as a closed linear map
	to the \demph{maximal domain} of $D$ 
	 \begin{align}\label{eq:dom_max}
	 	\dom(D ^{\max}) \coloneqq \{ u \in L^2(M, E) \mid D u \in  L^2(M, E)  \} \,.
	\end{align}	
	To introduce the Atiyah--Patodi--Singer (APS) boundary condition
	we are following the formalism for elliptic boundary 
	conditions used 
	in~\cite{BB, BB12}. 
	We, however, diverted with the convention for the Clifford
	multiplication which in
	the cited papers is considered to be anti-hermitian and satisfies 
	the Clifford relations~\eqref{eq:clifford_rel}
	with an extra minus sign on the right-hand side.
	\begin{notation} \label{notation}
		Let $\nu^{\sharp}$ be the normalized
		inner normal vector field on $\p M$.
		We will denote by $\nu \in T ^{\ast}M$ the local co-vector field 
		on the boundary $\p M$ dual to $\nu^{\sharp}$.
		The local space of co-vectors tangent to the boundary is defined by
		\begin{align*}
		 	T ^{\ast}\p M \coloneqq \{\xi \in T ^{\ast }M \mid g(\xi, \nu) = 0\} \,.
		\end{align*}
		 We further write $\xi^{\sharp}$ for the normalized tangent vector which is dual to $\xi \in T ^{\ast}\p M$.
	\end{notation}	
	With this notation we can locally write the Dirac operator 
	in the neighbourhood of the boundary
	as
	\begin{align}\label{eq:boundary_form}
 		D = -\ii \sigma(\nu) (\nabla_{\nu^{\sharp}} + A_0)
 		\quad \text{with} \quad
 		 A_0 = \sigma(\nu) \sigma(\xi) \nabla_{\xi^{\sharp}} 	\,,
	\end{align}
	where we used that by the Clifford relations $\sigma(\nu)^2$ is the identity on fibres of $E$ (restricted to $\p M$).
	As in the Appx.~2 of \cite {BB} we then define the canonical boundary operator 
	which anti-commutes with $\sigma(\nu)$.
	\begin{definition} \label{def:canonicalBO}
		Let $E$ be a Spin$^c$ spinor bundle over $M$
		with Clifford multiplication $\sigma$ and
		a $Spin^c$ connection $\nabla$
		and let $D$ be a Dirac operator on $E$. 
		We define the \demph{canonical boundary operator adapted to $D$} by
		\begin{align*}
			A \coloneqq 
				 \frac{1}{2}(A_0-\sigma(\xi)\nabla_{\xi^{\sharp}} \sigma(\nu))
				= A_0 -\frac{\kappa}{2} \,,
		\end{align*}
		where $\kappa$ is the eigenvalue of the shape 
		operator of the 
		boundary w.r.t. the normal field $\nu$, \ie
		$\nabla^{LC}_{\xi^{\sharp}}\nu = \kappa \xi$.
		In fact $\kappa$ is the principal curvature 
		of the boundary.
	\end{definition} 
Note, that $A$ was chosen so that the anti-commutator 
$\{A, \sigma(\nu)\}$ vanishes. 
In our two-dimensional case it is also not difficult to check by a direct computation that $A_0$ is symmetric.
For a general dimension this is shown in \cite[Appx.~1]{BB}. By definition the canonical boundary operator  $A$ is thus also symmetric.  It follows, that it is essentially self-adjoint on $L^2(\p M; \Cb^2)$, since $\p M$ is a compact manifold.
The importance of boundary operators is that one can use them for a construction
of elliptic boundary conditions (see \cite[Defs.~1.9,~1.10, and Thm.~1.12]{BB12}) which give rise to domains that are subsets of
 $H^1_{loc}(M, E) = \{u \in \ L^2_{loc}(M, E) \mid \nabla u \in L^2_{loc}(M, E) \}$.
 Here $L^2_{loc}(M, E)$ denotes sections of $E$ that are square integrable 
 over each compact subset $K \subset M$ and, in particular, we may have
 $K\cap \p M \neq \emptyset$.
  
	  Since $A$ is a self-adjoint elliptic operator on the compact manifold $\p M$, it has purely discrete spectrum.
	Let us denote by $\{ v_k \mid k\in \Zb\setminus \{0\} \}$ 	
	a set of orthonormal
	eigenvectors of $A$ 
	corresponding to eigenvalues $\lambda_k\neq 0$. 
	We order these eigenvalues as
	$\ldots\leq\lambda_{-k}\leq \lambda_{-k+1}\ldots \leq \lambda_{-1}<0<\lambda_1\leq \ldots<\lambda_k\leq \lambda_{k+1}\leq \ldots$.
	Let us further assume that there is the decomposition
	into two mutually orthogonal spaces
	 $\ker A = \mathrm{span }\{v_0\} \oplus
	 	\mathrm{span }\{ \sigma(\nu) v_0\}$.
	In general $v_0$ could be a set of vectors but in our case it is only one vector.
	We define the \demph{APS (Atiyah--Patodi--Singer) boundary condition} as the following closure
	of a subset of smooth sections on the boundary 
	\begin{align}\label{eq:APS_general}
	 	BC _{APS}  \coloneqq \overline{\mathrm{span} (\{v_k\}_{\lambda_k<0} \cup v_0 )} \,.
	\end{align}
	We point out that $v_0$ and $\sigma(\nu)v_0$ are exchangeable
	and that we are making a choice here.
	The closure in~\eqref{eq:APS_general} is w.r.t. the norm
	\begin{align}
	\label{eq:H_check}
	 	\bigg\|\sum_{k \in \Zb} c_k v_k\bigg\|^2 _{\check{H}(A)} 
	 		\coloneqq \sum_{\lambda_k < 0} |c_k|^2(1+\lambda_k^2) ^{1/2} 
	 			+ \sum_{\lambda_k \geq 0} |c_k|^2(1+\lambda_k^2) ^{-1/2}
	 			\,,
	 			\quad
	 	c_k \in \Cb \,.
	\end{align}
	Further, $\check{H}(A)$ denotes the closure of $C^{\infty}(\p M, \Cb^2)$
	in this norm.
	We call the realisation $D^{\APS}$ of $D$ on the domain
	 \begin{align} \label{eq:APS_domain}
		\dom (D^{\APS}) = \{ u \in \dom(D^{\max}) \mid \gamma_0 u \in BC _{APS}\} \,,
	\end{align}
	the \demph{Dirac operator with APS boundary condition}.
	The trace map $\gamma_0 u = u\big|_{\p M}$ is well defined for 	
	$u\in C_0^{\infty}(M, \Cb^2)$,
	in particular $\supp u \cap \p M$ can be non-empty,
	and by \cite[Thm.~1.7.(2)]{BB12} it extends to a bounded linear map
	\begin{align}\label{eq:trace_map}
	 	\gamma_0: \dom(D^{\max})\rightarrow \check{H}(A) \,.
	\end{align}
	Thm.~4.12. in~\cite{BB} tells us that \eqref{eq:APS_domain} is a self-adjoint realisation.
	Recalling that $\sigma(\nu)$ and $A$ anti-commute, the self-adjointness can be also seen directly from the Green's formula (\cf
	\cite[Prop.~2.1]{BB})
	\begin{align*}
		\int_{M}(D\psi, \phi)_E = \int_{M}(\psi, D^{\ast} \phi)_E 
		 - \int_{\p M}(-\ii \sigma(\nu) \psi, \phi)_E \,.
	\end{align*}
	Here $\psi, \phi$ are compactly supported functions on $M$. In particular their support and the boundary $\p M$ can have a non-empty intersection. By $D^{\ast}$ we denoted the adjoint of $D$.
	 Let us remark that there are also other choices of more general 
	 APS boundary conditions (see \cite[Example~1.16(b)]{BB12}). 	
	If $M$ is compact then Thm.~5.3 in \cite{BB} implies that the Dirac operator $D^{\APS}$
	with the APS boundary condition is Fredholm.
	We recall that Fredholm operator is an operator with closed range and a finite dimensional kernel and cokernel. 
	
	In this work we are also interested in $M$ being a plane with holes. 
	In this case zero is an eigenvalue of finite geometrical multiplicity embedded in the essential spectrum (see Thm.~\ref{thm:unbdd} and Cor.~\ref{cor:zero_in_ess_spec} bellow)
	and therefore $D^{\APS}$ does not have a closed range (\cf \cite[Thm.~5.2]{Kato}). Thus it is not Fredholm.
	Since we will consider manifolds with several components of boundary we introduce the notation $BC _{APS}\mid_{\p \Omega_j}$
	for the $APS$ boundary condition on the component $\p \Omega_j$ of the boundary.

\subsection{Magnetic field and minimal coupling}	\label{ssec:mag_field}
	Let us consider the connection  $\nabla = \dd -\ii \alpha$ on a
	the trivial bundle $E$ over $\mathscr{\Cb}$ with fibre $\Cb^2$.
	The term $-\ii \alpha$ is called the local connection one-form and it satisfies
	$-\ii \alpha(Y) \in \ii \Rb$ for all vector fields $Y$ on $\Cb$.
	Writing $\alpha = \frac{1}{2} (a \dd \bar{z} + \bar{a} \dd z)$, $a \in C^{\infty}(\Cb)$ 
	and
	using the standard notation $\p_z = \frac{1}{2} (\p_x-\ii \p_y)$, $\p_{\bar{z}} = \frac{1}{2} (\p_x+\ii \p_y)$ we obtain the Dirac operator (\cf the representation of the Clifford multiplication 
	\eqref{eq:Pauli_matrices})
	\begin{align*}
 		D_a = 
 			-2\ii 
 		\begin{pmatrix}
 			0 			& 	\p_z	\\
 			\p _{\bar{z}} 	& 	0
		\end{pmatrix}
		- 
		\begin{pmatrix}
 			0 	& 	\overline{a} 	\\
 			a 	& 	0
		\end{pmatrix} \,,
	\end{align*}
	known from physics to be (by the principle of correspondence 
	and the minimal coupling) the Hamiltonian of a relativistic charged mass-less particle in a magnetic field of vector potential $\alpha$.
	The field strength is the closed two-form $\beta=\dd \alpha$ and
	two different connection one-forms $\alpha ^{(1)}, \alpha ^{(2)}$ correspond to the same magnetic field $\beta$
	if they differ by an exact form. This ambiguity is the well known gauge invariance.
	To put this in the context of the vector formalism we can write $a = a_x + \ii a_y$, for some $a_x, a_y \in C^{\infty}(\Rb^2)$.
	Then defining
	\footnote{$g^{jk}$ here refers to the components of the metric $g$ in the coordinate basis $(\dd x, \dd y)$ of one-forms} 
	$a^j = g^{jk} \, a_k$
	 for $j,k \in \{x,y\}$,
	the vector potential $\vec{a} = (a^x, a^y)$
	corresponds to the magnetic field $\vec{B} = (0, 0, B)$ such that
	$\mathrm{curl} (a^x, a^y, 0) = (0, 0, B)$ with $\beta = B(z) \frac{\ii}{2} \dd z \wedge \dd \bar{z}$.
	We will introduce the Aharonov--Casher gauge  
	\begin{align}\label{eq:h_relation}
	 	\partial_{z} h(z)  = -\frac{\ii \overline{a} }{2} \,,
	\end{align}
	using the scalar potential $h$ satisfying $-\Delta h = B $ on $\Cb$.
	For $B$ decaying sufficiently fast at infinity 
	(not necessarily smooth) we can write the solution of this Poisson equation
	 \begin{equation}\label{eq:potential}
	 	h(z) = - \frac{1}{2\pi } \int_{\Cb}\log |z- z'| B(z') \frac{\ii}{2}\dd z' \wedge \overline{\dd z'} \,.
	\end{equation}
	Notice that this gauge is automatically divergence free 
	\begin{align*}
 		\p_x a_x + \p_y a_y
 			= \Re (2\p _{\bar{z}} \bar{a})
	 		= \Re (4\ii \p _{\bar{z}}\p _{z} h) = \Re (-\ii B) = 0 \,.
	\end{align*}
	 
	Another quantity that describes the magnetic field is called the \demph{magnetic flux}
	\begin{align*}
	 	\Phi\coloneqq \int _{\Cb} B \frac{\ii}{2} \dd z \wedge \dd \bar{z} \,.
	\end{align*}
	
	\begin{remark}\label{rem:a_is_bounded}
		Let us now consider a smooth magnetic field $B$ with compact support.
		By elliptic regularity (see \eg \cite[Sec.~6.3, Thm. 3]{Evans}) the potential $h$ 
		is then also a smooth function.
		Note that the Poisson equation $-\Delta h = B$ determines $h$
		 up to an addition of a harmonic function
		 and our particular choice corresponds to the unique gauge choice via the relation~\eqref{eq:h_relation}.
		 It yields a divergence free vector potential $a(z)$ that is bounded at infinity. 
		 We will refer to the 
		 choice~\eqref{eq:h_relation}, \eqref{eq:potential} of 
		 $a(z)$ as the \demph{Aharonov--Casher gauge}.
		
		 To see the boundedness, let $R'>0$ be such, that $R'>2|z'|$ for all $z' \in \supp B$ and 
		 taking $|z|>R'>2|z'|$ we can use the bound 
		 \begin{align*}
			 	\bigg| \frac{B(z')}{z-z'} \bigg| \leq \frac{2}{R'} |B(z')| 
			 		\in  L^1(\Cb) \,, \quad z'\in \supp B \,,
			\end{align*}
		to apply the dominated convergence theorem and obtain
		\begin{align*}
			 	|\p_z h(z)| 
			 		\leq \frac{const}{|z|} \int _{\Cb} \frac{|B(z')|}{1-\frac{|z'|}{|z|} } \frac{\ii}{2} \dd z'  \wedge \overline{\dd z'}
			 		\leq \frac{const}{|z|} \int _{\Cb} |B(z')| \ii \dd z'  \wedge \overline{\dd z'}
			 		\leq \frac{const}{|z|} \,,
			\end{align*} 
		for $|z|$ large.
		
		From \eqref{eq:potential} one also deduces the asymptotic behaviour of the scalar potential
		\begin{align}\label{eq:asymptot_h}
			h(z) = -\frac{\Phi}{2\pi} \log |z| + \mathcal{O}(|z| ^{-1}) \,,
		\end{align}
		as $|z|$ tends to infinity. 
		Moreover in the case of a spherically symmetric $B$ there is no error term
		and for $z$ outside of support of $B$
		\begin{align}\label{eq:h_spherical_field}
		 	h(z) = -\frac{\Phi}{2\pi} \log|z|  \,, 
		\end{align}
		by the Newton's law.
	 \end{remark}

	\begin{remark}\label{rem:vec_pot_in_polar}
	 	In the flat space $\Rb^2$ we clearly have $a_x =a^x$ and
		$a_y = a^y$. Notice that if we consider the polar coordinates 
		$(r, \phi) = (\sqrt{x^2 + y^2}, \arctan \frac{y}{x} )$
		and write 
		$\alpha = (a_r, a _{\phi})$ for the components in the normalized
		basis $(\dd r, r \dd \phi)$ 
		and $\vec{a} = (a^r, a ^{\phi})$ for the components in the dual basis
		$\left( \p_r, \frac{\p _{\phi}}{r}\right)$ we also have $a_r = a^r$ 
		and $a _{\phi} = a ^{\phi}$.
	\end{remark}
	
\subsection{Problem setup}
\label{sec:problem_setting}
We start by establishing some notation:
	\begin{itemize}
		\item Let $\mathscr{M}$ be either 
				the complex plane or a disc $\Omega_{out} \subset \Cb$ with centre 
				at the origin with radius $R_{out}$.
	 	\item 	$\Omega_j \subset \mathscr{M}$, $j \in \{1,2,\ldots, N\}$
	 			refers to a ball  with centre at $w_j\in \mathscr{M}$ 
	 			and radius $R_j>0$.
	 	\item $M = \mathscr{M} \setminus \cup_{k\leq N} \Omega_k$, $N \in \Nb$ is our
	 			two-dimensional manifold of interest.
	 	\item $(r_j, \phi_j)$ denote the polar coordinates at $w_j \in \Omega_j$.
	 	\item $B_j$, $j\leq N$ denotes the magnetic field with support inside $\Omega_j$,
	 			while $B_0 \in C_0^{\infty}(M^{\circ})$.
	\end{itemize}
	Complementing the above notation for magnetic field on $\mathscr{M}$ we denote
	\begin{align} \label{eq:tot_mag_field}
		 B = B_{sing} + B_0 \,,
	\end{align}
	where $\supp B_{sing} \subset \cup_{k\leq N} \Omega_k$.
	Later in Lem.~\ref{le:gauge_invariance} we will show that without loss of generality we may assume
	$B_{sing} = \sum_{k\leq N}\Phi'_k \delta_{w_k}$,
	where $\Phi'_k \in [-\pi, \pi)$ \footnote{
		we can choose any interval of length $2\pi$, 
		but this choice is the most convenient one for the purposes of our analysis
	} 
	differs by an integer multiple of $2\pi$ from the flux of $B$
	through the $k$-th hole
	\begin{align*}
	 	\Phi_k \coloneqq \int_{\Omega_k} B(z) \frac{\ii}{2}\dd z\wedge \dd \bar{z} \,.
	\end{align*} 
	We refer to $\Phi'_k$ as a normalized flux through the hole $\Omega_k$.
	The total flux is then the sum 
	\begin{align*}
	 	\Phi \coloneqq \Phi_0 + \sum_{k\leq N}\Phi'_k \,,
	\end{align*}	 
	where $\Phi_0 = \int_{M} B_0(z) \frac{\ii}{2}\dd z\wedge \dd \bar{z}$
	is the flux through the bulk of $M$.
	
\subsubsection*{The Dirac operator and the APS boundary condition explicitly}
	For finding more concrete form of the APS boundary condition 
	for the above described setting
	we will make use of the Dirac operator expressed in polar coordinates $(r,\phi)$ 
	\begin{align} \label{eq:D_in_polar}
 		D_a 
 		= -\ii	
 		\begin{pmatrix}
			0		&	\ee^{-\ii \phi}(\partial_r- \ii \frac{\partial_{\phi}}{r}) 	\\				
			\ee^{\ii \phi}(\partial_r+ \ii \frac{\partial_{\phi}}{r})		& 	0
		\end{pmatrix}
		- 
		\begin{pmatrix}
 			0		& 	\ee^{-\ii \phi} (a_r - \ii  a_{\phi}) 	\\
 			\ee^{\ii \phi} (a_r + \ii a_{\phi}) 	& 		0
		\end{pmatrix}	\,,
	\end{align}
	where we use notation from Remark~\ref{rem:vec_pot_in_polar}.

	To gain some intuition for the abstract setting, we first work out an example of finding the boundary condition for the case of one hole.

\begin{small}
\begin{example} \label{ex:one_hole}
		Consider the manifold  $M=\Cb\setminus \Omega$ with $\Omega$ being a ball of radius $R$ centred at the origin. We assume there is no magnetic field in the bulk (\ie $B_0 = 0$) and put magnetic field formed by one Aharonov--Bohm flux $B = \Phi \delta$ inside the hole. Later, 
		in Le.~\ref{le:gauge_invariance} we show that without loss of generality we can always consider that the magnetic field inside a hole is of this form and, moreover, that $\Phi \in [-\pi, \pi)$.
		The key simplifying point is that for this field we can choose the gauge so that $a_r = 0$ and $a_{\phi} = \frac{\Phi}{2\pi r}$.
	Note, that in this setting the inward normal one-form on the boundary is simply $\nu = \dd r$.
	 In accordance we will denote $\sigma(\nu) = \sigma^r$.
	The Clifford connection along the radial field $\p_r$ is given by $\nabla_{\p_r} = \p_r - \ii a_r$.
	To find the boundary operator $A_0$ we compare  \eqref{eq:boundary_form}
	to the expression \eqref{eq:D_in_polar} for $D$ in polar coordinates near the boundary 
	$\p \Omega$ and conclude 
	\begin{equation*}
	 	\sigma^r =  \begin{pmatrix}
	 					0				& 	\ee^{-\ii \phi}	\\
	 					\ee^{\ii \phi} 	& 	0
					\end{pmatrix}
				= (\sigma^r)^{\ast} = (\sigma^r)^{-1} \,.
	\end{equation*}
	With \eqref{eq:boundary_form} this further yields
	$A_0=  	 \sigma^3 \left( \frac{ \ii\p_{\phi}}{R} + a_{\phi}\right)$, 
	 		where $\sigma^3 = \mathrm{diag} (1,-1)$ 
	 		is the third Pauli matrix.
	Finally, using that the principal curvature of the circular boundary $\p \Omega$ is $\frac{1}{R}$ we obtain by Def.~\ref{def:canonicalBO} the canonical boundary operator on $\p \Omega$
\begin{align*}
		A = \sigma^3 R^{-1}\left(\ii \p_{\phi} + \Phi/2\pi \right) - (2R)^{-1} \,.
	\end{align*}
	A simple analysis of the eigenvalueproblem 
	$Au = \lambda u$ then reveals the eigenfunctions
	\begin{align*}
		\begin{pmatrix}
		 	\ee^{\ii k \phi} \\
		 	0
		\end{pmatrix}
		\quad \text{with eigenvalue } 
		\lambda =R^{-1} \left(-k - \frac{\Phi}{2\pi}-\frac{1}{2} \right) \quad \text{ and } \quad
		\begin{pmatrix}
			0	\\
		 	\ee^{\ii k \phi} 
		\end{pmatrix}
		\quad \text{with eigenvalue } 
		\lambda =R^{-1} \left( k - \frac{\Phi}{2\pi}-\frac{1}{2} \right ) \,.
	\end{align*}
	
	The APS boundary condition is therefore given by the closure of the set
	\begin{align} \label{eq:BC_one_hole}
	BC_{APS}\mid_{\p\Omega}
	=
	\overline{
		\mathrm{span} 
			 	\Bigg\{ 
			 	\left[
	 	 			\begin{pmatrix}
 	 			 	\ee^{\ii k \phi}	\\
	 	 			 	0
	 	 			\end{pmatrix}
	 	 		 \right]_{k > \frac{\Phi}{2\pi} -\frac{1}{2}}, 
	 	 	 \left[
	 	 			\begin{pmatrix}
	 	 			 	0 	\\
	 	 			 	\ee^{\ii k \phi}
	 	 			\end{pmatrix}
		 	 	 \right]_{k \leq \frac{\Phi}{2\pi} +\frac{1}{2}}
		 	 	 \Bigg\}
		 }
	\end{align}
	in the $\check{H}$ norm \eqref{eq:H_check}.
	For completion we write down the domain of the operator 
	$D_a^{\APS}$ 
	in this setting,	
	acting as \eqref{eq:D_in_polar} on

	\begin{align*}
		\dom(D_a^{\APS}) = \{\psi \in \dom(D_a^{\max})\mid
					\gamma_0 \psi \in BC_{APS}\mid_{\p\Omega} \} \,.
	\end{align*}

	Notice that the sign of the eigenvalues of the boundary operator is independent of the radius of the hole and the boundary condition is invariant under the scaling. However, it does not directly make sense to consider $R \rightarrow 0 $ which might otherwise be interesting to investigate in view of the many extensions of the Dirac operator with Aharonov--Bohm field (\cf \cite{Per06, BCF23}).	
\end{example}
\end{small}
	
	Now, let us fix an index $j\leq N$ and work out the boundary 
	condition on the boundary of a $j$-th hole in the general setting. 
	We choose the polar coordinates $(r_j, \phi_j)$ around the centre of 
	the hole $\Omega_j$. Then noting that in our case 
	$\nabla_{\nu^{\sharp}} = \p_{r_j} - \ii a_{r_j}$ we obtain
	by~\eqref{eq:boundary_form} 
	$A_0 \mid _{\p \Omega_j} = \sigma^3 \left(\frac{\ii \p _{\phi_j}}{R_j}+ a_{\phi_j}\right)$
	and consequently,
	Def.~\ref{def:canonicalBO} yields
	\begin{align} \label{eq:can_bd_op_j}
 		A \mid _{\p\Omega_j}= \sigma^3\left(\frac{\ii \p _{\phi_j}}{R_j}+ a_{\phi_j} \right) -\frac{1}{2 R_j} \,.
	\end{align}
	Solving the eigenvalue problem for  $A\mid _{\p\Omega_j}$ 
	on the circle $\p \Omega_j$
	we find 
	\begin{align} \label{eq:eigenproblem_bdry_operator}
	 	A\mid _{\p \Omega_j} 
	 	\begin{pmatrix}
	 	 	\psi^j_{\ell} \\
	 	 	0
	 	\end{pmatrix}
	 	=
	 	\lambda^j_{\ell}
	 	\begin{pmatrix}
	 	 	\psi^j_{\ell} \\
	 	 	0
	 	\end{pmatrix}
	 	\quad \text{and }
	 	A\mid _{\p\Omega_j} 
	 	\begin{pmatrix}
	 	 	0\\
	 	 	\psi^j_{\ell}
	 	\end{pmatrix}
	 	=
	 	-\lambda^j_{\ell-1}
	 	\begin{pmatrix}
	 	 	0\\
	 	 	\psi^j_{\ell}
	 	\end{pmatrix}
	 	\quad \text{ with }
	 	\begin{cases}
	 	 	R_j\lambda^j_{\ell} = \Phi_j/2\pi -1/2 - \ell 	\\
			\psi^j_{\ell}  = \ee^{\ii \phi_j \ell}
							\ee^{\ii \int _{\gamma_j} \vec{a}\cdot \dd \vec{s} - \ii \frac{\Phi_j}{2\pi} \phi_j }
	 	\end{cases} \,,
	\end{align}
	where $\ell$ is an integer and the path $\gamma_j \subset \p \Omega_j$  connects the points $(R_j, 0)$
	and $(R_j, \phi_j) \in \p\Omega_j$.
	This further
	leads to the APS boundary condition~\eqref{eq:APS_general} on our chosen 
	component of the boundary
	\begin{align} \label{eq:BC_hole_j} 
	 	BC _{APS}\mid_{\p \Omega_j}	
	 		= 
	 		\overline{\mathrm{span} 
			 	\Bigg\{ 
			 	\left[
	 	 			\begin{pmatrix}
 	 			 	\psi_{\ell}^j 	\\
	 	 			 	0
	 	 			\end{pmatrix}
	 	 		 \right]_{\ell> \frac{\Phi_j}{2\pi} -\frac{1}{2}}, 
	 	 	 \left[
	 	 			\begin{pmatrix}
	 	 			 	0 	\\
	 	 			 	\psi_{\ell}^j
	 	 			\end{pmatrix}
		 	 	 \right]_{\ell \leq \frac{\Phi_j}{2\pi} +\frac{1}{2}}
		 	 	 \Bigg\} 
		 	 	 } \,,
	\end{align}
	where the closure is in the norm of $\check{H}(A|_{\p \Omega_j})$ 
	(see~\eqref{eq:H_check}).
	Denoting $(r, \phi)$ the polar coordinates at the origin
	we notice that the inner normal vector on the boundary $\p \Omega_{out}$ corresponds to $-\p_r$.
	Taking this into account means that formally 
	$A \mid _{\p \Omega _{out}}$ looks like~\eqref{eq:can_bd_op_j} with a minus sign,
	$A \mid _{\p \Omega _{out}} = -\sigma^3 \left(\frac{\ii \p _{\phi}}{R _{out}}+ a _{\phi}\right) +\frac{1}{2 R_{out}}$.
	Therefore we infer the solution of the eigenvalue problem 
	immediately from~\eqref{eq:eigenproblem_bdry_operator}
	\begin{align} \label{eq:eigenproblem_bdry_operator_out}
	 	A\mid _{\p \Omega_{out}} 
	 	\begin{pmatrix}
	 	 	\psi_{\ell} \\
	 	 	0
	 	\end{pmatrix}
	 	=
	 	-\lambda_{\ell}
	 	\begin{pmatrix}
	 	 	\psi_{\ell} \\
	 	 	0
	 	\end{pmatrix}
	 	\text{ and }
	 	A\mid _{\p\Omega_{out}} 
	 	\begin{pmatrix}
	 	 	0\\
	 	 	\psi_{\ell}
	 	\end{pmatrix}
	 	=
	 	\lambda_{\ell-1}
	 	\begin{pmatrix}
	 	 	0\\
	 	 	\psi_{\ell}
	 	\end{pmatrix}
	 	\quad \text{ with }
	 	\begin{cases}
	 	 	R_{out}\lambda_{\ell} = \Phi/2\pi -1/2 - \ell 	\\
			\psi_{\ell}  = \ee^{\ii \phi \ell}
							\ee^{\ii \int _{\gamma_{out}} \vec{a}\cdot \dd \vec{s} - \ii \frac{\Phi}{2\pi} \phi }
	 	\end{cases} \,,
	\end{align}
		where $\gamma _{out}\subset \p \Omega_{out}$ connects the points $(R _{out}, 0)$ and $(R _{out}, \phi)$.
	The APS boundary condition on the outer component of the boundary 
	thus reads
	\begin{align}
	 \label{eq:BC_hole_N+1}
		BC _{APS}\mid _{\p \Omega _{out}} 
	 	 	 = \overline{\mathrm{span} 
		 	\Bigg\{ \left[
	 	 			\begin{pmatrix}
	 	 			 	\psi_{\ell}\\
	 	 			 	0
	 	 			\end{pmatrix}
	 	 		 \right]_{\ell < \frac{\Phi}{2\pi} -\frac{1}{2}}, 
	 	 	 \left[
	 	 			\begin{pmatrix}
	 	 			 	0 	\\
	 	 			 	\psi_{\ell}
	 	 			\end{pmatrix}	
		 	 	 \right]_{\ell \geq \frac{\Phi}{2\pi} +\frac{1}{2}}	
	 	 	 \Bigg\}  }	\,,
	\end{align}	
	where the closure is in the norm of $\check{H}(A|_{\p \Omega_{out}})$ 
	(see~\eqref{eq:H_check}).

The canonical APS boundary condition is gauge invariant in the sense of the following lemma.
\begin{lemma} \label{le:gauge_invariance}
	Let $D^{\APS}_a$ and $D_{\widetilde{a}}^{\APS}$ be two Dirac operators
	with the APS boundary condition
	 on $M$ corresponding to magnetic fields with fluxes
	$\Phi$ and $\widetilde{\Phi}$ respectively, such that
	\begin{align*}
		\Phi 				=\sum _{j\leq N} \Phi_j + \Phi_0	 \quad 
		\text{ and } \quad
		\widetilde{\Phi}	= \sum _{j\leq N}  \widetilde{\Phi}_j + \Phi_0	\,,
	\end{align*}
	 where
	$\Phi_j$ and $\widetilde{\Phi}_j$ are the fluxes through the hole $\Omega_j$, $j\leq N$
	of $a$ and $\widetilde{a}$ respectively,
	and $\Phi_0$ is the flux of a smooth magnetic field 
	supported inside the interior of $M$.
	If for all $j\leq N$
	\begin{align*}
 		\widetilde{\Phi}_j &= \Phi_j+ m_j 2\pi\,, 
	\end{align*}
	for some $m_j \in \Zb$,
	then $D_a^{\APS}$ and $D_{\widetilde{a}}^{\APS}$ are unitarily equivalent
	\begin{align*}
 		\mathscr{U}^{\ast} D_a^{\APS} \mathscr{U} 
 			= D_{\widetilde{a}}^{\APS} 	\,,
	\end{align*}
	with the unitary operator
 	\begin{align*}
 		\mathscr{U} &: L^2(M, \Cb^2)  \rightarrow L^2(M, \Cb^2) \,,	\\
 		\mathscr{U} &: u \mapsto \exp \bigg[\ii \int_{\gamma} (\vec{a} -\widetilde{\vec{a}}) \dd \vec{s} \bigg] u \,,
	\end{align*}
where $\gamma$ connects a fixed point $z_0 \in M$ and the point $z \in M$.
\end{lemma}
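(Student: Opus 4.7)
The plan is to verify three things in turn: that $\mathscr{U}$ is a well-defined unitary operator on $L^2(M,\Cb^2)$, that it intertwines the two operators on the maximal domains, and that it sends the APS domain of $D_{\widetilde{a}}$ onto the APS domain of $D_a$. I expect step~3 to be the real work; the first two steps are essentially formal.

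First I would check single-valuedness of the phase $f(z) \coloneqq \int_\gamma (\vec{a}-\widetilde{\vec{a}})\cdot \dd\vec{s}$. Because the smooth bulk magnetic field $B_0$ is the same for $a$ and $\widetilde{a}$, the one-form $\alpha - \widetilde{\alpha}$ is closed on $M^{\circ}$, so by Stokes' theorem its integral over any closed loop in $M$ equals the sum of $\Phi_j-\widetilde{\Phi}_j = -2\pi m_j$ over the holes encircled. This is an integer multiple of $2\pi$, so $\ee^{\ii f}$ is independent of the path and smooth on $\overline{M}$. Multiplication by $\ee^{\ii f}$ is then obviously unitary on $L^2(M,\Cb^2)$. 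The intertwining $\mathscr{U}^{\ast} D_a \mathscr{U} = D_{\widetilde{a}}$ on $\dom(D^{max})$ follows from a direct local computation: conjugation by $\ee^{\ii f}$ shifts $-\ii\nabla$ by $-\p f = -(\vec{a}-\widetilde{\vec{a}})$, which exactly converts the minimal coupling to $\vec{a}$ into minimal coupling to $\widetilde{\vec{a}}$.

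The main step is to check that $\mathscr{U}$ preserves the APS boundary condition on every component of $\p M$. I would work one circle $\p \Omega_j$ at a time, using the explicit basis in~\eqref{eq:eigenproblem_bdry_operator}. Starting from a basis vector $\widetilde{\psi}^j_{\ell}$ built with $\widetilde{\vec{a}}$ and $\widetilde{\Phi}_j$, I compute $\mathscr{U}\widetilde{\psi}^j_{\ell}$ on $\p\Omega_j$: the path contribution along $\gamma_j$ converts $\ee^{\ii\int_{\gamma_j}\widetilde{\vec{a}}\cdot \dd\vec{s}}$ into $\ee^{\ii\int_{\gamma_j}\vec{a}\cdot \dd\vec{s}}$ up to a constant phase $\ee^{\ii f(0)}$, while the term $-\ii\widetilde{\Phi}_j\phi_j/(2\pi)$ combines with the extra $\ee^{\ii\ell\phi_j}$ to give $\ee^{\ii(\ell-m_j)\phi_j} \ee^{-\ii\Phi_j\phi_j/(2\pi)}$, using $\widetilde{\Phi}_j = \Phi_j + 2\pi m_j$. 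Thus $\mathscr{U}\widetilde{\psi}^j_{\ell} = \ee^{\ii f(0)}\psi^j_{\ell-m_j}$, i.e.\ the basis vectors are permuted (up to a global phase) by the index shift $\ell \mapsto \ell - m_j$. A direct substitution in the inequalities defining \eqref{eq:BC_hole_j} shows that this index shift turns the selection rule $\ell > \widetilde{\Phi}_j/(2\pi) - 1/2$ (resp.\ $\ell \leq \widetilde{\Phi}_j/(2\pi) + 1/2$) for the $\widetilde{a}$-problem into the correct rule for the $a$-problem, and similarly maps the chosen element of the kernel of $A|_{\p\Omega_j}$ to the corresponding one. In the bounded case, the outer boundary is treated the same way with $\Phi$ in place of $\Phi_j$ and with the shift $m = \sum_j m_j$.

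Because the index shift together with the substitution $\widetilde{\Phi}_j = \Phi_j + 2\pi m_j$ also preserves the eigenvalues of the canonical boundary operator, $\mathscr{U}$ extends continuously as an isomorphism between the two $\check{H}(A)$-spaces and identifies $BC_{APS}(\widetilde{a})$ with $BC_{APS}(a)$. Combined with steps~1--2 and the compatibility of $\mathscr{U}$ with the trace map $\gamma_0$, this establishes $\mathscr{U}\,\dom(D_{\widetilde{a}}) = \dom(D_a)$ and hence the unitary equivalence. The main obstacle, as anticipated, is simply keeping the bookkeeping of the index shift $\ell \mapsto \ell - m_j$ consistent with both the basis transformation and the inequalities characterising the APS subspace; once this is verified everything else is formal.
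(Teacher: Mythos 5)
Your proposal is correct, and the first two steps (single-valuedness of the phase via Stokes and the intertwining on the maximal domain) coincide with the paper's argument. For the key third step, however, you take a genuinely different route. The paper proves the abstract operator identity $A(\widetilde{a}) = \mathscr{U}^{\ast} A(a)\, \mathscr{U}$ for the canonical boundary operators directly: it writes $\alpha = a_{\rho}\nu + a_s\xi$ near the boundary, observes that on $\p\Omega_j$ one has $\mathscr{U} = K\,\ee^{\ii\int_0^s (a_s - \widetilde{a}_s)\dd s}$, and computes the commutator $[A(a), \mathscr{U}] = \sigma(\nu)\sigma(\xi)\,\p_s(\mathscr{U})$ by the fundamental theorem of calculus; the unitary equivalence of the boundary operators then immediately identifies their spectral subspaces and hence the two APS conditions. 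You instead diagonalise explicitly, showing $\mathscr{U}\widetilde{\psi}^j_{\ell} = \ee^{\ii f(z_{0j})}\psi^j_{\ell - m_j}$ and checking that the index shift $\ell \mapsto \ell - m_j$ preserves both the eigenvalues (since $R_j\widetilde{\lambda}^j_{\ell} = \widetilde{\Phi}_j/2\pi - 1/2 - \ell = R_j\lambda^j_{\ell - m_j}$) and the selection inequalities in \eqref{eq:BC_hole_j} and \eqref{eq:BC_hole_N+1}. Both arguments are valid. The paper's commutator computation is coordinate-free and, as remarked after the lemma, extends verbatim to smooth non-circular boundary components, whereas your eigenbasis bookkeeping is tied to the explicit circular geometry; on the other hand, your version makes completely transparent \emph{how} the basis vectors and the half-kernel choice are permuted, which is information the later proofs (e.g.\ the normalisation $\Phi'_j \in [-\pi,\pi)$ and Remark~\ref{rem:other_N(A)}) implicitly rely on. One small point of care: the loop integral equals $\sum_j n_j(\Phi_j - \widetilde{\Phi}_j)$ with $n_j$ the winding numbers, not just a sum over encircled holes, but this is still in $2\pi\Zb$, so nothing breaks.
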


\begin{proof}
 First we notice that $\mathscr{U}$ is independent of a particular choice
  of the path $\gamma$ in its definition
  as for an arbitrary loop $\gamma \subset M$ it is an identity operator
  $\exp \left[\ii \oint_{\gamma} (\vec{a} -\widetilde{\vec{a}}) \dd \vec{s} \right] = 1$.
This we can see immediately from the equalities
\begin{align*}
	\oint_{\gamma} (\vec{a} - \widetilde{\vec{a}}) \dd \vec{s}	
		=\int_{int \, \gamma} B-\widetilde{B}
		= - 2\pi \sum _{\{ j\mid \Omega_j \subset int \, \gamma \}}  m_j	 \,,
\end{align*}  
where in the first equality we used the relation 
$\mathrm{curl}\, \vec{a} = (0,0,B)$
and Stokes theorem.
Notice also that choosing another point $z_1\in M$ as a starting point of $\gamma$ instead of $z_0$
amounts to multiplication by a constant 
$K =  \ee^{- \ii \int_{z_1}^{z_0} (\vec{a} - \widetilde{\vec{a}}) \dd \vec{s} }$.
Since $\overline{K} = K^{-1}$ 
the map $K^{-1}\mathscr{U}$ is also unitary.
An explicit computation of the partial derivatives
\begin{align*}
	\p_z \mathscr{U}
		= \frac{\ii}{2} \overline{(a - \widetilde{a})}(z) \mathscr{U} \,, \quad \text{ and }	
	\quad
	\p_{\bar{z}} \mathscr{U}
		= \frac{\ii}{2} (a - \widetilde{a})(z) \mathscr{U} 
\end{align*}
shows, that the action of the unitarily transformed Dirac operator is indeed the one with the potential $\widetilde{a}$, as
\begin{align*}
	\mathscr{U}^{\ast} D_a \mathscr{U}
		 = D_a - 2\ii
		 	\mathscr{U}^{\ast} 
		 	\begin{pmatrix}
		 	 	0				& \p_z 	\\
		 	 	\p_{\bar{z}}	&	0
		 	\end{pmatrix}
		 	\mathscr{U}
		 = 
		 D_a - \ii
		 	\begin{pmatrix}
		 	 	0					& \ii(\overline{a - \widetilde{a}}) 	\\
		 	 	\ii(a - \widetilde{a}) 	&	0
		 	\end{pmatrix} \,.
\end{align*}
Moreover, taking our observation that $\overline{K} = K^{-1}$ into account we see
that the relation $\mathscr{U}^{\ast} D_a \mathscr{U} = D_{\widetilde{a}}$ 
holds independently of the starting point of $\gamma$.

Finally  we need to check that the boundary condition is preserved by $\mathscr{U}$.
 To do so, we show that the boundary operators are unitarily equivalent 
\begin{align} \label{eq:equiv_rel_BO}
 	A (\widetilde{a}) = \mathscr{U} ^{\ast} A(a) \mathscr{U}\,.
\end{align}
Here $A (a)$ denotes the canonical boundary operator adapted to $D_a$
and $\gamma$ in the definition of $\mathscr{U}$ ends 
at a point on the boundary $z \in \p \Omega$.
From this we see that the restriction to the boundary of a spinor $u\big| _{\p \Omega_j}$ is in the negative spectral subspace of $A(\widetilde{a})$ 
if and only if $(\mathscr{U}u)\big| _{\p \Omega_j}$ is in the negative spectral subspace of $A(a)$.
To see that~\eqref{eq:equiv_rel_BO} holds,
we write $\alpha = a_{\rho} \nu + a_s \xi$ 
with some smooth functions $a_{\rho}, a_s$ on a neighbourhood of the boundary and 
consider $z = \gamma(s) \in \p \Omega_j$.
By path independence we have
 $\mathscr{U} = K \ee ^{\ii \int_0^s (a_s -\widetilde{a}_s) \dd s}$
 with $s$ the arc parameter of $\gamma \subset \p \Omega_j$.
In this notation the boundary operator from Def.~\ref{def:canonicalBO} reads
\begin{align*}
 	A(a)
 		= \sigma(\nu) \sigma(\xi) 
 		\left(\p _s -\ii a_{s} \right) -\kappa/2 \,,
\end{align*}
and we can easily compute its commutator with $\mathscr{U}$ 
using the fundamental theorem of calculus 
\begin{align*}
 	[A(a), \mathscr{U}] 
 		= \sigma(\nu) \sigma(\xi) \p _s (\mathscr{U})	
 		=  \sigma(\nu) \sigma(\xi) \mathscr{U}(z) (\ii a _{s}(z) - \ii \widetilde{a} _{s}(z))  \,,
\end{align*}
which yields
$A(a) \mathscr{U} u
		= \mathscr{U} A(a)  u  + [A(a), \mathscr{U}] u 	
 		= \mathscr{U} A(\widetilde{a})  u $.
\end{proof}
Note that Lem.~\ref{le:gauge_invariance} holds also in case of a smooth non-circular boundary.

Our results are stated in the Aharonov--Casher gauge. It is important to stress out, that this is just for the convenience in the computations. Of course, the spectral results are independent of this gauge choice as established by Lem.~\ref{le:gauge_invariance}.

\section{Main results} \label{chap:main_thms}
Using the setup introduced in Sec.~\ref{sec:problem_setting}
we are in a position to state the main theorems of this paper.
The first result concerns an Aharonov--Casher type
theorem for Dirac operators on $\Rb^2$ with circular holes. 
More precisely we will prove
\begin{theorem} \label{thm:unbdd}
	Let $M  = \Cb \setminus \cup_{k\leq N} \Omega_k$ and let
	 $D_a^{\APS}$ be the Dirac operator 
	with the magnetic field \eqref{eq:tot_mag_field}
	in the gauge~\eqref{eq:h_relation}, \eqref{eq:potential}.
	If $\Phi\neq 0$ then there are
	\begin{align*}
		\floor{\frac{|\Phi|}{2\pi}} 
	\end{align*}
	zero modes of the operator  $D_a^{\APS}$ 
	with the APS boundary conditions~\eqref{eq:BC_hole_j} 
	on the inner components of the boundary. 
	These states have spin up (\ie are eigenvectors of $\sigma_3=\mathrm{diag }(1,-1)$ with eigenvalue $+1$) if $\Phi>0$ and spin down (\ie are eigenvectors of $\sigma_3$ with eigenvalue $-1$) if $\Phi<0$.
	If $\Phi = 0$ then there are no zero modes.	
\end{theorem}
\begin{corollary} \label{cor:zero_in_ess_spec}
	Under the assumptions of Thm.~\ref{thm:unbdd}, zero is an eigenvalue embedded in the essential spectrum of 
	the operator $D_a^{\APS}$.
\end{corollary}
\begin{proof}
	Due to the unboundedness of $M$ and the existence of the zero modes 
	we can construct a Weyl sequence as follows. Let
	$\mathbb{B}$ denote a ball centred at the origin such that 
	$\cup_{k\leq N} \Omega_k \subset \mathbb{B}$. 
	Consider a compactly supported radial smooth function $\phi$ on 
	$\Cb \setminus \mathbb{B}$.
	If $\Phi \neq 0$, choose a zero mode $u\in \dom(D_a^{\APS})$.
	Denote $\phi_n(z)\coloneqq \frac{1}{n}\phi (\frac{z}{n})$.
	Then $u_n(z)\coloneqq u(z)\phi_n(z) \subset \dom(D_a^{\APS})$ tends weekly to zero, while 
	\begin{align*}
		D_au_n = (D_a u)\phi_n + u D_0 \phi_n 
			= u D_0 \phi_n \,.
	\end{align*}
	Using that in polar coordinates $D_0$ acts as 
	$D_0 
		= -\ii 
		\begin{pmatrix}
	 		0 					& \ee^{-\ii \theta} \\
	 		\ee^{\ii \theta} 	&  0 
	\end{pmatrix} 
	\p_r $ 
	on radial functions, 
	one concludes $\|D_a u_n\| \rightarrow 0$ as $n\rightarrow \infty$.
	
	If $\Phi = 0$, 
	there are no zero modes.
	In view of Rem.~\ref{rem:a_is_bounded} the vector potential corresponding to the smooth compactly supported magnetic field in the bulk is bounded. Moreover, 
	thanks to Le.~\ref{le:gauge_invariance} we can, without loss of generality, consider that the magnetic field inside the holes is formed by the Aharonov Bohm fluxes. 
	The vector potential of such field is bounded outside of $\mathbb{B}$.
	Consequently the proof works with sequence 
	$u_n = \phi_n$.
\end{proof}

For the case when the underlying manifold is a disc with holes, the number of zero modes differs. In particular for the total flux in the range
$(\pi, 2\pi] \,\mathrm{ mod }\, 2\pi k$, for $0\leq k \in\Zb$ or
$[-2\pi, -\pi]\, \mathrm{ mod }\, 2\pi k'$, for $0\geq k' \in\Zb$,
 we obtain 
an extra zero mode as opposed to the unbounded case, see Remark~\ref{rem:zero_modes_result_form}.
\begin{theorem} \label{thm:bdd}
	Let $M  = \Omega _{out} \setminus \cup_{k\leq N} \Omega_k$ 
	and let $D_a^{\APS}$ be the Dirac operator with the magnetic field \eqref{eq:tot_mag_field}
	in the gauge~\eqref{eq:h_relation}, \eqref{eq:potential}.
	 Then there are
	\begin{align*}
		\left| \floor{\frac{\Phi}{2\pi}+ \frac{1}{2}}\right|
	\end{align*}
	zero modes of the operator  $D^{\APS}_a$ with
	 the APS boundary conditions~\eqref{eq:BC_hole_j}
	on the inner components and~\eqref{eq:BC_hole_N+1} on the outer component of the boundary.
	 In particular, there are no zero modes in the case $\Phi \in (-\pi, \pi]$.
	If $\Phi > 0$ then all the zero modes have spin up. 
	If $\Phi < 0$ then they have spin down.
\end{theorem}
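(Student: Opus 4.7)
The strategy mirrors the proof of Thm~\ref{thm:unbdd} for the plane with holes, with the $L^2$ condition at infinity replaced by the APS spectral condition on $\p\Omega_{out}$. Assume $\Phi>0$; the case $\Phi<0$ is symmetric with spin-down via the substitution $h\to-h$, and the range $\Phi\in(-\pi,\pi]$ will emerge from the counting below. First I invoke Lemma~\ref{le:gauge_invariance} to normalise the flux through each hole to $\Phi'_j\in[-\pi,\pi)$, represented by $\Phi'_j\delta_{w_j}$. The Aharonov--Casher potential then reads
\[
 h(z)=h_0(z)-\sum_{j\leq N}\tfrac{\Phi'_j}{2\pi}\log|z-w_j|,\qquad h_0\in C^\infty(\Cb),
\]
and $\p_zh=-\ii\bar a/2$ translates the spin-up Dirac equation into $\p_{\bar z}(e^{-h}f_+)=0$; hence any spin-up zero mode has the form $f_+=e^hg$ with $g$ holomorphic on $M^\circ$, and analogously $f_-=e^{-h}\tilde g$ with $\tilde g$ antiholomorphic. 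Throughout I write $\chi_j$ and $\chi$ for the smooth $2\pi$-periodic phases appearing in $\psi^j_\ell=e^{\ii\ell\phi_j}\chi_j(\phi_j)$ and $\psi_\ell=e^{\ii\ell\phi}\chi(\phi)$ respectively.

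For each inner boundary $\p\Omega_j$ the regular part $h_{\mathrm{reg},j}:=h+(\Phi'_j/2\pi)\log|z-w_j|$ is harmonic on $\overline{\Omega_j}$ (all other sources lie outside), so $2\p_zh_{\mathrm{reg},j}$ is holomorphic with trivial monodromy around $\p\Omega_j$, and exponentiating its primitive produces a nowhere-vanishing holomorphic $G_j$ on $\overline{\Omega_j}$ satisfying $e^h/\chi_j=C_jG_j(z)$ on $\p\Omega_j$. Because $\Phi'_j/(2\pi)\in[-1/2,1/2)$, the spin-up APS condition~\eqref{eq:BC_hole_j} reads $\ell\geq 0$; expanding $f_+=e^hg$ in $\{\psi^j_\ell\}$ translates this into the vanishing of negative Laurent terms of $G_jg$ around $w_j$, and since $G_j$ is nowhere zero on $\overline{\Omega_j}$ this forces $g$ to extend holomorphically across $\overline{\Omega_j}$. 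Doing this at every hole yields $g\in\mathcal O(\Omega_{out})$, i.e.\ $g(z)=\sum_{n\geq 0}c_nz^n$.

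The main step is the outer boundary. In the annulus $R_{out}-\epsilon<|z|<R_{out}$ (where $B=0$), $2\p_zh$ is holomorphic with a primitive $\mathcal H$, and Stokes gives
\[
 \oint_{\p\Omega_{out}}2\p_zh\,\dd z=4\ii\int_{\Omega_{out}}\p_{\bar z}\p_zh\,\dd A=-\ii\Phi,
\]
so the product $G(z):=z^{\Phi/(2\pi)}e^{\mathcal H(z)}$ is single-valued in the annulus. The pointwise identity $\p_\phi(h-\ii I)=2\ii z\p_zh$ on $\p\Omega_{out}$, with $I(\phi)=\int_0^\phi R_{out}a_\phi\dd\phi'$, integrates to $e^h/\chi=C\cdot G|_{\p\Omega_{out}}$; hence the $\ell$-th coefficient of $f_+$ in the $\{\psi_\ell\}$-basis is proportional to the Laurent coefficient $(Gg)_\ell$, and \eqref{eq:BC_hole_N+1} demands $(Gg)_\ell=0$ for $\ell\geq\Phi/(2\pi)-1/2$. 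In the radially symmetric reduction $G$ is a nonzero constant, which yields $c_\ell=0$ for $\ell\geq\lceil\Phi/(2\pi)-1/2\rceil=\lfloor\Phi/(2\pi)+1/2\rfloor$, so $g$ is a polynomial of degree $\leq\lfloor\Phi/(2\pi)+1/2\rfloor-1$, i.e.\ an $\lfloor\Phi/(2\pi)+1/2\rfloor$-dimensional space of spin-up modes. The general non-radial case reduces to the radial one by continuously deforming the magnetic field while keeping each flux $\Phi'_j$ and $\Phi_0$ fixed: the chiral APS Dirac operator is Fredholm with a deformation-invariant index, and since only one chirality contributes (next paragraph) the kernel dimension is preserved along the deformation.

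For $\Phi>0$, the vanishing of spin-down modes follows from the parallel analysis for $f_-=e^{-h}\tilde g$: the inner APS conditions force $\tilde g$ to extend antiholomorphically to $\Omega_{out}$, and the outer condition $\ell\geq\Phi/(2\pi)+1/2$ demands its antiholomorphic Taylor coefficients vanish for $n\geq\Phi/(2\pi)+1/2$, which for $\Phi>0$ covers all $n\geq 0$, so $\tilde g\equiv 0$. The $L^2$ condition on $M$ is automatic, since $e^{\pm h}$ is bounded on $\overline M$ away from the $w_j$ and behaves like $|z-w_j|^{\mp\Phi'_j/(2\pi)}$ near each hole with exponent of absolute value less than $1/2$. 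I expect the main technical hurdle to be step three: extracting the precise polynomial-degree cut-off from the outer APS condition in the general non-radial case. The cleanest route is the Fredholm/deformation argument above, supplemented by the index computation of Sec.~\ref{sec:index}; alternatively one could perform a direct Toeplitz-operator analysis of multiplication by $G$ on the Hardy space of $\Omega_{out}$.
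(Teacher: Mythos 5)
Your gauge normalisation, your treatment of the inner boundaries, and the overall strategy all match the paper's (Lem.~\ref{le:gauge_invariance}, Prop.~\ref{prop:zero_modes_form}, Prop.~\ref{cor:g_is_analytic}), but the outer-boundary step --- which you yourself flag as the main hurdle --- contains a genuine gap. You only control $G(z)=z^{\Phi/2\pi}\ee^{\mathcal H(z)}$ in a thin annulus adjacent to $\p\Omega_{out}$, where its Laurent expansion can a priori contain powers of both signs, so the condition $(Gg)_\ell=0$ for $\ell\geq\Phi/2\pi-1/2$ does not by itself truncate $g$ at the right degree; you then retreat to the radial case and invoke a deformation-invariance-of-the-index argument that is not carried out (continuity, in the gap topology, of a family of self-adjoint realisations whose domains vary with the field; constancy of the index along the path; and vanishing of one chirality's kernel at \emph{every} intermediate field, which is what converts index invariance into kernel-dimension invariance). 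The paper closes this step directly and elementarily: it extends $F^+=h+G^+$ analytically past $\p\Omega_{out}$ to the whole exterior region $\widetilde{\Omega}=(\Omega_{out})^C\cup\mathcal A^{out}$ and shows $F^+\to const$ as $|z|\to\infty$ (Lem.~\ref{cor:exp}), whence $\ee^{F^+}=\sum_{k\leq0}d_k^+z^k$ with $d_0^+\neq0$ (Cor.~\ref{cor:exp(F)_out}). Multiplying $g=\sum_{n\geq0}a_nz^n$ by a series in non-positive powers with non-vanishing constant term, the coefficient of $\psi_{n_0}$ for the top index $n_0$ with $a_{n_0}\neq0$ is exactly $a_{n_0}d_0^+R_{out}^{n_0}$, and the APS condition then forces $n_0<\Phi/2\pi-1/2$; no symmetry reduction or homotopy is needed. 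To repair your route without this observation you would have to either establish the continuity of the operator family and prove the chirality-vanishing along the entire path, or actually carry out the Toeplitz analysis you allude to; as written, neither is done.

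Two smaller points. First, your spin-down vanishing statement is internally inconsistent: the outer APS condition admits $(0,\psi_\ell)^T$ only for $\ell\geq\Phi/2\pi+1/2$, and since $\bar z^{\,n}$ contributes to $\psi_{-n}$ (up to the non-positive-power correction from $\ee^{F^-}$), the constraint on the surviving coefficients is $n\leq-\Phi/2\pi-1/2$, which for $\Phi>0$ excludes every $n\geq0$ --- not ``coefficients vanish for $n\geq\Phi/(2\pi)+1/2$'' as you wrote, which would instead leave extra spin-down modes and contradict the theorem. Second, identifying the $\psi_\ell$-coefficients of the boundary trace of an infinite series $\ee^{\pm h}\sum_n(\cdot)$ requires justifying the term-by-term passage to the boundary; the paper does this by proving convergence of the partial sums in the graph norm and hence of their traces in $\check H(A|_{\p\Omega_{out}})$ (Lem.~\ref{le:conv_in_max_dom}, Lem.~\ref{le:traces}, Lem.~\ref{le:traces_out}), a step your proposal omits entirely.
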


\begin{remark}\label{rem:zero_modes_result_form}
	\begin{itemize}
	\item
		The particular form of the (non-normalised) 
		zero modes of  the Dirac operator in the 
		Aharonov--Casher gauge with the APS boundary condition
		is also known from the proof. 
		Depending on the sign of the total flux $\Phi$, 
		they are purely spin up or purely spin down
	 	\begin{align*}
	 		\begin{pmatrix}
	 			u^+ 	\\
	 			0
			\end{pmatrix} \,, \quad
			\begin{pmatrix}
	 			0	\\
	 			u^-
			\end{pmatrix} \,,  		
		\end{align*}	
		where 
		\begin{align*}
	 		u^+(z) = \ee ^{h (z)} 
	 			\sum _{0\leq n <\frac{\Phi}{2\pi} - 1} a_n z^n	\,, \quad
	 		u^-(z)  = \ee ^{-h (z)} 
	 			\sum _{0\leq n <-\frac{\Phi}{2\pi} -1} b_n \overline{z}^n \,,
	 		\quad \text{ if } \M = \Cb \,,
		\end{align*}
		and,
		\begin{align}\label{eq:zm_bdd}
	 		u^+(z) = \ee ^{h (z)} 
	 			\sum _{0\leq n < \frac{\Phi}{2\pi} 
	 				- \frac{1}{2}} a'_n z^n	\,, \quad
	 		u^-(z)  = \ee ^{-h (z)} 
	 			\sum _{0\leq n \leq -\frac{\Phi}{2\pi} 
	 				-\frac{1}{2}} b'_n \overline{z}^n \,,
	 		\quad \text{ if } \M =\Omega _{out} \,,
		\end{align}
		with some coefficients $a_n, b_n, a'_n, b'_n \in \Cb$.
	\item
	 	Notice that for certain values of fluxes there 
	 	is an extra zero mode for the bounded region.
		This happens in particular if there exists 
		an integer $k$ such that
	 	\begin{align}\label{eq:extra_zm_fluxes}
	 	 	\frac{\Phi}{2\pi} 
	 	 		= k + \epsilon 
	 	 			\quad
	 	 			\text{ with }
	 	 		\epsilon \in 
	 	 		\begin{cases}	
	 	 			(1/2, 1] 		& \text{ if } k\geq 0 	\\
	 	 			[-1, -1/2] 	& \text{ if } k\leq 0 
	 	 		\end{cases} \,.
	 	\end{align}
	 	If we let the radius of the disc $\Omega_{out}$ 
	 	grow to infinity we can compare the asymptotics of the zero
	 	modes~\eqref{eq:zm_bdd} as $|z|\rightarrow \infty$ 
	 	and easily check that the zero mode with the highest power 
	 	of $z$ in $g^+$ 
	 	(or $\overline{z}$ in $g^-$ in case of spin down zero modes) 
	 	satisfying the APS boundary condition would not 
	 	be square integrable at infinity exactly for the values 
	 	of fluxes in \eqref{eq:extra_zm_fluxes}. 
	 	They exhibit the behaviour $u^{\pm}(z)\sim |z|^{\mp \epsilon_{\pm}}$ as $|z|\rightarrow \infty$ with $\epsilon_+ \in (1/2, 1]$ and $\epsilon_- \in [-1, -1/2]$. Therefore they are decaying at infinity and thus are resonances of the Dirac operator in 
	 	the unbounded case when 
	 	$\M = \Cb$, \ie the operator from Thm.~\ref{thm:unbdd}.
	 	To complete the picture we recall that 
	 	for values $\Phi/2\pi \in (-1/2, 1/2]$ there are 
	 	no zero modes on either plane or a disc with holes.
	\end{itemize}
\end{remark}

The standard argument of  Aharonov and Casher from \cite{AC79} will be very useful for showing Thms.~\ref{thm:bdd} and \ref{thm:unbdd}. We will first use it to prove the following
\begin{prop} 	\label{prop:zero_modes_form}
	Let $D^{\max}_a$ be the maximal extension~\eqref{eq:dom_max} of the
	 Dirac operator on $M$ 
	 with the magnetic field \eqref{eq:tot_mag_field}
	in the gauge~\eqref{eq:h_relation}, \eqref{eq:potential}
	and let $D^{\max}_a u = 0$ for some $u\in \dom(D_a^{\max})$. Then 
	$u$ is of the form $u =(u^+, u^-)$ with
	\begin{align*}
 		u ^{\pm} = \ee ^{\pm h} g ^{\pm} \,, 
	\end{align*}
	where $g^+$ is analytic on $M$ and $g^- $ is anti-analytic on $M$.
	Vice versa, any function of this form such that $u\in \dom(D_a^{\max})$ satisfies $D^{\max}_a u=0$.
\end{prop}
\begin{proof}
	To find solutions of $D^{\max}_a u = 0$, $u=(u^+, u^-) \in L^2(M; \Cb^2)$ 
	we rewrite the problem using the Aharonov--Casher gauge~\eqref{eq:h_relation}
	as
	\begin{align*}
 		 \ee ^{h} \p _{\overline{z}} (\ee ^{-h} u^+) = \left[\p_{\overline{z}} - \frac{\ii a}{2}\right] u^+	= 0  \,,
 		\quad
 		\ee ^{-h} \p _{z} (\ee ^{h} u^-) =\left[\p_z - \frac{\ii \overline{a}}{2} \right]u^{-} 	= 0 \,.
	\end{align*}
	This is satisfied if and only if the function $g^+\coloneqq \ee^{- h} u^{+}$ 
	is analytic and $g^-\coloneqq \ee^{ h} u^{-}$ is anti-analytic on $M$.
\end{proof}
\begin{remark}
	To avoid any confusions, we would like emphasise, that by taking $u\in L^2(M;\Cb^2)$ in the previous proof, we are not claiming that this is the maximal domain $\dom(D^{\max}_a)$. If $u\in L^2(M; \Cb^2)$ such that $D^{\max}_au=0$ it is by definition in the maximal domain.
\end{remark}
\subsection{Proof for unbounded region with holes}	\label{sec:zero_modes_holes}
The main idea of the proofs of Theorems~\ref{thm:bdd} and \ref{thm:unbdd} is to show that
if $u \in \dom (D^{\APS}_a)$ 
and hence, $(u^+, u^-)^T$ satisfies the APS boundary condition \eqref{eq:BC_hole_j},
then the functions $g^+$ and $g^-$ from Prop.~\ref{prop:zero_modes_form} can be extended analytically in $z$ and $\bar{z}$ respectively 
inside the holes of $M$.
In the following example we demonstrate, that for the case of a single hole with a magnetic field inside the hole this extension is a straightforward process.

\begin{small}
\begin{example}
	For one hole in the plane we worked out the APS boundary condition in Ex.~\ref{ex:one_hole}. Here we use the set up and notation from that example. For the functions $g^{\pm}$ from 
	Prop.~\ref{prop:zero_modes_form} we can write the Laurent series
	\begin{align*}
	 	g^{+} = \sum_{n \in \Zb} a_n z^n 
	 	\qquad
	 	\text{ and }
	 	\qquad 
	 	g^{-} = \sum_{n \in \Zb} b_n \bar{z}^n \,, 	
	\end{align*}
	with some complex coefficients $a_n, b_n$.
	Taking into account~\eqref{eq:h_spherical_field}, the formal restriction of $u^{\pm} = \ee^{\pm h} g^{\pm}$ to the boundary $\p \Omega$ then reads
	\begin{align*}
	 	\gamma_0 u^{+} = R^{-\Phi/2\pi}\sum_{n \in \Zb} a_n R^n \ee^{\ii n \phi} 
	 	\qquad
	 	\text{ and }
	 	\qquad
	 	\gamma_0 u^- = R^{\Phi/2\pi} \sum_{n \in \Zb} b_n R^n  \ee^{-\ii n \phi} \,. 	
	\end{align*}
	Here $\gamma_0$ is the trace map as in~\eqref{eq:trace_map}.
	By Le.~\ref{le:gauge_invariance} we can restrict ourselves to 
	$\Phi \in [-\pi, \pi)$. 
	The boundary condition~\eqref{eq:BC_one_hole} then yields 
	that $a_n = 0 =b_n$ for $n< 0$.
	Consequently, $g^+$ has to be analytic and $g^-$ anti-analytic on $\Cb$.

	We leave the argument that the trace $\gamma_0$ is indeed given by the formal restriction to the boundary to the general case, which is worked out in Sec.~\ref{ssec:trace_of_efuns}
\end{example}
\end{small}

While the argument in our example is rather simple, extending $g^{\pm}$
(anti-)analytically inside the holes requires a new approach if we have several holes.
Let us fix an index $j$ 
and denote $\mathcal{A}$ an open annulus co-centred with the hole $\Omega_j$
of inner radius $R_j$ and 
outer radius $R$ such that $\mathcal{A} \cap \supp B = \emptyset$
with $B$ as in~\eqref{eq:tot_mag_field}.
In particular the scalar potential $h(z)$ is bounded on $\mathcal{A}$.

Recall that $g^+$ is analytical 
and $g^-$ is anti-analytical on $M$.
 Thus on $\mathcal{A}$ 
 they have the Laurent series
\begin{align} \label{eq:g_series}
	g^+(z)	= \sum_{n \in \Zb} a_n (z-w_j)^n \,,
	\qquad
	g^-(z)	= \sum_{n \in \Zb} b_n \overline{(z-w_j)}^n \,,
\end{align}
with some $a_n, b_n \in \Cb$.
To find the boundary values of $u ^{\pm}$
and compare them to the boundary condition \eqref{eq:BC_hole_j},
it is convenient to introduce the following functions  
for $z \in \mathcal{A}\cup \overline{\Omega_j}\setminus \{w_j\} $
\begin{align} \label{eq:G_j}
	\begin{split}
	 	G^+_j(z) 
	 		&\coloneqq - \ii \int_{\gamma(z_{0j}, z)} \vec{a} \dd \vec{s} +\int_{\gamma(z_{0j}, z)} \frac{\Phi'_j}{2\pi (z'-w_j)} \dd z' \,,	\\
	 	G^-_j(z) 
	 		&\coloneqq - \ii \int_{\gamma(z_{0j}, z)} \vec{a} \dd \vec{s} -\int_{\gamma(z_{0j}, z)} \frac{\Phi'_j}{2\pi \overline{(z'-w_j)}} \dd \bar{z}' \,,
	\end{split}
\end{align}
where by $\gamma(z_{0j}, z)\subset \mathcal{A}\cup \overline{\Omega_j} \setminus \{w_j\}$ 
we denoted the path of integration with the endpoints 
$z _{0j} = w_j+R_j$ and $z \in \mathcal{A}\cup \overline{\Omega_j}\setminus \{w_j\}$.
We stress that throughout the whole section we are, 
owing to Lem.~\ref{le:gauge_invariance}, 
assuming $B\big|_{\Omega_j} = B_j = \Phi'_j \delta_{w_j}$, 
with the normalised flux $\Phi'_j \in [-\pi, \pi)$.
This, further, allows us to extend the definition of the vector potential $a$
that is given by~\eqref{eq:h_relation} inside the region $\Omega_j \setminus \{w_j\}$.

The definition \eqref{eq:G_j} is motivated by
the fact that the restrictions of $G ^{\pm}_j(z)$ to the 
		boundary $\p \Omega_j$ satisfy
		\begin{align*} 
 			G^{\pm}_j(z) \mid_{z \in \p \Omega_j} 
 				= - \ii \int_{\gamma_j} \vec{a} \dd \vec{s} 
 					+ \ii \frac{\Phi'_j}{2\pi} \phi_j \,,
		\end{align*}
		where $\gamma_j \subset \p \Omega_j$ is the curve connecting 
		the points $z_{0j} = w_j + R_j$ and $z$ counter-clockwise.
The lemma below further shows that $G^{\pm}_j(z)$ are well defined 
on $\mathcal{A}\cup \overline{\Omega_j}\setminus \{w_j\}$.	
\begin{lemma} \label{lemma:path_indep_2}
	$G^{\pm}_j(z)$ are independent of the choice of the path $\gamma (z_{0j}, z)$ contained in 
	$\mathcal{A}\cup \overline{\Omega_j} \setminus \{w_j\}$.
\end{lemma}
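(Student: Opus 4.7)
The plan is to reduce path-independence to the vanishing of the period of each integrand in~\eqref{eq:G_j} around the puncture $w_j$. The domain $\mathcal{A}\cup \overline{\Omega_j}\setminus\{w_j\}$ is topologically a punctured disc, whose fundamental group is generated by a single counterclockwise circle $C_j$ around $w_j$. Consequently, any two admissible paths from $z_{0j}$ to $z$ bound (after concatenation) a loop that is homotopic to $n$ copies of $C_j$ for some $n\in\Zb$, and it suffices to show the period over one traversal of $C_j$ vanishes.

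For $G^+_j$ I would split the computation into its two pieces. Using Lem.~\ref{le:gauge_invariance} to assume $B|_{\Omega_j}=\Phi'_j\delta_{w_j}$ and $B\equiv 0$ on $\mathcal{A}$, Stokes' theorem applied to the vector-potential term yields
\begin{align*}
-\ii \oint_{C_j} \vec{a}\cdot\dd\vec{s} = -\ii\Phi'_j,
\end{align*}
while the residue theorem gives
\begin{align*}
\oint_{C_j} \frac{\Phi'_j \, \dd z'}{2\pi (z'-w_j)} = \ii \Phi'_j,
\end{align*}
so the two contributions cancel. For $G^-_j$ the analogous computation uses $\oint_{C_j} \overline{(z'-w_j)}^{-1}\dd\bar z' = -2\pi\ii$, obtained by conjugating the standard Cauchy integral (which reverses the induced orientation of $\bar z$), and the relative minus sign in the second term of~\eqref{eq:G_j} again produces exact cancellation against $-\ii\oint_{C_j}\vec{a}\cdot\dd\vec{s}$.

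The step that requires the most attention is justifying that the extended vector potential defined by~\eqref{eq:h_relation} really does carry holonomy exactly $\Phi'_j$ around $w_j$, since $h$ is now singular there. This is the only point where one needs to look beyond a direct appeal to Stokes' and Cauchy's theorems: the Green's function representation~\eqref{eq:potential}, applied to the distributional source $\Phi'_j\delta_{w_j}$, produces $h(z) = -\frac{\Phi'_j}{2\pi}\log|z-w_j|$ up to a smooth harmonic contribution from $B_0$, and the circulation of the associated $a$ can then be evaluated directly on $C_j$, giving precisely $\Phi'_j$. Once this is verified, the cancellation displayed above delivers the lemma, since loops with winding number zero trivially contribute nothing by holomorphy of $1/(z'-w_j)$ on the enclosed region and by Stokes' theorem applied to $B\equiv 0$ there.
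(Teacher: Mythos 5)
Your proof is correct and follows essentially the same route as the paper: both arguments reduce path-independence to the vanishing of the integrals over closed loops, identify the circulation of $\vec{a}$ around the puncture with the enclosed flux $\Phi'_j$ via Stokes, and cancel it against the winding-number (residue) integral of $(z'-w_j)^{-1}$, with the conjugate computation handling $G^-_j$. The paper simply states this for an arbitrary loop of winding number $\ell$ rather than first reducing to the generator of the fundamental group, which is an immaterial difference.
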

\begin{proof}
	We show the equivalent statement that $G^{\pm}_j(z) = 0$ for any loop 
	$\gamma = \gamma(z_{0j}, z= z_{0j}) \subset 
	\mathcal{A}\cup \overline{\Omega_j} \setminus \{w_j\}$. 
	By definition of the flux  the first summands on the right-hand sides of \eqref{eq:G_j} read
	\begin{align*}
 		-\ii\int_{\gamma} \vec{a} \dd \vec{s} 
 			=  -\ii\ell \Phi'_j \,,
	\end{align*}
	where $\ell \in \Zb$ is the winding number of the loop $\gamma$ around the point $w_j$.
	The result then follows from the definition of the winding number
	\begin{align*}
		\ell 
			\coloneqq \frac{1}{2\pi \ii}\int_{\gamma} \frac{1}{z'-w_j} \dd z'
			= \frac{-1}{2\pi \ii}\int_{\gamma} \frac{1}{\overline{z'-w_j}} \dd \bar{z}' \,.
	\end{align*}
\end{proof}
Now we show that $G_j^\pm$ satisfy another important property.
\begin{prop}\label{cor:exp(F)}
Define
		\begin{align} \label{eq:F_pm}
			F^{\pm}_j(z) \coloneqq \pm h(z) + G^{\pm}_j(z) \,,
		\end{align}
		with $h$ given by \eqref{eq:potential}.
	The functions $F^+_j(z)$ and $F^-_j(z)$ are 
	analytic on $\mathcal{A}\cup \overline{\Omega_j}$ in $z$ and $\bar{z}$ respectively.
	And, in particular, there are the following series of the exponentials 
	\begin{align*}
 		\ee ^{F_j ^{+}}  =  \sum _{k\geq 0} c^+_k (z-w_j)^k	\,, \quad
 		\ee ^{F_j ^{-}}  =  \sum _{k\geq 0} c^-_k \overline{(z-w_j)}^k \,,
	\end{align*}
	with $c_0 ^{\pm} \neq 0$.
\end{prop}

Before we prove this statement let us present a lemma.
	\begin{lemma} \label{rem:analyticity_of_path_indep_integral}
		Let $g$ be defined on a domain in $\Cb$. If either
		\begin{align*}
	 		g_{an}(z)
	 			&\coloneqq \int_{\gamma} g(w) \dd w 
	 			= \int_{\gamma} (g_1 + \ii g_2) (\dd t_1 + \ii \dd t_2) \,,		\\
	 			\text{or} 	\\
	 		g_{anti}(z)
	 			&\coloneqq \int_{\gamma} g(w) \dd \bar{w} 
	 			= \int_{\gamma} (g_1 + \ii g_2) (\dd t_1 - \ii \dd t_2) \,,
		\end{align*}
		 are independent of the path $\gamma$
		connecting a fixed point $z_0 \in \Cb$ and a point $z \in \Cb$,
		then $g_{an}$ or $g_{anti}(z)$
		are analytic in $z$ or $\bar{z}$ respectively.
\end{lemma}
\begin{proof}		
		Due to the independence on path $\gamma$ it holds 
		$\p_x \int_{\gamma}(g_1+\ii g_2) \dd t_2 = 0$ and 
		$\p_y \int_{\gamma}(g_1+\ii g_2) \dd t_1 = 0$.
		So we have
		\begin{align*}
	 		\p_x  \int_{\gamma} g(w) \dd w 
	 			&= \p_x \int_{\gamma} (g_1 + \ii g_2) \dd t_1
	 			= g(z) 	\\
	 		\p_y  \int_{\gamma} g(w) \dd w 
	 			&= \p_y \int_{\gamma}\ii  (g_1 + \ii g_2) \dd t_2
	 			= \ii g(z) 	\,,
	 	\end{align*}
	 	where in the second equalities we used the fundamental theorem 
	 	of calculus.
	 	Hence, \\
	 	$\frac{1}{2} (\p_x \pm \ii \p_y) \int_{\gamma} g(w) (\dd t_1 \pm \ii \dd t_2) =0$.
\end{proof}
 \begin{proof}[Proof of Prop.~\ref{cor:exp(F)}]
 	The analyticity follows from the fact, which will be proved below,
 	stating that $F_j ^{\pm}$ have the following forms on $\mathcal{A}\cup \overline{\Omega_j} \setminus \{w_j\}$ 
	\begin{align}\label{eq:F_pm_2}
		\begin{split}
	 		F^+_j(z)  
	 			& =  h(z_{0j}) + \int_{\gamma(z_{0j}, z)} \sum_{\substack{k \leq N \\k\neq j}} 
	 			\left(2\p_{z'} h_{[k]}\right)  \dd z' \,, 	\\
	 		F^-_j(z)  
	 			& =  -h(z_{0j}) - \int_{\gamma(z_{0j}, z)} \sum_{\substack{k\leq N \\k\neq j}} 
	 			\left( 2\p_{\bar{z}'} h_{[k]}\right)  \dd \bar{z}' \,,
		\end{split}
	\end{align}
	where $h_{[k]} = \frac{-\Phi'_k}{2\pi} \log |z-w_k|$ for $z \neq w_k$ is the 
	scalar potential of the field $B_k$ inside the hole $\Omega_k$.
	A direct  computation
	(or the ``defining'' Poisson equation $-\Delta h_{[k]} = B_k$ for $h_{[k]}$) 
	yields that  the integrands 
	$ \sum_{k\neq j} 2\p_{z} h_{[k]}$ and 
	$ \sum_{k\neq j} 2\p_{\bar{z}} h_{[k]}$
	 are analytic on $\mathcal{A}\cup \overline{\Omega_j}$ in $z$ and $\bar{z}$ respectively.
	 It follows from Lem.~\ref{rem:analyticity_of_path_indep_integral}, that this indeed implies analyticity of $F_j^\pm$.

	Now we will show that the equalities \eqref{eq:F_pm_2} hold.
	To that end we use the relation \eqref{eq:h_relation}, \ie
	\begin{align*}
		a_x = \p_y h \,,	 \quad 	 a_y = -\p_x h 	\,,	
	\end{align*}
	and write with the aid of the fundamental theorem of calculus
	\begin{align*}
 		h(z) = h(z_{0j}) + \int_{\gamma(z_{0j}, z)} \p_x h \dd x+ \p_y h \dd y \,,
	\end{align*} 
	where 
	$\gamma(z_{0j}, z)\subset 
		\mathcal{A}\cup \overline{\Omega_j} \setminus \{w_j\}$
	is an arbitrary path connecting $z_{0j}$ and $z$.
	Thus we get
	\begin{align*}
 		h - \ii \int_{\gamma(z_{0j}, z)}  \vec{a} \dd \vec{s}
 			& = h(z_{0j}) + \int_{\gamma(z_{0j}, z)} \p_x h \dd x+ \p_y h \dd y -\ii \p_y h \dd x + \ii \p_x h \dd y 	\\
 			& = h(z_{0j}) + 2 \int_{\gamma(z_{0j}, z)} \p_{z'} h \dd z' \,,
	\end{align*}  
	and similarly (or by complex conjugation), 
	$-h - \ii \int_{\gamma(z_{0j}, z)}  \vec{a} \dd \vec{s}
 			 = -h(z_{0j}) - 2 \int_{\gamma(z_{0j}, z)} \p_{\bar{z}'} h \dd \bar{z}'$.
 			 
	Finally, we recall that the concrete form \eqref{eq:potential} 
	 of the potential function for 
	 $B_j = \Phi'_j \delta _{w_j}$ is
	 $h_{[j]} = \frac{-\Phi'_j}{2\pi} \log|z-w_j|$
	and compute	the derivatives
	\begin{align*}
 		\p_z h_{[j]}(z)
 			=   \frac{-\Phi'_j}{4\pi} \p_z \log|z-w_j|^2
 			 = -\frac{1}{4\pi} \frac{\Phi'_j}{z-w_j}  \,, 	
 			 \quad \text{ and } \quad
 		\p_{\bar{z}} h_{[j]}(z)
 			= -\frac{1}{4\pi} \frac{\Phi'_j}{\overline{z-w_j}}  \,,
	\end{align*}
	which together with the definitions \eqref{eq:F_pm} 
	and \eqref{eq:G_j} give \eqref{eq:F_pm_2}.	
 \end{proof}
 
\subsubsection{Trace of the eigenfunctions}
\label{ssec:trace_of_efuns}
Using the properties of $G_j^{\pm}$
we are able to find the trace $\gamma_0$ of 
functions that have a form of the zero modes
on $\mathcal{A}$.
For brevity we will denote by $\check{H}_j$ the space $\check{H}(A\big|_{\p\Omega_j})$,
recall~\eqref{eq:H_check} for the definition of the norm
on $\check{H}(A)$. 

\begin{lemma}\label{le:conv_in_max_dom}
	Consider the function $u=(u^+, u^-)^T$
	which has a convergent Laurent series on $\mathcal{A}$
	\begin{align*}
		u
		= \sum_{n\in \Zb} (\ee^{h}  a_n (z-w_j)^n, 
			\ee^{-h} b_n (\overline{z-w_j})^n)^T 
			\in L^2(\mathcal{A}, \Cb^2) \,, 
			&& a_n, b_n \in \Cb\,.
	\end{align*}	
	Then we have the convergence
	\begin{align}
		u^Q
		=
	 	\begin{pmatrix}
	 	 	u^{+, Q}\\
	 	 	u^{-, Q}
	 	\end{pmatrix}
	 	\coloneqq
	 	\sum_{|n|\leq Q}
	 	\begin{pmatrix}
	 	 	\ee^{h}  a_n (z-w_j)^n	\\
	 	 	\ee^{-h} b_n (\overline{z-w_j})^n
	 	\end{pmatrix}
	 	\rightarrow 
	 	\begin{pmatrix}
	 	 	u^+ 	\\
	 	 	u^-
	 	\end{pmatrix} \,, 
	 	\quad
	 	\text{ as } 
	 	Q\rightarrow \infty
	\end{align}
	in the operator graph norm 
	$\|\cdot\|_{D_a, \mathcal{A}} \coloneqq \|\cdot\|^2_{L^2(\mathcal{A}, \Cb^2)}+ \|D_a(\cdot)\|^2_{L^2(\mathcal{A}, \Cb^2)}$.
\end{lemma}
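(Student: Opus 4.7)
The plan is to reduce graph-norm convergence to plain $L^2$-convergence by showing that every partial sum $u^Q$, and also the limit $u$, is annihilated by $D_a$ on $\mathcal{A}$.

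First I would establish the $L^2$-convergence. Since $\mathcal{A} \cap \supp B = \emptyset$, by Rem.~\ref{rem:a_is_bounded} the potential $h$ is smooth and, being continuous on the compact set $\overline{\mathcal{A}}$, bounded; consequently $\ee^{\pm h}$ is bounded above and bounded away from zero on $\mathcal{A}$. In particular $\ee^{-h}u^+$ and $\ee^{h}u^-$ lie in $L^2(\mathcal{A})$. In polar coordinates $(r,\phi_j)$ around $w_j$ the monomials $(z-w_j)^n = r^n \ee^{\ii n\phi_j}$ are pairwise orthogonal, so Parseval on each circle $|z-w_j|=r \in (R_j,R)$ together with dominated convergence yields
\begin{equation*}
\bigl\| \ee^{-h} u^+ - \sum_{|n|\leq Q} a_n (z-w_j)^n \bigr\|^2_{L^2(\mathcal{A})} = 2\pi \sum_{|n|>Q} |a_n|^2 \int_{R_j}^R r^{2n+1} \dd r \longrightarrow 0
\end{equation*}
as $Q\to\infty$, the full sum being finite and equal to $\|\ee^{-h}u^+\|^2_{L^2(\mathcal{A})}$. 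Multiplying by the bounded $\ee^h$ gives $u^{+,Q}\to u^+$ in $L^2(\mathcal{A})$, and the argument for $u^{-,Q}$ is analogous.

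Next I would verify directly that $D_a u^Q = 0$ on $\mathcal{A}$. Since $h$ is real, the Aharonov--Casher gauge relation~\eqref{eq:h_relation} gives $\p_{\bar z}h = \overline{\p_z h} = \ii a/2$. A one-line computation then produces
\begin{equation*}
(-2\ii\p_{\bar z} - a)\bigl[\ee^{h}(z-w_j)^n\bigr] = 0, \qquad (-2\ii\p_z - \bar a)\bigl[\ee^{-h}\overline{(z-w_j)}^n\bigr] = 0
\end{equation*}
for every $n\in\Zb$, so summing finitely many terms yields $D_a u^Q = 0$ pointwise on $\mathcal{A}$ for every $Q$. Because $u^Q \to u$ in $L^2(\mathcal{A},\Cb^2)$, passing to the limit in the distributional identity $D_a u^Q = 0$ gives $D_a u = 0$ as a distribution on $\mathcal{A}$; hence $u$ belongs to the maximal domain of $D_a$ restricted to $\mathcal{A}$ with $D_a u = 0 \in L^2(\mathcal{A},\Cb^2)$.

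Combining the two observations, the graph norm collapses to the plain $L^2$-norm,
\begin{equation*}
\|u^Q - u\|^2_{D_a,\mathcal{A}} = \|u^Q - u\|^2_{L^2(\mathcal{A},\Cb^2)} + 0 \longrightarrow 0,
\end{equation*}
which completes the proof. The main technical point is the $L^2$-convergence of the Laurent-type series, which relies crucially on the boundedness of $\ee^{\pm h}$ on $\mathcal{A}$ guaranteed by $\mathcal{A}\cap\supp B = \emptyset$; once this is in place, the remainder is an immediate consequence of the Aharonov--Casher gauge identity.
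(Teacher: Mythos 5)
Your proposal is correct and follows essentially the same route as the paper: reduce the graph norm to the plain $L^2$ norm by observing that $u^Q$ and $u$ are annihilated by $D_a$ on $\mathcal{A}$ (the paper cites Prop.~\ref{prop:zero_modes_form}, you verify the gauge identity directly), and then establish $L^2$-convergence via orthogonality of the angular modes and the boundedness of $\ee^{\pm h}$ above and below on $\mathcal{A}$. The only cosmetic difference is that you factor out $\ee^{-h}$ and invoke Parseval, whereas the paper keeps $\ee^{2h}$ under the integral and bounds it by its minimum and maximum on $\mathcal{A}$.
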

\begin{proof}
	By Prop.~\ref{prop:zero_modes_form} it holds that
	$D_a u^{Q} = D_a u = 0$ on $\mathcal{A}$.
	Hence the operator graph norm of $u$ has contribution only 
	from $\|u\|^2_{L^2(\mathcal{A}, \Cb^2)} = \|u^+\|^2_{L^2(\mathcal{A})} + \|u^-\|^2_{L^2(\mathcal{A})}$. We compute the first summand.
	\begin{align*}
	 	\|u^{+}\|^2_{L^2(\mathcal{A})}
	 		= 2\pi \int_{R_j}^{R} r\dd r 
	 			\left( \ee^{2h} \sum_{n\in \Zb} |a_n|^2 r^{2n}
	 				\right) 	
	 		\geq C
	 		\left(|a_{-1}|^2\ln\frac{R}{R_j}
	 		+\sum_{n\neq -1} |a_n|^2 \frac{R^{2n+2} - R_j^{2n+2}}{2n+2} \right) \,,
	\end{align*}
	where the constant $C =2\pi\min_{z\in \mathcal{A}} \ee^{2h(z)}$ is non-zero.
	This shows that the sum on the right-hand side
	is convergent and therefore
	\begin{align*}
	 	\|u^{+} - u^{+, Q}\|^2_{L^2(\mathcal{A})}
	 		\leq 2\pi \max_{z\in \mathcal{A}} \ee^{2h(z)}
	 			\sum_{n>Q} |a_n|^2 \frac{R^{2n+2} - R_j^{2n+2}}{2n+2} 		
	 				\rightarrow 0 \,,
	 	\quad
	 	\text{ as } 
	 	Q\rightarrow \infty \,.
	\end{align*}
	The proof for $u^-$ is a matter of substituting $\ee^{2h}$ by 
	$\ee^{-2h}$ and the coefficients $a_n$ by $b_n$ in the computation above.
\end{proof}
As a corollary we obtain from \cite[Thm.~1.7]{BB12}
the convergence in $\check{H}_j$ of the traces of $u^Q$ to the trace of the zero mode $u$. 

\begin{lemma}\label{le:traces}
	Let $u^{Q}$ be as in Lem.~\ref{le:conv_in_max_dom}.
	In $\check{H}_j$ we have the following convergence of the traces $\gamma_0(u^{Q})$
		\begin{align} \label{eq:traces_j}
		\lim_{Q\rightarrow \infty}
		 	\gamma_0 
		 	\begin{pmatrix}
		 	 	u^{+, Q}	\\
		 	 	u^{-, Q}
		 	\end{pmatrix}
		 =
		 	\sum_{n\in \Zb} \sum_{k\geq 0} R_j^{n+k}
		 	\begin{pmatrix}
			 	a_n c_k^+ \psi^j_{n+k}	
			 	\\
			 	b_n c_k^- \psi^j_{-(n+k)}			
		 	\end{pmatrix} 
		 \eqqcolon
		 	\begin{pmatrix}
		 	 	u_0^{+}	\\
		 	 	u_0^{-}
		 	\end{pmatrix}	\,.
		\end{align}
		The vectors $\psi_{\ell}^j$ 
		were introduced in~\eqref{eq:eigenproblem_bdry_operator}
		and $c_{k}^{\pm}$ are the coefficients from 
		Prop.~\ref{cor:exp(F)}.
\end{lemma}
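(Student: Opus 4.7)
The plan is to combine the graph-norm convergence from Lemma~\ref{le:conv_in_max_dom} with the continuity of the trace map, and then compute the trace of each partial sum $u^Q$ explicitly via the factorisation $e^{\pm h} = e^{F_j^\pm}\,e^{-G_j^\pm}$ together with the Taylor expansion of $e^{F_j^\pm}$ supplied by Prop.~\ref{cor:exp(F)}.

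First I would invoke \cite[Thm.~1.7]{BB12}, which asserts that $\gamma_0:\dom(D_a^{max})\to \check{H}_j$ is bounded when the domain is equipped with the graph norm $\|\cdot\|_{D_a,\mathcal{A}}$. Combined with $u^Q\to u$ in this graph norm (Lem.~\ref{le:conv_in_max_dom}), this is enough to conclude that $\gamma_0 u^Q$ converges in $\check{H}_j$ to some limit; the remaining content of the lemma is therefore the explicit identification of that limit.

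Next, on the boundary circle parametrised by $z-w_j=R_j e^{\ii\phi_j}$, I would use property~1 of $G_j^\pm$ to substitute its explicit boundary value into $F_j^\pm=\pm h+G_j^\pm$, producing
\begin{align*}
e^{\pm h(z)}\big|_{\p\Omega_j}
= e^{F_j^\pm(z)}\,e^{\ii\int_{\gamma_j}\vec{a}\cdot \dd\vec{s}}\,e^{-\ii\frac{\Phi'_j}{2\pi}\phi_j}.
\end{align*}
By Prop.~\ref{cor:exp(F)} the factor $e^{F_j^\pm}$ admits a power series in $(z-w_j)$ (resp. $\overline{(z-w_j)}$) that converges absolutely on a neighbourhood of $\overline{\Omega_j}$. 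For each finite $Q$ the partial sum $u^{+,Q}$ is a finite Laurent polynomial, so its boundary trace can be written as an honest product
\begin{align*}
\gamma_0 u^{+,Q}
&=\Bigg(\sum_{|n|\leq Q} a_n R_j^n e^{\ii n\phi_j}\Bigg)\Bigg(\sum_{k\geq 0} c_k^+ R_j^k e^{\ii k\phi_j}\Bigg)e^{\ii\int_{\gamma_j}\vec{a}\cdot \dd\vec{s}}e^{-\ii\frac{\Phi'_j}{2\pi}\phi_j}\\
&=\sum_{|n|\leq Q}\sum_{k\geq 0} a_n c_k^+ R_j^{n+k}\,\psi^j_{n+k},
\end{align*}
once the phase $e^{\ii(n+k)\phi_j}\cdot e^{\ii\int\vec{a}\cdot \dd\vec{s}}\cdot e^{-\ii\frac{\Phi'_j}{2\pi}\phi_j}$ is recognised as $\psi^j_{n+k}$ from \eqref{eq:eigenproblem_bdry_operator}. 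An analogous computation, using $\overline{(z-w_j)}^n=R_j^n e^{-\ii n\phi_j}$ and the anti-analytic series for $e^{F_j^-}$, produces $b_n c_k^- R_j^{n+k}\psi^j_{-(n+k)}$ in the spin-down component. Taking $Q\to\infty$ and applying the convergence from the first step identifies the limit with the stated double sum.

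The main obstacle will be purely bookkeeping: tracking the phase factors arising from the boundary value of $G_j^\pm$ and matching them against the definition of $\psi^j_\ell$, while keeping careful note of the sign conventions that force the index shift $\ell\mapsto -(n+k)$ for the anti-analytic component. Once that is done, no extra estimate is required, since the convergence in $\check{H}_j$ is handed to us for free by the bounded trace map and the uniqueness of limits gives the explicit form of $(u_0^+,u_0^-)^T$.
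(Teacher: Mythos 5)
Your proposal is correct and follows essentially the same route as the paper's proof: continuity of the trace map from \cite[Thm.~1.7]{BB12} combined with the graph-norm convergence of Lem.~\ref{le:conv_in_max_dom}, together with the factorisation $\ee^{\pm h}=\ee^{F_j^{\pm}}\ee^{-G_j^{\pm}}$ and the series of Prop.~\ref{cor:exp(F)} to compute $\gamma_0 u^{\pm,Q}$ on $\p\Omega_j$ and match the phases against $\psi^j_{\ell}$. The only (minor) difference is that the paper verifies convergence of the double sum in $\check{H}_j$ directly, via the explicit comparison $\|\ee^{G_j^+}u^+\|^2_{L^2(\mathcal{A})}\geq const\,\|(u_0^+,0)^T\|^2_{\check{H}_j}$, whereas you obtain it as a by-product of the convergence of $\gamma_0 u^{Q}$ guaranteed by the bounded trace map; both arguments are legitimate.
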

\begin{proof}
	First we show that the sum in~\eqref{eq:traces_j} is an element of $\check{H}_j$.
	Note that due to the definite chirality 
 	of eigenvectors of $A|_{\p \Omega_j}$ we have $\|(v^+, v^-)\|^2_{\check{H}_j} = \|(v^+,0)\|^2_{\check{H}_j} + \|(0, v^-)\|^2_{\check{H}_j}$ for any 
 	$(v^+, v^-)^T\in \check{H}_j$. 
 	Therefore we work out the proof only for the spin up part and leave out the spin down part which is analogous.
 	Let $u^+$ be as in Lem.~\ref{le:conv_in_max_dom}.
	Then boundedness of $\ee^{G_j^+}$ on $\mathcal{A}$ implies
	$\ee^{G_j^+} u^+ \in L^2(\mathcal{A})$. 
	We compute its norm explicitly
 	\begin{align*}
 		\|\ee^{G_j^+} u^+ \|_{L^2(\mathcal{A})}^2
 			&=
	 	 	\int_{R_j}^{R} \dd r \int_0^{2\pi} r\dd \phi 
 	 		\bigg
 	 			|\ee^{G_j^+ + h}g^+
 	 		\bigg|^2 
 	 		=
 	 		\int_{R_j}^{R} \dd r \int_0^{2\pi} r\dd \phi
 	 		\bigg|
 	 			\sum_{\stackrel{k\geq 0}{ n\in \Zb}} 
 	 				a_n c_k^+ r^{k+n}\ee^{\ii \phi (k+n)}
 	 		\bigg|^2 
 	 		\\
 	 		&=
 	 		2\pi 
 	 		\bigg|
 	 			\sum_{\stackrel{k+n=-1}{ k\geq 0}}
 	 			a_n c_k^+ 
 	 		\bigg|^2 
 	 		\ln\frac{R}{R_j}
 	 		+
 	 		2\pi \sum_{\ell\neq -1}
 	 			\bigg|\sum_{\stackrel{k+n=\ell}{ k\geq 0}}
 	 			a_n c_k^+ \bigg|^2
 	 			\frac{1}{2 \ell+2} (R^{2 \ell+2} -R_j^{2\ell+2}) 
 	 		\,,
 	\end{align*}
 	where in the second equality we used that 
 	$\ee^{G_j^+(z)+ h(z)}=\ee^{F_j^+}$ is the analytic function on 
 	$\mathcal{A}\cup \overline{\Omega_j}$
 	from 
 	Prop.~\ref{cor:exp(F)}.
 	This expression can be used to bound the $\check{H}_j$ norm of the spin up component
 	of~\eqref{eq:traces_j}
 	\begin{align}\label{eq:limiting_function_trace}
 	 	\|(\sum_{\ell\in \Zb} 
 	 	\sum_{\stackrel{n,k}{n+k=\ell}} R_j^{\ell}
			 	a_n c_k^+ \psi^j_{\ell} \,, 0)^T\|^2_{\check{H}_j}
	 		= 
 	 		\sum_{\ell\in \Zb} 
 	 		\bigg|
 	 			\sum_{\stackrel{n+k=\ell}{k\geq 0}} R_j^{\ell}
			 	a_n c_k^+ \psi^j_{\ell}
			 \bigg| (1+ |\lambda^j_{\ell}|^2)^{s} \,,
 	 		\quad 
 	 		\begin{cases} 
 	 			s= 1/2 & \text{ if } \lambda^j_{\ell}< 0 \\
 	 			s= -1/2 & \text{ if } \lambda^j_{\ell}\geq 0 
 	 		\end{cases} \,,
 	\end{align}
 	with $\lambda^j_{\ell}$ from~\eqref{eq:eigenproblem_bdry_operator}.
	Indeed, observing that
 	\begin{enumerate}
 		 	\item If $K\in [-1, 0)$, 
 		 		then for any $n\geq 0$ and $C<1/\sqrt{5}$
					it holds
			 		\begin{align*}
			 	 		n+1\geq C\sqrt{(n-K)^2+1} \,, \qquad
			 	 		\frac{1}{n+1}\geq C \frac{1}{\sqrt{(n+K)^2+1}} \,.
			 		\end{align*}
 	\item There exist constants $C_{<,>,0}$ such that 
 		$\ln \frac{R}{R_j} \geq \frac{C_{0}}{R^2_j}$, and
 	\begin{align*}
 	 	 R^{n} -R_j^{n}
 	 	 	 		&\geq  \left(R R_j^{-1}\right)^{n}
 	 	 	 			 R_j^{n}
 	 	 	 			\left(1- \left(R_j R^{-1}\right)^{n} 
 	 	 	 				\right)	
 	 	 	 		\geq C_{>} n^{2} R_j^{n} \,,
 	 	 	 		& \text{if } n>0 \,, \\
 	 	 R_j^{n} -R^{n}
 	 	 	 		&=R_j^{n} 
 	 	 	 			\left(1- \left(R_j R^{-1}\right)^{|n|}
 	 	 	 				\right)	
 	 	 	 		\geq   C_{<}  R_j^{n}\,,
 	 	 	 		& \text{if } n<0 \,, \\
 	 	 	\end{align*}
 	 	 	\end{enumerate}
 	we deduce 
 	$\|\ee^{G_j^+} u^+ \|_{L^2(\mathcal{A})}^2
 	 		\geq const \|(u_0^+, 0)^T\|^2_{\check{H}_j}$.
 	
 	 		Applying Prop.~\ref{cor:exp(F)} and using the definition
 	 		of $\psi_{\ell}^j$
 	 		we obtain
 	 		\begin{align*}
 	 		 	\gamma_0[(u^{+, Q}, 0)^T]
 	 		 		= \sum_{|n|\leq Q}
 	 		 			R_j^n a_n 
 	 		 			\ee^{\ii n \phi_j-G_j^+(z)} \ee^{(h+G_j^+)(z)}\big|_{z\in \p \Omega_j}
 	 		 		= \sum_{|n|\leq Q}
 	 		 			R_j^{n+k} a_n c_k^+ \psi_{n+k}^j
 	 		\end{align*}
 	 which with the convergence of the sum in~\eqref{eq:traces_j}
 	 in $\check{H}_j$ norm finishes the proof for the spin up component.	
 	 Similarly one shows
 	$\|\ee^{G_j^-} u^- \|_{L^2(\mathcal{A})}^2
 	 		\geq const \|(0, u_0^-)^T\|^2_{\check{H}_j} $
 	 and the convergence in the spin down component of~\eqref{eq:traces_j}.
\end{proof}

With these technical preliminaries we are ready to present
a key statement for the proof of Thm.~\ref{thm:unbdd}.
\begin{prop}\label{cor:g_is_analytic}
	Let $(u^+, u^-)$ be a zero mode of the Dirac operator $D_a^{\APS}$
	with the magnetic field~\eqref{eq:tot_mag_field},
	that satisfies the APS boundary condition~\eqref{eq:BC_hole_j} on $\p \Omega_j$.
	Then the functions $g^+$ and $g^-$ from Prop.~\ref{prop:zero_modes_form}
	 can be analytically extended inside the region 
	 $\Omega_j$ in $z$ and $\overline{z}$ respectively.
\end{prop}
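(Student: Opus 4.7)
The plan is to read off, from the APS boundary condition~\eqref{eq:BC_hole_j} in conjunction with Lem.~\ref{le:traces} and Prop.~\ref{cor:exp(F)}, exactly which Laurent coefficients of $g^\pm$ around $w_j$ must vanish, and then to use the nowhere-vanishing analyticity of $\ee^{F_j^\pm}$ to transfer the vanishing back to $g^\pm$ themselves.

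First I would multiply the Laurent series~\eqref{eq:g_series} of $g^\pm$ on $\mathcal{A}$ by the power series of $\ee^{F_j^\pm}$ from Prop.~\ref{cor:exp(F)}, obtaining via the Cauchy product
\begin{align*}
\ee^{F_j^+} g^+ = \sum_{\ell \in \Zb} A_\ell^+ (z-w_j)^\ell,
\qquad
\ee^{F_j^-} g^- = \sum_{m \in \Zb} A_m^- \overline{(z-w_j)}^m,
\end{align*}
with $A_\ell^+ = \sum_{k \geq 0} c_k^+ a_{\ell-k}$ and $A_m^- = \sum_{k \geq 0} c_k^- b_{m-k}$. Using that $\psi_\ell^j = \ee^{\ii \ell \phi_j}\, \ee^{-G_j^\pm}\!\!\mid_{\p\Omega_j}$ on the boundary, a direct check identifies these Fourier coefficients with the quantities $R_j^{n+k} a_n c_k^+$ and $R_j^{n+k} b_n c_k^-$ appearing in the trace formula of Lem.~\ref{le:traces}; hence the APS condition on $u$ is an APS condition on $\ee^{F_j^\pm} g^\pm$.

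Second, by Lem.~\ref{le:gauge_invariance} I may assume $\Phi_j = \Phi'_j \in [-\pi,\pi)$, so $\Phi_j/2\pi \in [-1/2, 1/2)$. The APS condition~\eqref{eq:BC_hole_j} on $u^+\!\mid_{\p\Omega_j}$ then forces $A_\ell^+ = 0$ for all integers $\ell \leq \Phi_j/(2\pi) - 1/2$, which in this gauge window is exactly $\ell \leq -1$; analogously, the spin-down part of~\eqref{eq:BC_hole_j}, together with the index flip $\ell \mapsto -m$ forced by the anti-analytic nature of $g^-$, gives $A_m^- = 0$ for $m \leq -1$. Thus the APS condition annihilates precisely the strictly negative-order parts of both Laurent expansions.

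Third, what survives is a pair of genuine Taylor series that converge on the full disc $\{|z-w_j| < R\} \supset \overline{\Omega_j}$, supplying analytic and anti-analytic extensions of $\ee^{F_j^+} g^+$ and $\ee^{F_j^-} g^-$ respectively to $\overline{\Omega_j}$. Since $F_j^\pm$ is itself analytic in $z$ (respectively $\bar z$) on $\mathcal{A}\cup\overline{\Omega_j}$ by~\eqref{eq:F_pm_2}, the factor $\ee^{F_j^\pm}$ is nowhere zero there, and dividing it out yields the desired analytic extension of $g^+$ and anti-analytic extension of $g^-$ to $\overline{\Omega_j}$. I expect the main delicacy to be the bookkeeping in step two: aligning the Fourier indexing of Lem.~\ref{le:traces} with the Laurent indexing coming from the Cauchy product, and verifying that the narrow gauge window $\Phi'_j \in [-\pi,\pi)$ really produces the cutoff $\ell \leq -1$ in both chiralities; beyond that no further analytic input is required.
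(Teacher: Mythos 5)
Your proposal is correct and follows essentially the same route as the paper: compare the trace computed in Lem.~\ref{le:traces} with the explicit APS condition~\eqref{eq:BC_hole_j} in the normalized gauge $\Phi'_j\in[-\pi,\pi)$ to kill the negative-order Laurent coefficients, then use $c_0^{\pm}\neq 0$ (equivalently, the nowhere-vanishing analyticity of $\ee^{F_j^{\pm}}$) to transfer the vanishing to $g^{\pm}$. The only difference is cosmetic: the paper concludes $a_n=b_n=0$ for $n<0$ directly from $c_0^{\pm}\neq 0$, whereas you first extend $\ee^{F_j^{\pm}}g^{\pm}$ across $\Omega_j$ and then divide by the nonvanishing factor $\ee^{F_j^{\pm}}$ --- arguably the cleaner way to phrase the same argument.
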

\begin{proof}
	On $\p \Omega_j$ we compare the trace given by Lem.~\ref{le:traces}
	with the APS boundary condition~\eqref{eq:BC_hole_j}.
	We remind that we are using the normalized fluxes through the holes
	(see Lem.~\ref{le:gauge_invariance}) which satisfy 
	$\Phi'_j \in [-\pi, \pi)$
	and therefore
	\begin{align*}
	 	\gamma_0(u^+, u^-)
	 	= 
	 		\sum_{n\in \Zb} \sum_{k\geq 0} R_j^{n+k}
		 	\begin{pmatrix}
			 	a_n c_k^+ \psi^j_{n+k}	
			 	\\
			 	b_n c_k^- \psi^j_{-(n+k)}			
		 	\end{pmatrix} 
	 	= 
	 		\begin{pmatrix}
	 			\sum_{\ell\geq 0} \beta^+_{\ell} \psi_{\ell}^j	\\
	 			\sum_{\ell\leq 0} \beta^-_{\ell} \psi_{\ell}^j
	 		\end{pmatrix} \,.
	\end{align*}
	Here $\beta^{\pm}_{\ell}\in \Cb$ are some constant coefficients.
	Hence
	$a_n c_k^+ = b_n c_k^- = 0$ whenever $n+k<0$.
	In particular, since $c_0^{\pm}\neq 0$, it holds
	$a_n = b_n = 0$ for all $n<0$.
	This in turn means that $g^{+}$ and $g^-$ can be analytically extended inside $\Omega_j$ in variables $z$ and $\overline{z}$ respectively. 	 
\end{proof}

Applying now the $L^2$ integrability condition at infinity to $u ^{\pm}$
the Aharonov--Casher type result for our setting $\mathscr{M} = \Cb$ follows.

\begin{proof}[Proof of Thm.~\ref{thm:unbdd}] 
By Prop.~\ref{cor:g_is_analytic} the zero modes are of the form
\begin{align*}
	u =
	\left (
		\ee ^{h} \sum _{n = 0} ^{n^+} a _{n} z ^{n}\,, 
 		\ee ^{-h} \sum _{n = 0} ^{n^-} b _{n} \bar{z} ^{n}
	 \right )^{T}
\end{align*}
with $a _{n}, b _{n} \in \Cb$ and some integers $n ^{\pm}$.
Since the requirement $u \in \dom(D_a)$ in particular implies square integrability at infinity,
we obtain from the asymptotics~\eqref{eq:asymptot_h} of the potential function $h$
the conditions
 $n^+ - \frac{\Phi}{2\pi} < -1$
and 
 $n^- + \frac{\Phi}{2\pi} < -1$,
where 
 $\Phi = \Phi_0 +\sum_{k\leq N} \Phi'_j$.
From this we infer that there are $\floor{\frac{\Phi}{2\pi}}$ 
zero modes of spin up and 
 $\floor{-\frac{\Phi}{2\pi}}$ zero modes of spin down provided that 
 $|\Phi| > 2\pi$.
\end{proof}

\subsection{Proof for the  bounded region with holes} \label{sec:zero_modes_bdd_region}
In the case of the bounded domain the condition of the square integrability,
responsible for cutting off the infinite series in the final step 
in proof of  Thm.~\ref{thm:unbdd},
is substituted by the APS boundary condition~\eqref{eq:BC_hole_N+1} 
on the outer boundary 
 $\p \Omega _{out}$.
We denote by 
 $\mathcal{A}^{out}\subset M$ 
an open annulus 
whose outer boundary is $\p \Omega_{out}$ such that it satisfies 
$\mathcal{A}^{out}\cap \supp B = \emptyset$
and by
 $\widetilde{\Omega}$ the union $(\Omega_{out})^C \cup \mathcal{A}^{out}$
where $(\cdot)^C$ stands for the complement in $\Cb$.
To apply the boundary condition on 
 $\p \Omega_{out}$
we follow a similar process 
as in the case of checking the boundary conditions on the inner components of the boundary.
Let $\gamma (z_0^{out}, z)$ be a path connecting $z_0^{out} = R_{out}$ and a point 
$z\in \widetilde{\Omega}$. We define
\begin{align*}
 	G^+(z) 	&\coloneqq - \ii \int_{\gamma(z_0^{out}, z)} \vec{a} \dd \vec{s} +\int_{\gamma(z_0^{out}, z)} \frac{\Phi}{2\pi z'} \dd z'	\,, 	\\
 	G^-(z) 	&\coloneqq - \ii \int_{\gamma(z_0^{out}, z)} \vec{a} \dd \vec{s} -\int_{\gamma(z_0^{out}, z)} \frac{\Phi}{2\pi \bar{z}'} \dd \bar{z}' \,.
\end{align*}
By definition the restrictions to the boundary $\p \Omega _{out}$ satisfy
		\begin{align*}
 			G^{\pm}(z) \mid_{z \in \p \Omega _{out}} 
 				= -\ii \int_{\gamma_{out}} \vec{a} \dd \vec{s} + \ii \frac{\Phi}{2\pi} \phi \,,
		\end{align*}
		where $\gamma _{out} \subset \p \Omega _{out}$ connects the points 
		$z_{0}^{out} = R _{out}$ and $z\in \p \Omega _{out}$.
A direct adaptation of the proof of Lem.~\ref{lemma:path_indep_2}
shows that
$G^{\pm}(z)$ are independent of the choice of path $\gamma (z_0^{out}, z)$ 
contained in $\widetilde{\Omega}$.

We prove another key property of these functions.	
 \begin{lemma}\label{cor:exp}
 	Let us define
 		\begin{align*}
			F^{\pm}(z) \coloneqq \pm h(z) + G^{\pm}(z) \,.
		\end{align*}
 	Then
	$F^{+}(z)$ and $F^-(z)$
	 are analytic in $z$ and $\bar{z}$ respectively 
	 on $\widetilde{\Omega}$.
	Moreover, $F^{\pm}(z) \rightarrow const$ as $|z|\rightarrow \infty$.
\end{lemma}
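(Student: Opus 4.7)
My plan is to adapt the template used in Prop.~\ref{cor:exp(F)} for the inner boundaries, replacing the local flux $\Phi'_j$ by the total flux $\Phi$ and the reference point $z_{0j}$ by $z_0^{out}$. The first step is to rewrite $F^+$ in a form where the integrand is manifestly holomorphic. Using the Aharonov--Casher gauge identities $a_x=\partial_y h$, $a_y=-\partial_x h$ and the fundamental theorem of calculus along any path $\gamma(z_0^{out},z)\subset\widetilde{\Omega}$, the same manipulation as in~\eqref{eq:F_pm_2} yields
\begin{equation*}
F^+(z)=h(z_0^{out})+\int_{\gamma(z_0^{out},z)}\Bigl(2\partial_{z'}h(z')+\frac{\Phi}{2\pi z'}\Bigr)\,dz',
\end{equation*}
with the analogous identity for $F^-$ using $\partial_{\bar z'}$ and $d\bar z'$.

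For analyticity on $\widetilde{\Omega}$, I would decompose $h=h_0+\sum_{k\leq N}h_{[k]}$ with $h_{[k]}(z)=-\frac{\Phi'_k}{2\pi}\log|z-w_k|$, so that $2\partial_z h = 2\partial_z h_0-\sum_k \frac{\Phi'_k}{2\pi(z-w_k)}$. The function $h_0$ is harmonic away from $\supp B_0\subset M^{\circ}\subset \Omega_{out}$, so $\partial_z h_0$ is holomorphic on $\widetilde{\Omega}$; each pole $1/(z-w_k)$ sits at $w_k\in\Omega_k\subset \Omega_{out}$, hence outside $\widetilde{\Omega}$; and the pole of $1/z$ lies at the origin, also inside $\Omega_{out}$ and outside $\widetilde{\Omega}$. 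Thus the integrand is holomorphic in $z'$ on $\widetilde{\Omega}$, and Remark~\ref{rem:analyticity_of_path_indep_integral} delivers analyticity of $F^+$ in $z$. The argument for $F^-$ is identical after conjugation.

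The main obstacle — and the point that distinguishes this lemma from Prop.~\ref{cor:exp(F)} — is the asymptotic condition $F^{\pm}(z)\to \mathrm{const}$ as $|z|\to\infty$. By~\eqref{eq:asymptot_h} we have $h_0(z)=-\frac{\Phi_0}{2\pi}\log|z|+\mathcal{O}(|z|^{-1})$, and a direct differentiation of Remark~\ref{rem:a_is_bounded} gives $2\partial_z h_0(z)=-\frac{\Phi_0}{2\pi z}+\mathcal{O}(|z|^{-2})$. Combining this with the expansion $\frac{1}{z-w_k}=\frac{1}{z}+\frac{w_k}{z^2}+\mathcal{O}(|z|^{-3})$ and the crucial identity $\Phi=\Phi_0+\sum_k\Phi'_k$, the leading $1/z$ terms cancel and the integrand reduces to
\begin{equation*}
2\partial_{z'}h(z')+\frac{\Phi}{2\pi z'}=-\sum_{k\leq N}\frac{\Phi'_k}{2\pi}\Bigl(\frac{1}{z'-w_k}-\frac{1}{z'}\Bigr)+\mathcal{O}(|z'|^{-2})=\mathcal{O}(|z'|^{-2}).
\end{equation*}
Choosing the tail of $\gamma(z_0^{out},z)$ to be a radial ray (which is allowed by the path-independence established analogously to Lem.~\ref{lemma:path_indep_2}), the integral converges absolutely as $|z|\to\infty$ to a finite constant, completing the proof for $F^+$; the verification for $F^-$ is obtained by complex conjugating the above asymptotic expansion. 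The key subtlety to watch is precisely this cancellation of the leading logarithmic term: if the asymptotic of $h_0$ were off by $\mathcal{O}(1)$ rather than $\mathcal{O}(|z|^{-1})$, the boundedness statement would fail.
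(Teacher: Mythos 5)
Your proposal is correct and follows essentially the same route as the paper: the same integral representation $F^+(z)=h(z_0^{out})+\int_{\gamma}\bigl(2\p_{z'}h+\tfrac{\Phi}{2\pi z'}\bigr)\dd z'$, analyticity via Remark~\ref{rem:analyticity_of_path_indep_integral}, and the same flux-subtraction cancellation giving an $\mathcal{O}(|z'|^{-2})$ integrand. The only point the paper handles more explicitly is that convergence along a single radial ray does not by itself give one direction-independent constant; it closes this by also bounding the integral over an arc at radius $|z|$ (which is $\mathcal{O}(|z|^{-1})$ by your own integrand estimate), a one-line addition you should include.
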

\begin{proof}
	Similarly as in the proof of Prop.~\ref{cor:exp(F)},
	it can be shown that it holds
	\begin{align}
		\label{eq:F_plus}
 		F^+(z)  
 			& =  h(z_0^{out}) + \int_{\gamma(z_0^{out}, z)} \left(2\p_{z'} h + \frac{\Phi}{2\pi z'} \right) \dd z' \\
 		\nonumber
 		F^-(z)  
 			& =  -h(z_0^{out}) - \int_{\gamma(z_0^{out}, z)} \left(2\p_{\bar{z}'} h + \frac{\Phi}{2\pi \bar{z}'} \right) \dd \bar{z}' 	\,,
	\end{align} 
	where 
	 $\gamma(z_0^{out}, z) \subset \widetilde{\Omega} $ 
	is an arbitrary path connecting 
	 $z_0^{out}$ and 
	 $z$.
	Then $F ^{+}$ is analytic on 
	 $\widetilde{\Omega}$
	as 
	 $2\p_{z} h + \frac{\Phi}{2\pi z}$ 
	is analytic,
	and 
	 $F^-$ is anti-analytic as 
	 $2\p_{\bar{z}} h + \frac{\Phi}{2\pi \bar{z}} $
	is anti-analytic on that region (recall the Poisson equation $-\Delta h = B$
	 and Remark~\ref{rem:analyticity_of_path_indep_integral}).
	 
	Since we are further interested in the limit 
	 $|z| \rightarrow \infty$, let us assume that 
	 $|z|> R'$ for some $R'>2R _{out}$. 
	We will show that the absolute value of the integrand in ~\eqref{eq:F_plus} 
	decays like 
	 $|z| ^{-2}$ when 
	 $|z|$ tends to infinity.
	First for the singular parts of the magnetic field
	 $B_j = \Phi'_j \delta _{w_j}$ 
	we have for any 
	 $z \in \widetilde{\Omega}$
	\begin{align}\label{eq:estimate_sing}
 		2\p_z h_{[j]} + \frac{\Phi'_j}{2\pi z} 
 		=
 			 \frac{-\Phi'_j}{2\pi} \frac{w_j}{z(z-w_j)} \,,
	\end{align}
	with 
	 $h_{[j]} = \frac{-\Phi'_j}{2\pi} \log|z-w_j|$ 
	as before, in the proof of
	Prop.~\ref{cor:exp(F)}.
	In particular the absolute value of the right hand side is indeed bounded by a constant multiple of 
	 $|z|^{-2}$ for 
	 $|z|>R'$.
	For the bulk part of the magnetic field 
	 $B_0 \in C_0 ^{\infty}(M)$ 
	with the scalar potential 
	 $h_0(z) 
	 = 
	 	-\frac{1}{2\pi} \int_{M} B_0(z') \log|z-z'|
	 	\frac{\ii}{2} \dd z'\wedge \overline{\dd z'}$
	on $\widetilde{\Omega}$ we compute the derivative $\p_z h_0$ using the dominated convergence theorem
	similarly as in Remark~\ref{rem:a_is_bounded}.
	Then using the definition of the flux $\Phi_0$ we obtain the following estimate
	\begin{align} \label{eq:estimate_at_inf}
 		\left| 2\p_z h_0 + \frac{\Phi_0}{2\pi z} \right|
 			&= \left| -\frac{1}{2\pi} \int_{\Cb} \left( \frac{B_0(z')}{z-z'} - \frac{B_0(z')}{z} \right)  \frac{\ii}{2} \dd z'  \wedge \overline{\dd z'}  \right| \\
 			\nonumber
 			&\leq \frac{1}{2\pi}  \int_{\Cb}  \left| \frac{B_0(z')z'}{z(z-z')} \right| \frac{\ii}{2}\dd z'  \wedge \overline{\dd z'} 
 			\leq const |z| ^{-2} \,,
	\end{align}
	for all $|z|> R'$.
	In the last inequality we used that
	$\left| \frac{B_0(z')z'}{z(z-z')/|z|^2} \right| \leq 2 |B_0(z')z'|  \in  L^1(\Cb)$.
	Let us define
	\begin{align*}
		C_0 
 			& \coloneqq \int_0 ^{\infty} 
 				\left( 2\p_{z} h(z_0^{out}+t) 
 					+ \frac{\Phi}{2\pi (z_0^{out}+t)} \right) \dd t \,.		
	\end{align*}
	Then this is a well defined constant.
	Indeed, since an integral of analytic function along a bounded interval is bounded
	we have with use of \eqref{eq:estimate_at_inf} and \eqref{eq:estimate_sing}
	\begin{align*}
 			|C_0| 
 			\leq C_1 + \int _{R'}^{\infty} \frac{C_2}{t^2}  \dd t 
 			<\infty \,,
	\end{align*}
	with some constants $C_{1,2}>0$.
	Further by independence on the path 
	 $\gamma \subset \widetilde{\Omega}$
	(we can choose $\gamma$ along the real axis and then along the arc 
	corresponding to $|z|=const$) and by \eqref{eq:estimate_at_inf}, \eqref{eq:estimate_sing} we estimate
	\begin{align*}
		\left| \int _{\gamma} 2\p_{z'} h + \frac{\Phi}{2\pi z'} \dd z' -C_0 \right|
		 	\leq  const \left|-\int _{|z|}^{\infty} \frac{\dd t}{t^2}  +  
		 	 	 \frac{1}{|z|} \int_{0} ^{\arg (z)} \dd \phi \right| \,,
	\end{align*}
	which is arbitrarily small as $|z|  \rightarrow \infty$ and hence concludes the proof for 
	$F^+$. The proof of asymptotics for $F^-$ at infinity is analogous.
 \end{proof}

\begin{corollary} \label{cor:exp(F)_out}
	The exponentials of $F ^{\pm}$ have the following series on
	 $\widetilde{\Omega}$
	\begin{align*}
	 	\ee^{ F^+(z)} = \sum_{n\leq 0} d^+_n z^{n}	\quad \text{and} \quad
	 	\ee^{ F^-(z)} = \sum_{n\leq 0} d^-_n \bar{z}^{n} \,,
	\end{align*}	
	for some $d ^{\pm}_n \in \Cb$ with $d_0 ^{\pm} \neq 0$.
\end{corollary}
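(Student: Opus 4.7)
The plan is to combine the two conclusions of Lem.~\ref{cor:exp} — holomorphy on $\widetilde{\Omega}$ and boundedness at infinity — with the standard Laurent expansion on an exterior annulus, and then read off the sign constraint on the exponents.

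First I would observe that $\widetilde{\Omega} = (\Omega_{out})^C \cup \mathcal{A}^{out}$ contains an open annular neighbourhood of infinity of the form $\{|z| > R_{out} - \epsilon\}$ for some $\epsilon > 0$, because the open annulus $\mathcal{A}^{out}$ has $\partial \Omega_{out}$ as its outer boundary, and $(\Omega_{out})^C$ already covers everything outside $\Omega_{out}$. Consequently $e^{F^+}$, being the exponential of a function holomorphic in $z$ on $\widetilde{\Omega}$ (by Lem.~\ref{cor:exp}), is itself holomorphic on that exterior annulus and therefore admits a Laurent expansion
\begin{align*}
	e^{F^+(z)} = \sum_{n \in \Zb} d^+_n \, z^n \,, \qquad |z| > R_{out} - \epsilon \,,
\end{align*}
with coefficients given by the usual contour integrals.

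Next, Lem.~\ref{cor:exp} supplies the asymptotic $F^+(z) \to C$ for some constant $C \in \Cb$ as $|z| \to \infty$, so $e^{F^+(z)}$ is bounded on $\{|z| > R'\}$ and tends to the nonzero limit $e^C$. Estimating the Laurent coefficients on a large circle $|z| = r$ by $|d^+_n| \leq r^{-n} \max_{|z|=r} |e^{F^+(z)}|$ and sending $r \to \infty$ immediately forces $d^+_n = 0$ for every $n > 0$, leaving exactly the claimed series over $n \leq 0$. The constant term is then $d^+_0 = \lim_{|z|\to\infty} e^{F^+(z)} = e^C \neq 0$.

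The argument for $F^-$ is verbatim the same after passing to the anti-holomorphic variable $\bar{z}$: anti-analyticity of $F^-$ on $\widetilde{\Omega}$ gives a Laurent expansion in $\bar{z}$ on the exterior annulus, and the same bounded-at-infinity coefficient estimate kills all strictly positive powers of $\bar{z}$ while fixing $d^-_0 \neq 0$. There is no real obstacle here — the only thing to be careful about is verifying that $\widetilde{\Omega}$ really does contain a full punctured neighbourhood of infinity (otherwise the Laurent expansion would not be available), which is why I would make that geometric observation explicit at the start.
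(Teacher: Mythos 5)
Your proof is correct and follows essentially the same route as the paper: both arguments rest on Lem.~\ref{cor:exp} (analyticity of $F^{\pm}$ on $\widetilde{\Omega}$ and convergence to a constant at infinity) and then deduce that only non-positive powers survive, the paper via the inversion $w=1/z$ and a Taylor expansion at $w=0$, you via the Laurent expansion at infinity with Cauchy coefficient estimates — two standard phrasings of the same fact. Your explicit check that $\widetilde{\Omega}$ is a full exterior annulus $\{|z|>R_{out}-\epsilon\}$ is a worthwhile detail the paper leaves implicit.
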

\begin{proof}
	By the previous lemma and by analyticity of $\exp(z)$ on $\Cb$ the function
	$\ee ^{ F ^{+}(w ^{-1})} $ is analytic and $\ee ^{ F ^{-}(w ^{-1})}$ is anti-analytic on 
	the interior of $\Cb\setminus \widetilde{\Omega}$ and converge 
	to a non-zero constant as $w \rightarrow 0$.
	This implies existence of the Taylor series  
	$\ee^{F^+(w ^{-1})} = \sum_{k\geq 0} d_k^{+} w^k $ and
	$\ee^{F^-(w ^{-1})} = \sum_{k\geq 0} d_k^{-} \overline{w}^k $
	with $d_0 ^{\pm}\neq 0$.
	Thus on the complement $\widetilde{\Omega}$ we have
	\begin{align*}
	 	\ee^{F^+(z)} = \sum_{n\leq 0} d_n^{+} z^{n}	\quad \text{and} \quad
	 	\ee^{F^-(z)} = \sum_{n\leq 0} d_n^{-} \bar{z}^{n} \,.
	\end{align*}	
\end{proof}

To apply the boundary condition on the outer boundary
we find the boundary values of a function that has the form of our zero modes when restricted to
 $\mathcal{A}^{out}\subset M$.
For conciseness we denote by $\check{H}_{out}$ the space $\check{H}(A|_{\p\Omega_{out}})$.

\begin{lemma} \label{le:traces_out}
	Let $u =(u^+, u^-)^T 
		= \sum_{n\geq 0} (\ee^h a_n z^n,\ee^{-h} b_n \overline{z}^n )^T
			\in L^2(\mathcal{A}^{out}, \Cb^2)$. 
	Then its trace on $\p \Omega^{out}$ is
		\begin{align} \label{eq:traces_out}
			\gamma_0 (u^+, u^-)^T 
		=
		 	\sum_{n\geq 0} \sum_{k\leq 0} R_{out}^{n+k}
		 	\begin{pmatrix}
			 	a_n d_k^+ \psi_{n+k}	
			 	\\
			 	b_n d_k^- \psi_{-(n+k)}			
		 	\end{pmatrix} 	
		\eqqcolon
			\begin{pmatrix}
		 		v_0^+ 	\\
		 		v_0^-
			\end{pmatrix} \,.
		\end{align}
		The vectors $\psi_{\ell}$ 
		were introduced in~\eqref{eq:eigenproblem_bdry_operator_out}
		and the coefficients $d_k^{\pm}$ are from 
		Cor.~\ref{cor:exp(F)_out}.
\end{lemma}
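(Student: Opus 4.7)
The approach mirrors the proof of Lemma~\ref{le:traces}, replacing the inner-hole functions $G_j^{\pm}$ by $G^{\pm}$ and using the series of $\ee^{F^{\pm}}$ from Cor.~\ref{cor:exp(F)_out}. I would first establish the analogue of Lemma~\ref{le:conv_in_max_dom} on the annulus $\mathcal{A}^{out}$: since every partial sum $u^{Q} = \sum_{n=0}^{Q}(\ee^{h} a_n z^n, \ee^{-h} b_n \bar z^n)^{T}$ is a zero mode of $D_a$ on $\mathcal{A}^{out}$ by Prop.~\ref{prop:zero_modes_form}, the operator graph norm $\|\cdot\|_{D_a,\mathcal{A}^{out}}$ reduces to the $L^2$ norm. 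A direct computation in polar coordinates about the origin, using that $\ee^{\pm 2h}$ is bounded on $\mathcal{A}^{out}$, shows that $u^{Q}\to u$ in the graph norm. The continuity of the trace map $\gamma_0 : \dom(D_a^{max})\to \check{H}_{out}$ from \cite[Thm.~1.7]{BB12} then reduces the task to evaluating $\lim_{Q\to\infty}\gamma_0 u^{Q}$ in $\check{H}_{out}$.

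For the pointwise identification on $\p\Omega_{out}$ I would rewrite $\ee^{h} a_n z^n = \ee^{-G^+}\ee^{F^+} a_n z^n$. Using property~\ref{point_2} of $G^+$ on the boundary, the identity $z^n\big|_{\p\Omega_{out}} = R_{out}^n \ee^{\ii n \phi}$, and the series $\ee^{F^+(z)} = \sum_{k\leq 0} d_k^+ z^k$ from Cor.~\ref{cor:exp(F)_out}, the eigenfunctions $\psi_{\ell}$ of \eqref{eq:eigenproblem_bdry_operator_out} emerge and one obtains
\begin{align*}
\gamma_0 u^{+,Q} = \sum_{n=0}^{Q}\sum_{k\leq 0} R_{out}^{n+k}\, a_n d_k^+\, \psi_{n+k}\,,
\end{align*}
and analogously for the spin-down component using $\ee^{F^-}$ and $d_k^-$. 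Letting $Q\to\infty$ yields the claimed formula \eqref{eq:traces_out}, provided the limit series converges in $\check{H}_{out}$.

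The hard part is precisely this last convergence: showing $v_0^{\pm}\in\check{H}_{out}$ and that the partial sums converge to $v_0^{\pm}$ in that norm. Following the template of Lemma~\ref{le:traces}, I would establish a lower bound of the shape
\begin{align*}
\|\ee^{G^+} u^+\|_{L^2(\mathcal{A}^{out})}^{2} \geq \mathrm{const}\cdot \|(v_0^+,0)^{T}\|_{\check{H}_{out}}^{2}
\end{align*}
by expanding $|\ee^{F^+}g^+|^2$ in polar coordinates and performing the radial integral to extract the weight $(1+|\lambda_{\ell}|^2)^{\pm 1/2}$ for each mode $\ell$. The subtlety compared with Lemma~\ref{le:traces} is that the eigenvalues on $\p\Omega_{out}$ in \eqref{eq:eigenproblem_bdry_operator_out} carry the opposite sign, so the $\check{H}_{out}$ norm puts the factor $(1+|\lambda_{\ell}|^2)^{+1/2}$ on the indices $\ell<\Phi/2\pi-1/2$; correspondingly the roles of the outer radius $R_{out}$ and the inner radius $r_0$ of $\mathcal{A}^{out}$ in the radial estimates $R_{out}^{2\ell+2}-r_0^{2\ell+2}$ are swapped with respect to the inner-hole case. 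After checking these sign and range adjustments, the same elementary inequalities (of type $n+1\geq C\sqrt{(n-K)^2+1}$) yield the bound, and convergence in $\check{H}_{out}$ follows. The spin-down part is obtained by substituting $\ee^{-2h}$ for $\ee^{2h}$, $b_n$ for $a_n$ and $d_k^-$ for $d_k^+$.
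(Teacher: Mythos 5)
Your proposal follows essentially the same route as the paper's proof: reduce to the analogue of Lem.~\ref{le:conv_in_max_dom} on $\mathcal{A}^{out}$, use continuity of $\gamma_0$, identify the partial traces via $\ee^{F^{\pm}}=\ee^{h+G^{\pm}}$ and Cor.~\ref{cor:exp(F)_out}, and establish convergence in $\check{H}_{out}$ through the lower bound $\|\ee^{G^{\pm}}u^{\pm}\|_{L^2(\mathcal{A}^{out})}\geq \mathrm{const}\,\|v_0^{\pm}\|_{\check{H}_{out}}$. Your explicit remarks on the sign flip of the eigenvalues on $\p\Omega_{out}$ and the swapped roles of the inner and outer radii in the radial estimates are exactly the ``mild alterations'' the paper invokes, so the argument is correct.
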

\begin{proof}
 	It is not difficult to see that Lem.~\ref{le:conv_in_max_dom} holds
 	also with $\mathcal{A}$ substituted by $\mathcal{A}^{out}$
 	 and $z-w_j$ substituted by $z$.
 	 Hence 
 	 $\gamma_0 (u)
 	 =
 	 	\lim_{Q\rightarrow \infty} 
 	 		\sum_{0\leq n\leq Q} 
 	 			R_{out}^n
 	 			(\ee^h a_n \ee^{\ii\phi n},\ee^{-h} b_n \ee^{-\ii\phi n} )^T$ in $\check{H}_{out} $.
 	 Mildly alternating the steps of the proof of Lem.~\ref{le:traces}
 	 we find the bound
 	 \begin{align*}
 	  	\|(\ee^{G^+}u^+, \ee^{G^-}u^-)\|_{L^2(\mathcal{A}^{out}, \Cb^2)}		  \geq const 
 	  	\|(v_0^+, v_0^-)^T\|_{\check{H}_{out}} \,,
 	 \end{align*}
 	showing that the sum in~\eqref{eq:traces_out}
 	converges in
 	 $\check{H}_{out}.$
 	The statement then follows from this convergence 
 	and by applying 
 	 Cor.~\ref{cor:exp(F)_out} to the expression
 	 \begin{align*}
 	  	\sum_{0\leq n\leq Q}
 	  		R_{out}^n
 	  		\left (
 	  			a_n \ee^{\ii\phi n} \ee^{-G^+(z)}
 	  				 \ee^{(h+G^+)(z)}, 
 	  			b_n \ee^{-\ii\phi n}\ee^{-G^-(z)}
 	  				\ee^{(h+G^-)(z)}
 	  		 \right )^T\Big|_{z \in \p\Omega_{out}} \,.
 	 \end{align*}
\end{proof}

The proof of the Aharonov--Casher result in the case of the bounded domain
 with holes and a circular outer boundary
now comes out along the lines of the proof of Prop.~\ref{cor:g_is_analytic}.

\begin{proof}[Proof of Thm.~\ref{thm:bdd}]
	Since the zero modes need to satisfy the APS boundary condition 
	on the inner components of the boundary,
	$ \p \Omega_j $, $j\leq N$, they have by 
	Prop.~\ref{cor:g_is_analytic} and Prop.~\ref{prop:zero_modes_form}  
	the form
	\begin{align*}
		\begin{pmatrix}
		 	u^+ 	\\
		 	u^-
		\end{pmatrix}
	=
		\sum_{n \geq 0} 
		\begin{pmatrix}
		 	\ee^h a_n z^n  	\\
	 		\ee^{-h} b_n \bar{z}^n
		\end{pmatrix}
	\end{align*}
	on the interior of $\Omega^{out}$.
	Using Lem.~\ref{le:traces_out} and the boundary 
	condition~\eqref{eq:BC_hole_N+1}
	we have
	\begin{align*}
	 	\gamma_0 	
	 		\begin{pmatrix}
	 		 	u^+ \\
	 		 	u^-
	 		\end{pmatrix}
		=
		 	\sum_{n\geq 0} \sum_{k\leq 0} R_{out}^{n+k}
		 	\begin{pmatrix}
			 	a_n d_k^+ \psi_{n+k}	
			 	\\
			 	b_n d_k^- \psi_{-(n+k)}			
		 	\end{pmatrix} 	
		=
		 	\begin{pmatrix}
			 	\sum_{\ell< \frac{\Phi}{2\pi} -\frac{1}{2} } 
			 		\beta_{\ell}^+ \psi_{\ell}	
			 	\\
			 	\sum_{\ell \geq \frac{\Phi}{2\pi} +\frac{1}{2}} 
			 		\beta_{\ell}^- \psi_{\ell}			
		 	\end{pmatrix} \,,
	\end{align*}
	with some $\beta_{\ell}^{\pm} \in \Cb$,
	which imposes the restrictions
	$a_n d_k^+ = 0$ if 
	$n+k \geq \frac{\Phi}{2\pi} -\frac{1}{2}$
	and 
	$b_n d_k^- = 0$ if
	$-(n+k) < \frac{\Phi}{2\pi} +\frac{1}{2}$.
	We recall that $d_0^{\pm} \neq 0$
	to deduce that there are
	 $ \floor{\frac{\Phi}{2\pi} - \frac{1}{2}} + 1 =  \floor{\frac{\Phi}{2\pi} + \frac{1}{2}}$ 
	 spin up 
	 and 
	$ \left\{ -\frac{\Phi}{2\pi} - \frac{1}{2} \right\} +1 =  \left\{\frac{-\Phi}{2\pi} + \frac{1}{2} \right\}$ 
	spin down zero modes.
	The symbol $\{y\}$ denotes the biggest integer smaller or equal to $y\geq 0$. 
	The proof is now concluded by noticing that the equality
	$\floor{y+\tfrac{1}{2}} = -\{-y+\tfrac{1}{2}\}$ holds
	 for any $y \leq \frac{1}{2}$. 
 \end{proof}

\section{ Aharonov--Casher on a sphere with holes} \label{sec:sphere}
In this section we will prove a version of the Aharonov--Casher theorem 
for the magnetic Dirac operator on a sphere with holes whose boundaries are
equipped  with APS boundary conditions.
This corresponds to our setup from Sec.~\ref{sec:problem_setting}
putting $\mathscr{M}=\Sb^2$.
In particular, let us consider the manifold $M  = \Sb^2 \setminus \cup_{k\leq N} \Omega_k$,
where $\cup_{k\leq N} \Omega_k$ is a union of mutually disjoint open discs on $\Sb^2$.
We again consider the magnetic field~\eqref{eq:tot_mag_field} on $M$
for which we additionally pose a requirement that the overall flux on the sphere sums to zero
\begin{align} \label{eq:mag_field_on_sphere}
 	\int _{\Sb^2} B_0 + B _{sing} = 0 \,.
\end{align}
To motivate the condition \eqref{eq:mag_field_on_sphere},
recall that the vector potential one-form $\alpha$ is globally defined
on $M$ 
and therefore the flux through the $N$-th hole is $\Phi_N = -\Psi$,
where $\Psi$ is the total flux minus $\Phi_N$.
This is so, since 
$\int _{\p \Omega_N} \alpha$
can be integrated either as $-\Psi$ or as $\Phi_N$
as $\p \Omega_N$ is boundary of both $\Omega_N$ and $\Omega_N^C$
which are both bounded regions.
Here $(\cdot)^C$ denotes the complement in $\Sb^2$.
We will consider a semi-total flux which we define as 
the bulk contribution $\Phi_0$ plus the normalised fluxes 
(\cf Sec.~\ref{sec:problem_setting}) through all the holes but one 
and we choose to omit the flux of the $N$-th hole
\begin{align*}
 	\widehat{\Phi} = \Phi_0 + \sum _{j\leq N-1} \Phi'_j \,,
 	\qquad 
 	\Phi_j' \in [-\pi, \pi) \,.
\end{align*}
The reasoning behind this comes from Lem.~\ref{le:gauge_invariance_sphere}
establishing the gauge invariance of this problem which we state later.
 It turns out that the problem of finding the zero modes is again gauge invariant
and one can gauge away integer multiples of $2\pi$ of the flux 
inside each of the holes apart from exactly one.
The number of zero modes then depends on the semi-total flux.
 Moreover the result does not depend on which hole was left out with non-normalised flux. 
 More precisely the following theorem holds.

\begin{theorem} \label{thm:sphere}
	Let $D$ be the Dirac operator on $M$ with magnetic field
	\eqref{eq:tot_mag_field} in the Aharonov--Casher gauge, that satisfies the condition
	\eqref{eq:mag_field_on_sphere}.
	 Then there are
	\begin{align*}
		\left| \floor{\frac{\widehat{\Phi}}{2\pi}+ \frac{1}{2}}\right|
	\end{align*}
	zero modes of the operator $D$ with the domain given by the APS boundary conditions.
	If $\widehat{\Phi} > 0$ then all the zero modes have spin up. 
	If $\widehat{\Phi} < 0$ then they have spin down.
\end{theorem}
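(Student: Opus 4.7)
The strategy is to reduce the sphere case to Thm.~\ref{thm:bdd} via stereographic projection, using the single ``unused'' hole $\Omega_N$ as the domain that gets sent to a neighbourhood of infinity. After a rigid rotation of $\Sb^2$ we may assume that $\Omega_N$ contains the north pole. Composing with a M\"obius transform (as analysed in Appx.~\ref{ap:mobius_transform}) we may further bring $\Omega_N$ to a round disc centred at the north pole. Stereographic projection then identifies
\begin{align*}
\mathcal{S}: M = \Sb^2 \setminus \cup_{k\leq N}\Omega_k
	\;\longrightarrow\; M' = \Omega_{out} \setminus \cup_{k\leq N-1}\Omega'_k,
\end{align*}
where $\Omega_{out}$ is the image of $\Sb^2\setminus \Omega_N$ and $\Omega'_k = \mathcal{S}(\Omega_k)$ for $k\leq N-1$. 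The round sphere metric pulls back on $M'$ to $e^{2\omega}g_{\mathrm{flat}}$ with an explicit positive conformal factor $e^{2\omega}$ that is smooth and bounded on compacts of $M'$.

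The next step is to transport the magnetic Dirac problem through this conformal change of metric. From the preparation announced in Sec.~\ref{sec:sphere} (Dirac operator with APS boundary condition in a conformal metric), there is a unitary identification $L^2(M, E_{\Sb^2}) \to L^2(M', \Cb^2)$, $u \mapsto e^{\omega/2}(u\circ \mathcal{S}^{-1})$ up to the spinor-frame transition from the appendix, intertwining the Dirac operator on the sphere with the flat Dirac operator $D_{\tilde a}$ on $M'$ with a transported connection one-form $\tilde\alpha = \mathcal{S}_\ast \alpha$. The magnetic field $B$ on the sphere transforms to a magnetic field $\tilde B$ on $M'$ with the same fluxes through each $\Omega'_k$, $k\leq N-1$, and through the bulk (these are all conformally invariant since $\alpha$ is a one-form and fluxes are integrals of $\dd\alpha$). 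Moreover, the APS boundary operator, being defined intrinsically via the principal curvature and the tangential derivative on each boundary circle, pulls back to the APS boundary operator on $M'$; this must be checked carefully component by component using the conformal transformation formula for $A$ and the fact that every $\p\Omega'_k$ and $\p\Omega_{out}$ is circular.

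Once this dictionary is established, the problem on $M$ becomes a disc problem on $M'$ with inner holes $\Omega'_1,\ldots,\Omega'_{N-1}$ and outer boundary $\p\Omega_{out}$, all equipped with APS boundary conditions. The total flux on $M'$ equals the contribution of $B_0$ plus the (normalised) fluxes through the $N-1$ inner holes, which is precisely the semi-total flux $\widehat{\Phi}$ on the sphere; the condition $\int_{\Sb^2}B = 0$ guarantees that no extra flux gets lost at infinity under the projection, since the flux through $\Omega_N$ is already accounted for by the outer boundary. Applying Thm.~\ref{thm:bdd} to $D_{\tilde a}$ on $M'$ yields exactly $|\floor{\widehat{\Phi}/(2\pi) + 1/2}|$ zero modes, with chirality determined by the sign of $\widehat{\Phi}$. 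The gauge invariance stated in Lem.~\ref{le:gauge_invariance_sphere} ensures the count is independent of which hole plays the role of $\Omega_N$ and of the $2\pi$-ambiguities in the normalised fluxes.

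The main obstacle is the second paragraph: transporting the APS boundary condition through the conformal map, since APS is a nonlocal condition on the boundary and its spectral decomposition depends on the boundary operator $A$, which involves both the shape operator and the boundary connection. One must verify, using the explicit eigenfunctions $\psi^j_\ell$ and $\psi_\ell$ from~\eqref{eq:eigenproblem_bdry_operator} and \eqref{eq:eigenproblem_bdry_operator_out} together with the spinor transition rule from Appx.~\ref{ap:mobius_transform}, that positive/negative spectral projections on the sphere map to positive/negative spectral projections on the disc, and that the kernel summand $v_0$ is sent to $v_0$ on the image side (rather than to $\sigma(\nu)v_0$, in which case Remark~\ref{rem:other_N(A)} would produce a different count). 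This computation, together with checking that the outer boundary $\p\Omega_{out}$ inherits the correct APS condition from $\p\Omega_N$ (with its reversed normal), is where the bulk of the technical work lies.
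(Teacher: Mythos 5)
Your overall strategy is exactly the paper's: rotate $\Omega_N$ to the north pole, stereographically project, and reduce to Thm.~\ref{thm:bdd}. You also correctly identify the crux — transporting the APS condition through the conformal change — but you leave it open, and the route you sketch for closing it would not work as stated. The paper's Cor.~\ref{cor:A_W} shows $A^W = W^{-1}A$, i.e.\ the APS boundary condition is \emph{not} conformally invariant; it only coincides with the flat one on a boundary circle where the conformal factor $W=(1+|z|^2/4)^{-1}$ is \emph{constant}. Since $W$ depends only on $|z|$, this holds for the outer boundary $P(\p\Omega_N)$ (a circle centred at the origin) but fails for the images of the inner holes $\p\Omega_j$, $j\leq N-1$, which are generically off-centre circles. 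So a ``component by component check using the conformal transformation formula for $A$ and circularity of the boundaries'' does not suffice: on those components the flat APS condition and the conformal one genuinely differ in the fixed global chart.

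The paper's resolution (Prop.~\ref{prop:conformal_case}) is to use a \emph{different} chart for each inner hole: the M\"obius transform $Y_{t_j}$ with $t_j$ antipodal to the centre $w_j$ sends $P(\p\Omega_j)$ to a circle centred at the origin, where $W_j$ \emph{is} constant, so the APS conditions for $D^W$ and the flat $D$ agree there and Prop.~\ref{cor:g_is_analytic} extends $g_j^+$ analytically into the hole in the $z_j$ coordinates. One then transports this back to the global coordinates via the spinor transition rule \eqref{eq:patching} together with \eqref{eq:W_ratio}, obtaining $g^+_j(Y_{t_j}(\tilde z)) = (c\tilde z+d)\,\tilde g^+(\tilde z)$ with $(c\tilde z+d)^{-1}$ analytic on $P(\Omega_j)$, so analyticity survives the change of chart. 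Iterating over $j\leq N-1$ and finishing on the outer circle as in Thm.~\ref{thm:bdd} gives the count. Two smaller inaccuracies in your write-up: the unitary of Cor.~\ref{cor:equivalence_sphere_and_bdd} intertwines $D^M$ with the \emph{conformal} operator $D^W$, not the flat one — the passage to the flat operator is the non-unitary similarity $D^W=W^{-3/2}DW^{1/2}$, which preserves kernels but forces you to track what boundary condition $v=W^{1/2}u$ satisfies (again resolved only by constancy of $W$ on the relevant circle); and the worry about $v_0$ versus $\sigma(\nu)v_0$ does not arise here, since the identification of spectral subspaces is by the positive constant rescaling $A^W=W^{-1}A$, which fixes the kernel.
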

The definition of the Dirac operator on a sphere is covered in  Appx.~\ref{ap:stereograph_proj} which also includes a proof of the statement that it is unitarily equivalent to the Dirac operator on a disc with holes with a conformal metric.
\begin{proof}
	We can rotate the 
	sphere so that the centre of the hole $\Omega_N$ becomes 
	the north pole $N'$.
	Then we perform a transformation $P$
	which is the stereographic projection from $N'$
	composed with a reflection (see also \eqref{eq:ster_proj}),
	to obtain a bounded region $P(M) \subset \Cb \simeq \Rb^2$ whose 
	all components of the boundary are circles. 
	This way we get the Dirac operator $D^W$ on the region $P(M)$ 
	with metric 
	\begin{align}
	\label{eq:conformal_metric_1}
	 	g^W= W^2 (\dd x^2 + \dd y^2)\,, 
	 	\quad \text{ where } \quad
	 	W= \bigg (
			1+ \frac{x^2 + y^2}{4}
			\bigg )^{-1} \,,
	\end{align}
	which is unitarily equivalent to the Dirac operator on $M$, by
	Cor.~\ref{cor:equivalence_sphere_and_bdd} in
	Appx.~\ref{ap:stereograph_proj}.		
	The statement is then a direct consequence of 
	Prop.~\ref{prop:conformal_case}
	proved below.
\end{proof}
  
\begin{remark}
\begin{enumerate}
\item
	Notice that in particular, there are no zero modes in the case 
	$\widehat{\Phi} \in (-\pi, \pi]$.
\item
	Let us point out that the number 
	$\left|
		\floor{\frac{\widehat{\Phi}}{2\pi}+ \frac{1}{2}}
	\right|$ 
	where 
	$\widehat{\Phi} =  \sum_{j\leq N-1} \Phi'_j$ does not depend 
	on the numbering of the holes.
	This is because we sum only over the normalised values 
	of the fluxes and due to the condition that the global flux is 
	zero, expressed by \eqref{eq:mag_field_on_sphere}.
	Hence if we fix an index $j_0\leq N-1$ and put 
	\begin{align*}
 		\Phi ^{I} = \Phi' _{j_0} + \Phi _{rest} \,,
	\end{align*}
	 where $\Phi _{rest} = \sum _{j\leq N-1, j\neq j _{0}} \Phi'_j$,
	we have by \eqref{eq:mag_field_on_sphere} the flux $-\Phi^I$ through the hole $\Omega_N$. 
	To normalise this value we note that for any $y \in \Rb$ it holds
	$y- \floor{y +\frac{1}{2}} \in (-\frac{1}{2}, \frac{1}{2}]$. Thus 
	\begin{align*}
 		\frac{\Phi' _{N}}{2\pi} 
 			= - \left(\frac{\Phi^I}{2\pi} - \floor{ \frac{\Phi^I}{2\pi} + \frac{1}{2}} \right) 
 			\in \left[ -\frac{1}{2}, \frac{1}{2} \right)\,,
	\end{align*}
	is the ($\frac{1}{2\pi}$ multiple of the) normalised flux through the $N$-th hole.
	The total flux $\Phi^{II} = \Phi _{rest}+ \Phi' _N$, \ie omitting the contribution from $j_0$, then satisfies
	\begin{align*}
 		\floor{ \frac{\Phi^{II}}{2\pi} + \frac{1}{2}}
 			= \floor{ \frac{\Phi _{rest}}{2\pi} + \frac{1}{2} - \frac{\Phi^I}{2\pi} + \floor{ \frac{\Phi^I}{2\pi} + \frac{1}{2}}  }
 			= \floor{ -\frac{\Phi' _{j_0}}{2\pi}  + \frac{1}{2} }  + \floor{ \frac{\Phi^I}{2\pi} + \frac{1}{2}}
 			= \floor{ \frac{\Phi^I}{2\pi} + \frac{1}{2}} \,,
	\end{align*}
	where in the last equality we used that $\frac{\Phi' _{j_0}}{2\pi} \in [ -\frac{1}{2}, \frac{1}{2})$.
\end{enumerate}
\end{remark}

To see that this result is a direct consequence of the bounded case we need to investigate the Dirac operator with the APS boundary condition under Möbius transform, 
which is a particular case of a conformal transform.

\subsection{The Dirac operator with APS boundary condition in the conformal metric $g^W$}

Let $M$ be a two-dimensional manifold 
with metric $g$
and let $E$ be a Spin$^c$ spinor bundle over $M$
with Clifford multiplication $\sigma$ and 
$Spin^c$ connection $\nabla$.
In \cite[Sec.~4]{ES01} the authors showed how
$\sigma$, $\nabla$ and 
the Levi-Civita connection $\nabla ^{LC}$
 are modified under a general conformal transformation
taking the metric $g$ to a metric $g^W = W^2 g$ for some $W:M   \rightarrow \Rb\setminus \{0\}$.
 We summarise their results in the following proposition.
 \begin{prop}\label{prop:conf_changes}
  	In the conformal metric $g^W = W^2 g$ we have
 	\begin{align*}
	\sigma^{W}(\mu) 
		&= W^{-1} \sigma(\mu) \,,	\\
	\nabla^{W}_{\mu^{\sharp}} u 
		&= \nabla_{\mu^{\sharp}} u +\frac{1}{4} W^{-1} [\sigma(\mu), \sigma (\dd W) ]u  \,, 	\\
	\nabla^{LC,W}_{\mu^{\sharp}}(\zeta)
		&= \nabla^{LC}_{\mu^{\sharp}}\zeta - W^{-1} \mu^{\sharp}(W)\zeta + W^{-1} (\zeta, \dd W) \mu - W^{-1} \zeta(\mu^{\sharp}) \dd W \,,	
\end{align*}
for any spinor $u$, vector field $\mu^{\sharp}$ and a one form $\zeta$.  
We denoted by $\mu $ the one-form dual to $\mu^{\sharp}$ with respect to the metric $g$.
 \end{prop}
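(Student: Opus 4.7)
The plan is to verify the three formulas in turn, treating (1) as being forced by the Clifford relations, (3) as a classical Koszul computation, and (2) as the substantive content requiring a Clifford-algebra identity.

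Formula (1) is essentially forced. Since $g^W = W^2 g$ on the tangent bundle, the dual metric on $T^{\ast}M$ satisfies $g^W(\zeta,\mu) = W^{-2}g(\zeta,\mu)$. Consequently the ansatz $\sigma^W(\mu) = W^{-1}\sigma(\mu)$ yields
\begin{equation*}
\sigma^W(\zeta)\sigma^W(\mu)+\sigma^W(\mu)\sigma^W(\zeta) = 2W^{-2}g(\zeta,\mu) = 2g^W(\zeta,\mu),
\end{equation*}
and Hermiticity is preserved because $W$ is real. Uniqueness of the irreducible representation of $Cl(T^{\ast}M, g^W)$ up to isomorphism then justifies taking this as the definition.

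For (3), I apply Koszul's formula for $\nabla^{LC,W}$, expand each occurrence of $g^W = W^2 g$, and subtract the Koszul identity for $\nabla^{LC}$. A brief rearrangement gives the standard conformal change law on vector fields,
\begin{equation*}
\nabla^{LC,W}_X Y = \nabla^{LC}_X Y + X(\log W)\,Y + Y(\log W)\,X - g(X,Y)\operatorname{grad}_g(\log W).
\end{equation*}
Dualising via $(\nabla_X\zeta)(Y) = X(\zeta(Y)) - \zeta(\nabla_X Y)$, specialising to $X = \mu^{\sharp}$, and rewriting $\operatorname{grad}_g(\log W) = W^{-1}(dW)^{\sharp}$, turns these three extra summands into the three correction terms displayed in the proposition.

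For (2), I must verify that the candidate connection $\nabla^W_{\mu^{\sharp}} = \nabla_{\mu^{\sharp}} + \tfrac14 W^{-1}[\sigma(\mu),\sigma(dW)]$ is a Spin$^c$ connection with respect to $\sigma^W$ and $\nabla^{LC,W}$. Metricity on fibres of $E$ (whose Hermitian inner product is unchanged) follows immediately, because $\sigma(\mu)$ and $\sigma(dW)$ are Hermitian so their commutator is anti-Hermitian, while $W^{-1}$ is a real scalar. The genuine computation is Clifford compatibility: expanding the left-hand side gives
\begin{equation*}
[\nabla^W_{\mu^{\sharp}},\sigma^W(\zeta)] = W^{-1}\sigma(\nabla^{LC}_{\mu^{\sharp}}\zeta) - W^{-2}\mu^{\sharp}(W)\sigma(\zeta) + \tfrac14 W^{-2}[[\sigma(\mu),\sigma(dW)],\sigma(\zeta)],
\end{equation*}
while matching $\sigma^W(\nabla^{LC,W}_{\mu^{\sharp}}\zeta)$ using (3) produces the first two of these terms together with $W^{-2}g(\zeta,dW)\sigma(\mu) - W^{-2}\zeta(\mu^{\sharp})\sigma(dW)$. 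Equality therefore reduces to the purely algebraic identity $[[\sigma(\mu),\sigma(dW)],\sigma(\zeta)] = 4g(\zeta,dW)\sigma(\mu) - 4g(\zeta,\mu)\sigma(dW)$.

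The main obstacle is precisely this last Clifford identity. I would establish it by writing $[\sigma(\mu),\sigma(dW)] = 2\sigma(\mu)\sigma(dW) - 2g(\mu,dW)$, noting that the scalar term drops out when commuted with $\sigma(\zeta)$, and then applying the general Clifford relation $[ab,c] = a\{b,c\} - \{a,c\}b$ together with $\{\sigma(\alpha),\sigma(\beta)\} = 2g(\alpha,\beta)$ to the remaining quadratic term. Once this algebraic step is in hand, the proposition follows by assembling (1), (3), and the two verified properties of $\nabla^W$.
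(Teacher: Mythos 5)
Your proposal is correct, and all three verifications go through: the double-commutator identity $[[\sigma(\mu),\sigma(\dd W)],\sigma(\zeta)] = 4g(\zeta,\dd W)\sigma(\mu) - 4g(\zeta,\mu)\sigma(\dd W)$ follows exactly as you say from $[ab,c]=a\{b,c\}-\{a,c\}b$, and it is precisely what is needed to match $\sigma^W(\nabla^{LC,W}_{\mu^\sharp}\zeta)$ against $[\nabla^W_{\mu^\sharp},\sigma^W(\zeta)]$ once the third formula is dualised from the Koszul computation. Note, however, that the paper itself offers no proof of this proposition: it explicitly presents it as a summary of results from Sec.~4 of the cited work of Erd\H{o}s and Solovej, so there is nothing in the paper to compare your argument against line by line. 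What you have written is the natural self-contained verification, structured the same way that reference proceeds: the rescaling of $\sigma$ is forced by the Clifford relations for the dual metric $W^{-2}g$ on one-forms, the Levi-Civita formula is the classical conformal change law, and the new spinor connection is checked to be metric (the correction term is anti-Hermitian) and Clifford-compatible. One point worth making explicit if you write this up: a Spin$^c$ connection compatible with a fixed Clifford multiplication is only unique up to adding an imaginary-valued one-form, so your computation shows that the displayed $\nabla^W$ is \emph{a} valid choice (the one with no extra $U(1)$ part), which is exactly the normalisation the proposition intends.
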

 We point out that for any  $\zeta \in T^* M$ it holds
 $\sigma^W (W \zeta) = \sigma(\zeta)$
 and that if $\zeta$ is normalized in the metric $g$ then $W \zeta$ is normalized in the 
 conformal metric $g^W$.
As a consequence of Prop.~\ref{prop:conf_changes} we then obtain 
the relations of the Dirac operators and their boundary operators under a conformal transform.
In what follows, we will use the earlier introduced Notation~\ref{notation} on page~\pageref{notation}, 
 where the normalization refers to normalization in metric $g$.
\begin{corollary}\label{cor:A_W}
	Consider a two-dimensional manifold $M$ with the metric $g$ which is conformally 
	transformed to a manifold $M^W$ with metric $g^W = W^2 g$.
	The Dirac operators $D$ on $M$ and $D^W$ on $M^W$ and 
	their respective adapted boundary operators are related by 
	\begin{align*}
 		D^{W} 
			&=	W^{-3/2} D W^{1/2} 	\text{ and }	\\
		A^{W} 
			&= W^{-1}A	\,.
	\end{align*}
	In particular we see that the APS boundary condition is \emph{not} conformally invariant.
\end{corollary}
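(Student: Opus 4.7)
The plan is to verify the two identities separately, using Prop.~\ref{prop:conf_changes} as the only input. The key observation is that if $(e_j)$ is an orthonormal frame for $g$ with dual coframe $(e^j)$, then $(W^{-1} e_j)$ is orthonormal for $g^W$ with dual coframe $(W e^j)$; so the definitions of $D^W$ and $A^W$ can be unfolded in the ``old'' frame, and the discrepancies with $W^{-1}D$ and $W^{-1}A$ should reduce to algebraic manipulations involving $\sigma(dW)$.

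For the first identity I would expand $D^W = -\ii\sum_j \sigma^W(W e^j)\nabla^W_{W^{-1} e_j}$, use the scaling $\sigma^W(W\mu) = \sigma(\mu)$ and the linearity of $\nabla^W$ in its vector argument, and then invoke the conformal change $\nabla^W_{e_j} = \nabla_{e_j} + \tfrac{1}{4} W^{-1}[\sigma(e^j), \sigma(dW)]$ from Prop.~\ref{prop:conf_changes}. This gives $D^W = W^{-1} D - \tfrac{\ii}{4} W^{-2}\sum_j \sigma(e^j)[\sigma(e^j), \sigma(dW)]$. The Clifford relations yield $\sum_j \sigma(e^j)^2 = 2$ and $\sum_j \sigma(e^j)\sigma(dW)\sigma(e^j) = (2-n)\sigma(dW)$, which vanishes in dimension $n=2$; the correction thus collapses to $-\tfrac{\ii}{2} W^{-2} \sigma(dW)$. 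In parallel, applying the Leibniz rule to $W^{-3/2} D(W^{1/2} u)$ together with $\sum_j \sigma(e^j) e_j(W) = \sigma(dW)$ produces the same expression, confirming $D^W = W^{-3/2} D W^{1/2}$.

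For the second identity I would work on the boundary, where the $g^W$-normalized co-vectors are $\nu^W = W\nu$ and $\xi^W = W\xi$ with duals $W^{-1}\nu^{\sharp}$ and $W^{-1}\xi^{\sharp}$. Unfolding $A_0^W = \sigma^W(\nu^W)\sigma^W(\xi^W)\nabla^W_{W^{-1}\xi^{\sharp}}$ with the same rules gives $A_0^W = W^{-1} A_0 + \tfrac{1}{4} W^{-2}\sigma(\nu)\sigma(\xi)[\sigma(\xi), \sigma(dW)]$. Splitting $dW$ in the boundary orthonormal basis $\{\nu, \xi\}$ and using $\sigma(\xi)^2 = 1$ together with $\{\sigma(\nu), \sigma(\xi)\} = 0$, the tangential piece of $dW$ drops out and the correction simplifies to $\tfrac{1}{2} W^{-2}\nu^{\sharp}(W)$. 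Separately, the new principal curvature $\kappa^W$, defined by $\nabla^{LC,W}_{W^{-1}\xi^{\sharp}}(W\nu) = \kappa^W\,\xi^W$, is extracted from the third line of Prop.~\ref{prop:conf_changes}; after the $\xi^{\sharp}(W)$ terms cancel one arrives at $\kappa^W = W^{-1}\kappa + W^{-2}\nu^{\sharp}(W)$. Subtracting $\kappa^W/2$ from $A_0^W$, the $\nu^{\sharp}(W)$ contributions cancel exactly and what remains is $A^W = W^{-1} A$.

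The main obstacle will be the Clifford-algebraic bookkeeping: the two-dimensional identity $\sum_j \sigma(e^j)\sigma(dW)\sigma(e^j) = 0$ and the dropping of the tangential component of $dW$ in the boundary correction both require careful decomposition of $dW$ in the orthonormal frame and repeated use of the Clifford relations, and the final cancellation between $A_0^W$ and $\kappa^W/2$ is a feature of this algebra rather than of a general principle. A secondary subtlety is to distinguish ``normalization in $g$'' from ``normalization in $g^W$'' throughout, since the same geometric direction is represented by different co-vectors in the two metrics, and Prop.~\ref{prop:conf_changes} must be applied with the appropriate scaling factors each time.
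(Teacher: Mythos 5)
Your proposal is correct, but it does not follow the paper's route. For the identity $D^W = W^{-3/2} D W^{1/2}$ the paper simply cites \cite[Thm.~4.3]{ES01}, whereas you supply a full derivation; your computation is sound (the correction term $-\tfrac{\ii}{4}W^{-2}\sum_j \sigma(e^j)[\sigma(e^j),\sigma(\dd W)]$ does collapse to $-\tfrac{\ii}{2}W^{-2}\sigma(\dd W)$ precisely because $\sum_j \sigma(e^j)\sigma(\dd W)\sigma(e^j)=(2-n)\sigma(\dd W)$ vanishes in dimension two, and the Leibniz-rule computation of $W^{-3/2}D(W^{1/2}u)$ produces the same term). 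For the boundary operator the paper avoids ever computing $\kappa^W$: it works with the symmetrized form $2A=\sigma(\nu)\sigma(\xi)\nabla_{\xi^{\sharp}}-\sigma(\xi)\nabla_{\xi^{\sharp}}\sigma(\nu)$, in which the curvature is already packaged, and shows that the entire conformal correction $R=-\sigma(\xi)\{[\sigma(\xi),\sigma(\dd W)],\sigma(\nu)\}$ vanishes by one Clifford-algebra manipulation. You instead track $A_0^W$ and $\kappa^W$ separately and verify that the two $\tfrac12 W^{-2}\nu^{\sharp}(W)$ contributions cancel; I checked both intermediate formulas ($A_0^W=W^{-1}A_0+\tfrac12 W^{-2}\nu^{\sharp}(W)$ and $\kappa^W=W^{-1}\kappa+W^{-2}\nu^{\sharp}(W)$, the latter following from the third line of Prop.~\ref{prop:conf_changes} with $\nu(\xi^{\sharp})=0$) and they are right. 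Your version is longer and requires the conformal transformation law for the Levi-Civita connection, but it is more self-contained and makes the geometric mechanism of the cancellation explicit; the paper's version is shorter but hides where the curvature term went.
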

\begin{proof}
	The proof for $D^W$ is presented in \cite[Thm.~ 4.3]{ES01} so we show only the relation for $A^W$.
	Writing locally on the boundary $D = \sigma(\nu) \nabla_{\nu^{\sharp}} + \sigma(\xi) \nabla_{\xi^{\sharp}}$ 
	and using $\sigma(\nu)^2 = 1$ 
	we recall that by Def.~\ref{def:canonicalBO}
	the canonical boundary operator $A$ adapted to $D$ in the 
	metric $g$ reads
	\begin{align*}
	 	2A = \sigma(\nu) \sigma(\xi) \nabla_{\xi^{\sharp}} - \sigma(\xi) \nabla_{\xi^{\sharp}} \sigma(\nu)\,.
	\end{align*}
	Changing the metric from $g$ to $g^W = W^2g$ in this formula,
	Prop.~\ref{prop:conf_changes}
	further gives
	\begin{align*}
	 	2A^W &= \sigma(\nu) \sigma(\xi) W ^{-1} \nabla^W_{\xi^{\sharp}} - \sigma(\xi) W ^{-1} \nabla^W_{\xi^{\sharp}} \sigma(\nu)		\\
	 		&= W ^{-1}\Big( \sigma(\nu) \sigma(\xi) \big(\nabla_{\xi^{\sharp}} +\frac{1}{4} W^{-1} [\sigma(\xi), \sigma (\dd W) ] \big) 	
	 		- \sigma(\xi) \big(\nabla_{\xi^{\sharp}}  +\frac{1}{4} W^{-1} [\sigma(\xi), \sigma (\dd W) ]\big) \sigma(\nu)\Big)	\\
	 		&= W ^{-1} \left(\sigma(\nu) \sigma(\xi) \nabla_{\xi^{\sharp}} - \sigma(\xi) \nabla_{\xi^{\sharp}} \sigma(\nu)\right) + \frac{W ^{-2}}{4} R
	 		= W ^{-1} 2A  +\frac{W ^{-2}}{4} R \,,
	\end{align*}
	where
	\begin{align*}
	 	R &\coloneqq 
	 		\sigma(\nu) \sigma(\xi) [\sigma(\xi), \sigma (\dd W) ] -  \sigma(\xi) [\sigma(\xi), \sigma (\dd W) ] \sigma(\nu)	
	 		=  
	 		-\sigma(\xi) \{ [\sigma(\xi), \sigma (\dd W) ], \sigma(\nu)\} \,.
	\end{align*}
	Since the pair $(\nu, \xi)$ forms a local orthonormal basis of the one forms
	we can write $\dd W = (\dd W, \xi) \xi +(\dd W, \nu) \nu$ to obtain
	 \begin{align*}
	 	[\sigma(\xi), \sigma(\dd W)] 
	 	= (\dd W, \nu) \left(\{\sigma(\xi), \sigma(\nu) \} -2\sigma(\nu)\sigma(\xi) \right)
	 	= -2(\dd W, \nu) \sigma(\nu)\sigma(\xi)  \,.
	\end{align*}
	Therefore using the anti-commutation identity $\{EF,G\} =E\{F,G\} - [E,G]F$
	holding	for any operators $E,F,G$,
	we infer $R=0$,
	which concludes the  proof of $A^W = W ^{-1}A$.
\end{proof}

Let us restrict to the specific case of the Dirac operator on $P(M)$
with $M = \Sb^2\setminus \cup_{j\leq N} \Omega_j$
and $P$ the stereographic projection composed with a reflection 
defined in Appx.~\ref{ap:stereograph_proj}.
Note, that $P(M)$ is a conformal transformation of $\Omega_{out} \setminus \cup_{j\leq N-1} \Omega_j$
and the new metric $g^W$ is given by~\eqref{eq:conformal_metric_1}.
As in the case of the standard metric
we can use the arguments for gauge invariance
from Lem.~\ref{le:gauge_invariance} for the holes $P(\Omega_j)$, $j\leq N-1$
to find the following.
\begin{lemma}\label{le:gauge_invariance_sphere}
	Let $a $ and $\widetilde{a}$ be two magnetic vector potentials whose fluxes differ by an integer
	 multiple $m_j$ of $2\pi$ on the inner hole $\Omega_j$, for all $j\leq N-1$.
	 Then we have the unitary equivalence between the Dirac operators
	 on $P(M)$ in the metric $g^W$ with APS boundary condition, 
	 corresponding to the 
	 magnetic fields $a$ and $\widetilde{a}$
	 \begin{align*}
 		\mathscr{U}^{\ast} D^W_a \mathscr{U} = D^W_{\widetilde{a}} 	\,, 
	\end{align*}
	with the unitary
	 \begin{align*}
	 	\mathscr{U}	&:  L^2(\Cb, g^W; \Cb^2)  \rightarrow  L^2(\Cb, g^W; \Cb^2) 	\\
 		\mathscr{U}	&: u \mapsto \exp \bigg[\ii \int_{\gamma} (\vec{a} -\widetilde{\vec{a}}) \dd \vec{s} \bigg] u \,,
	\end{align*} 
	where $\gamma\subset P(M)$ is a curve connecting a fixed point $z_0\in P(M)$ and the point $z$.
\end{lemma}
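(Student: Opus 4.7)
The proof I have in mind follows the three-step structure of Lem.~\ref{le:gauge_invariance}, modified by the conformal rescaling relations of Cor.~\ref{cor:A_W} and by the global flux constraint~\eqref{eq:mag_field_on_sphere}. The three steps verify that $\mathscr{U}$ is well defined (path independent), that it conjugates the Dirac operators, and that it preserves the APS boundary condition.

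The main step, and the only genuinely new difficulty compared with Lem.~\ref{le:gauge_invariance}, is the verification that for every closed loop $\gamma\subset P(M)$ one has $\oint_\gamma(\vec{a}-\widetilde{\vec{a}})\,d\vec{s}\in 2\pi\mathbb{Z}$. By Stokes' theorem and $\mathrm{curl}\,\vec{a}=(0,0,B)$, this integral equals the sum of flux differences through those holes that lie on one side of $\gamma$ in $P(M)$. If the chosen side is the one not containing the outer boundary $\partial P(\Omega_N)$, the bulk field $B_0$ cancels and one obtains $2\pi\sum_{j\in J}m_j$ for the subset $J\subset\{1,\dots,N-1\}$ of enclosed inner holes, so the integral is in $2\pi\mathbb{Z}$. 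For loops that separate $\partial P(\Omega_N)$ from the remaining holes the computation picks up the a priori unrestricted difference $\widetilde{\Phi}_N-\Phi_N$; here one uses the sphere flux condition~\eqref{eq:mag_field_on_sphere}, applied to both $a$ and $\widetilde{a}$, to conclude
\begin{equation*}
\widetilde{\Phi}_N-\Phi_N=-\sum_{j\leq N-1}(\widetilde{\Phi}_j-\Phi_j)=-2\pi\sum_{j\leq N-1}m_j\in 2\pi\mathbb{Z}.
\end{equation*}
This is where the spherical topology enters and is the principal obstacle.

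For the conjugation identity $\mathscr{U}^\ast D^W_a\mathscr{U}=D^W_{\widetilde{a}}$ I would invoke $D^W=W^{-3/2}DW^{1/2}$ from Cor.~\ref{cor:A_W}. Since $\mathscr{U}$ is multiplication by a unit scalar it commutes with multiplication by the real function $W$, and the pointwise computation of $\partial_z\mathscr{U}$, $\partial_{\bar z}\mathscr{U}$ from the proof of Lem.~\ref{le:gauge_invariance} applies verbatim to yield $\mathscr{U}^\ast D_a\mathscr{U}=D_{\widetilde{a}}$. Conjugating the conformal relation then gives the claim.

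For preservation of the APS boundary condition I would use $A^W=W^{-1}A$, again from Cor.~\ref{cor:A_W}. Commutativity of $\mathscr{U}$ with the scalar $W^{-1}$ combined with~\eqref{eq:equiv_rel_BO} yields
\begin{equation*}
\mathscr{U}^\ast A^W(a)\mathscr{U}=W^{-1}\mathscr{U}^\ast A(a)\mathscr{U}=W^{-1}A(\widetilde{a})=A^W(\widetilde{a}).
\end{equation*}
Since $W>0$ on the boundary, $A^W$ and $A$ share eigenvectors and the signs of their eigenvalues agree, so the spectral decomposition defining~\eqref{eq:APS_general} is preserved under $\mathscr{U}$, and hence $\mathscr{U}$ maps $\dom(D^W_a)$ bijectively onto $\dom(D^W_{\widetilde{a}})$.
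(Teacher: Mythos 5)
Your proposal follows the paper's own (very terse) proof: the paper simply observes that $\mathscr{U}$ commutes with the scalar $W$ and invokes Lem.~\ref{le:gauge_invariance} together with Cor.~\ref{cor:A_W}, which is exactly the skeleton you flesh out. Two remarks. First, your well-definedness step is correct but you slightly overstate the role of the sphere condition: since $P(\Omega_N)$ is the \emph{unbounded} component of the complement of $P(M)$, no loop $\gamma\subset P(M)$ has it in its bounded interior, so planar Stokes over $int\,\gamma$ only ever picks up the inner holes and gives $-2\pi\sum_{j\in J}m_j$ directly; your alternative computation via $\widetilde{\Phi}_N-\Phi_N$ and \eqref{eq:mag_field_on_sphere} is consistent with this but not needed --- which is precisely why the lemma only hypothesizes integrality of the flux differences on the holes $j\leq N-1$.

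Second, and more substantively, the closing sentence of your boundary-condition step is wrong as stated: since $W=(1+|z|^2/4)^{-1}$ is \emph{not} constant on a boundary circle $\p P(\Omega_j)$ that is not centred at the origin, $A^W=W^{-1}A$ does \emph{not} share eigenvectors with $A$ there (if $A\psi=\lambda\psi$ then $A^W\psi=\lambda W^{-1}\psi$, which is not proportional to $\psi$). Fortunately this claim is also unnecessary: the displayed identity $\mathscr{U}^{\ast}A^W(a)\mathscr{U}=A^W(\widetilde{a})$, which does hold because $\mathscr{U}$ commutes with multiplication by $W^{-1}$, already says that the positive, negative and zero spectral subspaces of $A^W(a)$ are the $\mathscr{U}$-images of those of $A^W(\widetilde{a})$, and that alone identifies the two APS boundary conditions under $\mathscr{U}$ (the comparison with $A$ in the flat metric is only needed later, in Prop.~\ref{prop:conformal_case}, where the paper first moves each hole to the origin by a M\"obius transform precisely so that $W$ becomes constant on it). With that sentence deleted your argument is complete and coincides with the paper's.
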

We recall from the proof of Lem.~\ref{le:gauge_invariance}, 
that replacing the starting point $z_0$ of $\gamma$ 
by a different point $z_1\in P(M)$ amounts merely to a multiplication by the constant
$K =  \exp \bigg[\ii \int_{z_0}^{z_1} (\vec{a} -\widetilde{\vec{a}}) \dd \vec{s} \bigg] $ satisfying $\overline{K} = K ^{-1}$.

\begin{proof}
	Taking the commutativity of $\mathscr{U}$ and $W$ into account, 
	the statement follows directly from Lem.~\ref{le:gauge_invariance} and Cor.~\ref{cor:A_W}.
\end{proof}
By this lemma we can without loss of generality 
 work with the normalized fluxes $\Phi'_j \in [-\pi, \pi)$
 inside the holes $\Omega_j\subset \Sb^2$ for $j\leq N-1$ 
 as well as assume that the magnetic field inside the holes is modelled  
 by such a normalized flux multiple of the Dirac delta function $\delta_{w_j}$, 
 at the centre $w_j$ of $\Omega_j$.
Harvesting all this preparation we are able to find the zero modes 
of the conformal Dirac operator on $\Cb$ and prove the following 
proposition whose immediate consequence is Thm.~\ref{thm:sphere}.
\begin{prop}\label{prop:conformal_case}
 	The zero modes of the Dirac operator $D^W$ on $P(M)$ 
 	in the metric $g^W$ and magnetic field satisfying the condition~\ref{eq:mag_field_on_sphere}
 	in the Aharonov--Casher gauge 
 	with the APS boundary condition are of the form
 	\begin{align*}
 		\begin{pmatrix}
 			u^+ 	\\
 			0
		\end{pmatrix} \,, \quad
 		u^+(z) 
 		&=
 			 W ^{-1/2} (z) \ee ^{h (z)} \sum _{0\leq n < \frac{\widehat{\Phi}}{2\pi} - \frac{1}{2}} a_n z^n	\,,	\\
 		\begin{pmatrix}
 			0	\\
 			u^-
		\end{pmatrix} \,, \quad
 		u^-(z)  
 		&=
 			 W ^{-1/2}(z) \ee ^{-h (z)} \sum _{0\leq n \leq -\frac{\widehat{\Phi}}{2\pi} - \frac{1}{2}} b_n \overline{z}^n
	\end{align*}
	for some coefficients $a_n, b_n \in \Cb$.	
\end{prop}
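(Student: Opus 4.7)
The plan is to reduce to the bounded case already handled by Thm.~\ref{thm:bdd} via the conformal correspondence $D^W = W^{-3/2} D W^{1/2}$ of Cor.~\ref{cor:A_W}. Setting $v = W^{1/2} u$, the spinor $u$ is a zero mode of $D^W$ if and only if $v$ is a zero mode of the standard Dirac operator $D$ on $P(M) = \Omega_{out} \setminus \bigcup_{j \leq N-1} \Omega_j$ with the flat metric. Prop.~\ref{prop:zero_modes_form} applied to $v$ immediately gives $u^\pm = W^{-1/2} \ee^{\pm h} g^\pm$ with $g^+$ analytic and $g^-$ anti-analytic on $P(M)$, which is already the shape stated in the proposition.

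Next I would identify which flux to attach to $P(M)$. By Lem.~\ref{le:gauge_invariance_sphere} the fluxes through the inner holes $\Omega_j$ for $j \leq N-1$ can be normalized to $\Phi'_j \in [-\pi, \pi)$ without changing the problem. The vanishing-total-flux condition \eqref{eq:mag_field_on_sphere} then pins the flux through $\Omega_N$ (whose boundary is sent by $P$ to $\p\Omega_{out}$) to $-\widehat\Phi$, so that $P(M)$ with the flat metric and the pushed-forward magnetic field is precisely the setup of Thm.~\ref{thm:bdd} with total flux $\widehat\Phi$.

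The main obstacle is to transfer the APS boundary condition for $D^W$ (formed with the canonical $A^W = W^{-1}A$) into the APS boundary condition for $D$ (formed with $A$) under the correspondence $u \leftrightarrow v$. On the outer component matters are simple: $\p\Omega_{out}$ is a circle centered at the origin and $W = (1+|z|^2/4)^{-1}$ is radially symmetric, so $W$ is a positive constant on $\p\Omega_{out}$, the operators $A^W$ and $A$ share the same eigenbasis $\{\psi_\ell\}$ of \eqref{eq:eigenproblem_bdry_operator_out}, and their positive spectral subspaces coincide. The outer-boundary analysis of Sec.~\ref{sec:zero_modes_bdd_region}---in particular Lem.~\ref{le:traces_out} and the polynomial cutoff at the end of the proof of Thm.~\ref{thm:bdd}---then transfers verbatim to $v$ and gives exactly the degree bound required for $g^\pm$.

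On the inner boundaries $\p\Omega_j$, $j \leq N-1$, by contrast, $W$ is not constant, so the $A^W$-eigenvectors are no longer the explicit $\psi^j_\ell$ of \eqref{eq:eigenproblem_bdry_operator}. I would keep the holomorphic auxiliary functions $F^\pm_j$ of Prop.~\ref{cor:exp(F)} (which depend only on $h$ and the vector potential and so are insensitive to the conformal factor) and revise the trace computation of Lem.~\ref{le:traces} to carry the extra $W^{-1/2}$ through. The essential input I would rely on is that since $W>0$, the generalized eigenvalue problem $A\phi=\mu W\phi$ defining the $A^W$-spectrum has the same signature structure as $A$, so the positive-spectral condition at $\p\Omega_j$ still rules out exactly the negative-index Laurent coefficients of $g^\pm$ around $w_j$ in the spirit of Prop.~\ref{cor:g_is_analytic}; the hard part is doing this spectral bookkeeping cleanly when the eigenbasis of $A^W$ is only implicit. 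Once the analytic extension of $g^\pm$ inside each inner hole is secured, the explicit form of the zero modes and the count $|\floor{\widehat\Phi/(2\pi) + 1/2}|$ follow from Thm.~\ref{thm:bdd} and Remark~\ref{rem:zero_modes_result_form}.
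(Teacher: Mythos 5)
Your reduction $v = W^{1/2}u$, the identification of the total flux $\widehat{\Phi}$, and the treatment of the outer boundary (where $W$ is constant because $P(\p\Omega_N)$ is a circle centred at the origin) all match the paper. The genuine gap is at the inner components $\p\Omega_j$, $j\le N-1$. You correctly observe that $W$ is non-constant there, so the eigenbasis of $A^W = W^{-1}A$ is no longer the explicit $\psi^j_\ell$; but your proposed fix --- that the generalized eigenvalue problem $A\phi=\mu W\phi$ has ``the same signature structure'' as $A$ and therefore the positive-spectral condition ``still rules out exactly the negative-index Laurent coefficients'' --- does not follow. Sylvester-type inertia arguments give you equality of the \emph{counts} of positive and negative eigenvalues, not equality of the spectral \emph{subspaces}; the APS condition is membership in a specific closed subspace, and the positive subspace of $W^{-1}A$ is genuinely different from that of $A$ when $W$ is non-constant. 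Showing that the trace of a zero mode lying in this new subspace still forces $a_n=0$ for $n<0$ is precisely the hard step, and your proposal leaves it unresolved.

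The paper sidesteps this entirely with a Möbius change of coordinates: for each inner hole it applies $Y_{t_j}=PRP^{-1}$ (Lem.~\ref{le:def_Y}), where $t_j$ is the antipodal point on $\Sb^2$ of the hole's centre $w_j$, so that in the new coordinates $z_j$ the image of $\p\Omega_j$ is a circle centred at the origin and the conformal factor $W_j$ \emph{is} constant on it. Then Prop.~\ref{cor:g_is_analytic} applies verbatim in the $z_j$ coordinates, giving the analytic extension of $g^+_j$ into the hole. The remaining ingredient, which your outline is missing, is the spinor transformation law under the Möbius transform, \eqref{eq:patching}: it produces the factor $(c\tilde z+d)$ relating $g^+_j$ to $\tilde g^+$, and since $\tilde z=-d/c$ (the image of $t_j$) lies outside $P(\Omega_j)$, analyticity transports back to the original coordinates. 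Without either this coordinate trick or a genuine analysis of the positive spectral subspace of $W^{-1}A$ on a circle where $W$ varies, your argument at the inner boundaries does not close.
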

\begin{proof}
	Consider a zero mode $u \in \ker(D^W)$. 
	Then by Cor.~\ref{cor:A_W} we know that for $v(z)= W(z)^{1/2}u$  it holds $Dv(z) = 0$ on $P(M)$
	with $D = W ^{3/2}D^W W ^{-1/2} $ being the Dirac operator
	on $P(M)$ in the standard metric on $\Cb$.
	We choose coordinates $\tilde{z}$ on 
	 $P(\Sb^2\setminus \{N'\})$ with origin at 
	 $P\big( (0,0,-1)^T\big)$ and mark with tilde
	 functions on $P(M)$ expressed in these coordinates.
	Let us fix an arbitrary index $j\leq N-1$.
 	We write similarly 
 	 $f_j(z_j)$ for a function 
 	 $f$ on $P(M)$  
 	in the coordinates $z_j$ obtained by the Möbius transform 
 	 $Y _{t_j}:\tilde{z} \mapsto z_j$
 	(see Appx.~\ref{ap:mobius_transform} and Lem.~\ref{le:def_Y})
 	with $t_j$ being the antipodal point of the centre 
 	$w_j$ of the hole $\Omega_j \subset \Sb^2$.
 	An important observation is that
	 $W_j(z_j)$ is a positive constant on $(Y _{t_j} \circ P) (\p \Omega_j)$
	and therefore 
	 $u$ satisfies the APS boundary condition on 
	 $(Y _{t_j} \circ P) (\p \Omega_j)$ for $D^W$ 
	if and only if $v$ satisfies 
	the boundary condition \eqref{eq:BC_hole_j} on
	 $(Y _{t_j} \circ P) (\p \Omega_j)$.
 	By Prop.~\ref{prop:zero_modes_form} the spin up component $v^+$ takes the form
	\begin{align}\label{eq:sphere_spin_up}
	 	 v _{j}^+(z) = \ee ^{h _j(z)} g _{j}^+(z) \,,		\quad j \leq N -1 \,,
	\end{align}
	where $g_j^+(z)$ is analytic on $(Y _{t_j} \circ P) (M)$ and   
	 can be analytically extended to the hole $(Y _{t_j} \circ P) (\Omega_j)$
	 by Prop.~\ref{cor:g_is_analytic}. 	  
	 In Appx.~\ref{ap:mobius_transform} we argue that under the change of coordinates given by the Möbius transform  
	 \begin{align*}
 		Y _{t_j}: \tilde{z} \mapsto z_j = \frac{a \tilde{z} + b}{c \tilde{z} +d} \,,
	\end{align*}
	for some $a, b, c, d$ 
	complex numbers dependent on $t_j$
	(the number $a$ here should not be confused with the 
	vector potential $a$), 
	 the spinor $u$ needs to satisfy the relation~\eqref{eq:patching},
	 and therefore
	\begin{align*}
	 	W_j ^{-1/2}(Y_{t_j}(\tilde{z})) v_j^+ (Y_{t_j}(\tilde{z})) = \tilde{W} ^{-1/2}(\tilde{z}) \tilde{\mathscr{G}} (\tilde{z}) \tilde{v}^+(\tilde{z})\,,	
	 	\quad \mathscr{G}(z) = 
	 		\frac{cz+d}{|cz+d|}	\,,
	\end{align*}
	for all $j\leq N-1$.
	Employing \eqref{eq:W_ratio} 
	and \eqref{eq:sphere_spin_up} this now leads to analyticity of $\tilde{g}^+(\tilde{z})$ on $P(\Omega_j)$ as
	\begin{align*}
	 	g^+_j (Y_{t_j}(\tilde{z})) 
	 		= \ee ^{\tilde{h} (\tilde{z})-h_j(Y_{t_j}(\tilde{z}))} |c\tilde{z} +d| \frac{c\tilde{z}+d}{|c \tilde{z}+d|} \tilde{g}^+(\tilde{z})
	 		=(c \tilde{z} +d) \tilde{g}^+ (\tilde{z}) \,,
	\end{align*}
	where we used that the functions $h_j(Y_{t_j}(\tilde{z}))$ and $\tilde{h}(\tilde{z})$ are in fact the same function
	$h$ expressed in different sets of coordinates.
	Hence using that $(c \tilde{z} +d) ^{-1}$ is analytic on $P(\Omega_j)$
	\footnote{Note that the point $\tilde{z} = -d/c \notin P(\Omega_j)$ 
	is in fact the image of the antipodal point $t_j$ of $w_j$ under the mapping $P$.}
	and that $j\leq N-1$ was arbitrary
	we conclude that $g^+$ is analytic on $P(\Sb^2 \setminus \Omega_N)$.
	Similarly as above thanks to $W(\tilde{z}) = const>0$
	on $P(\p \Omega_N)$ the boundary conditions on $P(\p \Omega_N)$ 
	for 
	 $u\in \dom(D^W)$ and $v\in \dom(D)$ coincide
	(see Cor.~\ref{cor:A_W}).
	Therefore we may apply the same steps as in the proof of
	Thm.~\ref{thm:bdd} and obtain
	\begin{align*}
 		u^+(z) = W ^{-1/2}(z) \ee ^{h(z)} \sum _{n < \frac{\widehat{\Phi}}{2\pi} -\frac{1}{2}} a_n z^n 
	\end{align*}
	on $P(M)$.
	The form of the modes $u^-$ on $P(M)$ is shown by an adaptation 
	of the previous to be
	\begin{align*}
 		u^-(z) = W ^{-1/2}(z) \ee ^{-h(z)} \sum _{n \leq -\frac{\widehat{\Phi}}{2\pi} -\frac{1}{2}} b_n \bar{z}^n \,.
	\end{align*}	
	Here both $a_n$ and $b_n$ are some complex coefficients.
 \end{proof}

\section{Relation to the index theorem}\label{sec:index}
Here we assume for a moment that the dimension 
 $n$ of the base manifold 
 $M$ is even (not necessarily two).
In that case a Spin$^c$ spinor bundle $E$
with Clifford multiplication $\sigma$
and a $Spin^c$ connection can be defined, provided 
that a certain topological
condition\footnote{
	For further details on this condition see~\cite[Thm.~D.2]{LM89}
	} 
is imposed on $M$.
We refer to \eg \cite[Appx.~D]{LM89} or 
\cite[Sec.~10.8]{TayII})
for the precise definitions of the above terms in dimension $n>2$, 
for $n=2$ recall Sec.~\ref{sec:geometry}.
Due to the Clifford relations
the chirality operator then anti-commutes with $\sigma(\zeta)$
for all $\zeta \in T ^{\ast}M \subset Cl(\Rb^n)$ 
with $Cl(\Rb^n)$ denoting the Clifford algebra on $\Rb^n$
and induces thus
 a $\Zb_2$ grading of the bundle $E$. 
This means that
we can write $E = E_+ \oplus E_-$ where $E _{\pm}$ are the 
$\pm 1$ eigenspaces or the chirality operator.
If $D$ is the Dirac operator on $E$, it can be then written in the following  form
\begin{align*}
 	D = \begin{pmatrix}
 	 		0	&	D_-	\\
 	 		D_+	&	0
 		\end{pmatrix} \,,
\end{align*}
where $D_{\pm}: \Gamma (E_{\pm}) \rightarrow \Gamma(E_{\mp})$ are mutual formal adjoints.
We remark, that 
the Dirac operator is defined by the same formula as in two dimensions 
with the difference that now the index $j$ in 
Def.~\ref{def:Dirac_op} runs up to $n$.
We wish to introduce the quantity index, which is well-defined for Fredholm operators. Therefore in the following we assume that $D$ is 
Fredholm. That is for example true when $M$ is a compact manifold and 
$D$ satisfies the APS boundary condition on $\p M$, \cf
\cite[Def.~5.1, Ex.~5.2, Thm.~5.3]{BB}
\footnote{\cite{BB} shows that it is true even for all $D$-elliptic boundary conditions, see \cite[Def.~4.7]{BB} for the definition.}.
\begin{definition}
 	We define the \demph{analytical index} (or \demph{index}) of the Dirac operator $D$ by
 	\begin{align}\label{eq:analytic_ind}
 		\ind(D) = \dim \ker(D_+) - \dim\ker(D_-) \,.
	\end{align}
\end{definition}
Atiyah and Singer showed in \cite{AS63} that if the manifold $M$ is compact and has no boundary, 
then the analytical index is equal to the topological index 
\begin{align}\label{eq:AS}
 	\ind (D) = \int_M AS \,.
\end{align}
The integrand $AS$ depends both on the Riemannian curvature $R_M$ of $M$ and
 the magnetic 
 two-form\footnote{In a general dimension the magnetic two-form is 
 the trace $2^{-n/2} \mathrm{Tr} (\ii R)$
 of the $\mathrm{End}(E)$-valued curvature $R(X,Y) = \nabla_X \nabla_Y - \nabla_Y\nabla_X - \nabla_{[X,Y]}$ of the $Spin^c$ connection $\nabla$.
 Here $X,Y$ are arbitrary vector fields on $E$.}
 $\beta$ on the bundle $E$.
 For flat
manifolds, \ie $R_M = 0$, it corresponds to the Chern character of the bundle
  $AS = \mathrm{Ch}(E)_{[n]}	= \left(\exp \frac{ \beta}{2\pi }\right)_{[n]}$,
where the subscript $[n]$ refers to the  $n$-th degree part of the form.
The expression $\exp$ is to be understood as the series expansions.

The index theorem was extended to manifolds with boundary in~\cite{APS1},
where Atiyah, Patodi and Singer proved the formula for the index
	assuming, that $M$ has a product structure near the boundary.
	Neglecting this assumption one obtains an additional boundary term
	that vanishes in the case of a product structure.
	The extended formula was proven by Grubb in \cite[Cor.~5.3]{Gru92}.
	More explicit expression of the boundary term was given by Gilkey in \cite{Gi93}.
	From Gilkey's formula it follows that in case of the APS boundary condition given by the canonical boundary operator (\cf Def.~\ref{def:canonicalBO})
	the above mentioned additional boundary term vanishes.
	In particular in our two-dimensional case we obtain by Thm.~8.4.d and Thm.~1.4 in \cite{Gi93}
	\begin{align} \label{eq:ind_form}
		\ind (D) = \int_M AS - \frac{1}{2}(\eta([{A}] _{11})+ \dim \, \ker[{A}]_{11}) \,,
	\end{align}
	and  $[{A}] _{11}$ is its top left component of the canonical boundary operator.
	
	We will consider 
	the Dirac operator $D_a$ with magnetic field
	\eqref{eq:tot_mag_field}. Recall that 
	$A$ was computed in Sec.~\ref{sec:problem_setting}.
	The first term in the integral
	is the bulk contribution
	as in \eqref{eq:AS}.
	Since in our case $M$ is flat,
	and, since we are in two dimensions,	
	 we have
	  $\int_M AS = \int_M \frac{\beta}{2\pi} 
	 	= \frac{\Phi_0}{2\pi}$.	
	The $\eta$-invariant $\eta(A)$ is defined 
	as the analytic extension 
	of the function
	\begin{align*}
 		\eta_s(A) = \sum_{\lambda \in spec(A)\setminus \{0\} } |\lambda|^{-s} sgn(\lambda) \,,
	\end{align*}
	at the value $s =0$ and 
	is well defined for Dirac operators as was shown in~\cite{APS1}.
	The sum runs over the non-zero eigenvalues of the boundary operator $A$.
	For the simple case $T = -\ii \p_t -c$, $c \in \Rb$,
	on the first Sobolev space $H^1([0, 2\pi])$
	with periodic boundary condition,
	it is shown in a greater detail for example in \cite[Appx.~D]{Thesis}
	that the analytic continuation yields
	\begin{align*}
 		\eta( -\ii \p_t -c) =
 			\begin{cases}
				-1+ 2\langle c \rangle 	& \text{ if } c \in \Rb\setminus \Zb 		\\
				0				& \text{ if } c \in \Zb
			\end{cases} \,,
	\end{align*}
	where $\langle c \rangle$ is the unique number $\widetilde{c} \in (0, 1)$ such that $c-\widetilde{c}\in \Zb$.
	Note that the eta-invariant $\eta(T) $ depends only on the eigenvalues 
	of $T$ and hence the formula for  
	$\eta(T)$ yields directly a result for the $\eta$-invariant of any operator whose spectrum is of the form $\{n+c\}_{n\in \Zb}$.
	Solving the eigenvalue problem for the top left component 
	of the boundary operator $[{A}_{11}]$
	restricted to the inner and outer components of the boundary 
	gives the spectra
	\begin{align*}
 		spec\big([A|_{\p\Omega_j}]_{11} \big)
 			&= \{-R_j ^{-1}
 				\bigg(n - \frac{\Phi'_j}{2\pi} +\frac{1}{2} 
 				 \bigg) \mid n \in \Zb\} 	\\
 		spec\big([A |_{\p\Omega_{out}}] _{11} \big)	
 			&=\{R _{out} ^{-1}
 				\bigg(n - \frac{\Phi}{2\pi} +\frac{1}{2}  
 				 \bigg) \mid n \in \Zb\} \,.
	\end{align*}
	Employing then the property $\eta(c L) = \mathrm{sgn}(c)\eta(L)$ for a constant $c$ and an elliptic operator $L$
	we deduce form~\eqref{eq:eigenproblem_bdry_operator} and 
	\eqref{eq:eigenproblem_bdry_operator_out}
	\begin{align*}
		\eta\big([A|_{\p\Omega_j}]_{11}\big)
		=
			1-2 \bigg\langle \frac{\Phi'_j}{2\pi} - \frac{1}{2}  \bigg\rangle 
		\quad \text{and} \quad
		\eta\big([A|_{\p\Omega_{out}}]_{11}\big) 	
		=
			-1+2 \bigg\langle \frac{\Phi}{2\pi} - \frac{1}{2}  \bigg\rangle
	\end{align*}
	for all $j\leq N$.
	Let us denote by $I_1$ the set of indices $j$ such that $1=\dim\,\ker([A|_{\p\Omega_j}]_{11}) \in \{0, 1\}$, 
	by $|I_1|$ the number of elements in $I_1$
	and let $I_0 = \dim([A|_{\p\Omega_{out}}]_{11}) \in \{0,1\} $.
	We make the following observations
	\begin{enumerate}
		\item By definition of the normalized fluxes (recall Sec.~\ref{sec:problem_setting}) if $j \notin I_1$ we have 
				$\bigg\langle \frac{\Phi'_j}{2\pi} -\frac{1}{2} \bigg \rangle
 				=
 					\frac{\Phi'_j}{2\pi} +\frac{1}{2}$.
		\item For $j \in I_1$ it holds
			 $\frac{\Phi'_j}{2\pi} -\frac{1}{2} = -1 $
			 and thus,
			 $\sum _{j\in I_1} \left( \frac{\Phi'_j}{2\pi} \right)
 				=
 					 -\frac{|I_1|}{2}$.
		\item 
			$\eta([A|_{\p \Omega_{out}}]_{11}) + I_0 = 
 				\begin{cases}
					1 	
						& \text{ if } I_0 = 1 	\\
					-1+2 \left\langle \frac{\Phi}{2\pi} - \frac{1}{2}  \right\rangle 
						& \text{ if } I_0 = 0
				\end{cases} \,.$
\end{enumerate}
	Omitting the outer boundary contribution in the index formula~\eqref{eq:ind_form} for now, 
	we arrive at the expression
	\begin{align*}
		\int_M AS 
			- \frac{1}{2}\sum _{j\leq N} (\eta([A|_{\p\Omega_j}] _{11})
			+ \dim \, \ker[A|_{\p\Omega_j}]_{11}) 
		&= 
			\frac{\Phi_0}{2\pi} -\frac{1}{2}\sum _{j\notin I_1} 
			-2 \left(\frac{\Phi'_j}{2\pi} \right) 
			-\frac{|I_1|}{2}	\\
		&= 
			\frac{\Phi_0}{2\pi} 
			+ \sum _{j\leq N} \left(\frac{\Phi'_j}{2\pi} \right)
		= 
			\frac{\Phi}{2\pi} \,.
	\end{align*}
	Finally for the index of the Dirac operator
	$D_a$ with magnetic field
	\eqref{eq:tot_mag_field} 
	in the gauge~\eqref{eq:h_relation},~\eqref{eq:potential} and
	with 
	domain~\eqref{eq:APS_domain} we have
	\begin{align*}
 		\ind (D_a) &= \frac{\Phi}{2\pi}- 
 			\left.
 			\begin{cases}
 				\frac{1}{2} 		& \text{ if } I_0 = 1	\\
 				-\frac{1}{2}+ \left\langle \frac{\Phi}{2\pi} - \frac{1}{2} \right\rangle & \text{ if } I_0 = 0
			\end{cases}	\right\} 	
			=\floor{ \frac{\Phi}{2\pi} + \frac{1}{2}} \,,
	\end{align*}
	where in the last equality we used that $\frac{\Phi}{2\pi} + \frac{1}{2} \in \Zb$ if $I_0 = 1$.
	Note that this formula is in agreement with our result 
	Thm.~\ref{thm:bdd}, by which we immediately infer:
\begin{corollary}\label{cor:our_index}
	Under the assumptions of Thm.~\ref{thm:bdd}  
	we obtain the index for $D_a$ (defined by~\eqref{eq:analytic_ind}),
	\begin{align*}
		\ind(D_a) = \floor{\frac{\Phi}{2\pi} + \frac{1}{2}} \,.
	\end{align*}
\end{corollary}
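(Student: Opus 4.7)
The plan is to unpack the definition of the analytical index in light of Theorem~\ref{thm:qbdd} and observe that the counting done there already splits the zero modes by chirality, so very little new work is needed. Recall from Section~\ref{sec:geometry} that in the flat two-dimensional setting the chirality operator is $\sigma^3$, so $E_+$ and $E_-$ correspond precisely to the ``spin up'' and ``spin down'' subspaces, and therefore $\dim\ker(D_+)$ (resp.\ $\dim\ker(D_-)$) is exactly the number of spin up (resp.\ spin down) zero modes counted in Theorem~\ref{thm:qbdd}.

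The proof will then split into two cases depending on the sign of $\Lambda \coloneqq \tfrac{\Phi}{2\pi}+q+\tfrac{1}{2}$. If $\Lambda>0$, Theorem~\ref{thm:qbdd} asserts that all zero modes are spin up and their count equals $|\floor{\Lambda}|=\floor{\Lambda}$, so $\dim\ker(D_+)=\floor{\Lambda}$ and $\dim\ker(D_-)=0$, yielding $\ind(D_a)=\floor{\Lambda}$. If $\Lambda\le 0$, all zero modes are spin down; the count is $|\floor{\Lambda}|=-\floor{\Lambda}$, so $\dim\ker(D_-)=-\floor{\Lambda}$ and $\dim\ker(D_+)=0$, giving again $\ind(D_a)=0-(-\floor{\Lambda})=\floor{\Lambda}$. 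The edge case $\Lambda=0$ is consistent, as then $\floor{\Lambda}=0$ and there are no zero modes at all.

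Combining both cases one directly reads off the claimed formula $\ind(D_a)=\floor{\tfrac{\Phi}{2\pi}+q+\tfrac{1}{2}}$. There is no real obstacle here: the content is entirely in Theorem~\ref{thm:qbdd}, and the only point requiring minor care is the bookkeeping of signs when passing between the absolute value used in the statement of Theorem~\ref{thm:qbdd} and the signed difference used in the definition~\eqref{eq:analytic_ind} of the index. Once this is handled, the result follows in a few lines and, as noted in the surrounding text, matches the formula obtained independently from the Grubb--Gilkey extension of the APS index theorem.
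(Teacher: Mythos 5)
Your proposal is correct and follows essentially the same route as the paper: the corollary is read off directly from Thm.~\ref{thm:qbdd} (the eta-invariant computation via the Grubb--Gilkey formula in that section is an independent consistency check, not the proof of the corollary), using that $\ker(D_\pm)$ are exactly the spin up/down zero modes and that Thm.~\ref{thm:qbdd} already fixes the chirality by the sign of $\Lambda=\frac{\Phi}{2\pi}+q+\frac{1}{2}$. One caveat on your edge case: the paper defines $\floor{y}$ as the largest integer \emph{strictly} less than $y$, so $\floor{0}=-1$; hence at $\Lambda=0$ there is in fact one spin-down zero mode and $\ind(D_a)=-1=\floor{0}$, rather than ``no zero modes at all'' as you remark. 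Your general argument for $\Lambda\le 0$ (count $|\floor{\Lambda}|=-\floor{\Lambda}$, all spin down, index $\floor{\Lambda}$) already covers this correctly, so only that parenthetical needs fixing.
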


\section{Conclusion}
We showed a version of the Aharonov--Casher theorem on some
two-dimensional manifolds with boundary. 
In particular our manifolds are 
a plane with holes (Thm.~\ref{thm:unbdd}), 
a disc with holes (Thm.~\ref{thm:bdd}) and 
a sphere with holes (Thm.~\ref{thm:sphere}).
We consider the APS boundary condition on the boundary and show that
the number of zero modes depends only on the sum of the
flux corresponding to the smooth
magnetic field on the manifold and the rational part of the
fluxes through the holes.
In particular our results imply the index theorem for these special 
choices of the manifolds. Moreover since the index is a topological
invariant, the index theorem is implied also for arbitrarily shaped holes.
To prove the Aharonov--Casher theorem, \ie , 
that all zero modes have a definite chirality, is, 
for such domains still an open problem.

\appendix
\section{The Dirac operator under the stereographic projection} \label{ap:stereograph_proj}
	For conciseness we will write in the following $\widetilde{M} =\Sb^2 \setminus \{N'\}$ for the sphere without the north pole $N' = (0, 0, 1)^T \in \Sb^2$ and
$M = \Sb^2\setminus \cup_{j\leq N} \Omega_j$ for the sphere without the holes
$\Omega_j \subset \Sb^2$, $j\leq N$.
	It is convenient to map the Dirac operator on the sphere
	to the plane by the stereographic projection.
	Here we will argue that due to this mapping we can perform the analysis for finding the zero 
	modes of the Dirac operator on $\Sb^2$
	by investigating the problem on $\Cb$ with a metric that is conformal 
	to the standard metric on $\Cb$.
	We will denote by $P: \widetilde{M} \rightarrow \Cb$
	the stereographic projection from the north pole 
	composed with reflection across the $x$ axis. In particular a point
	\begin{align*} 
			\omega =
 			\begin{pmatrix}
  				\cos \phi \sin \theta\\
  				\sin \phi \sin \theta 	\\
  				\cos \theta
			\end{pmatrix} \,, \quad
			\theta \in (0, \pi]\,, \phi\in (0, 2\pi] \,,
		\end{align*}
		is mapped by $P$ to the point $P(\omega) = 2 \cot \frac{\theta}{2} \ee ^{-\ii \phi} \in \Cb$, \ie
		\begin{align} \label{eq:ster_proj}
 			x \coloneqq (P(\omega))_x = 2 \cot (\theta/2) \cos\phi 	\,,\quad 
 			y \coloneqq (P(\omega))_y = -2 \cot (\theta/2) \sin \phi \,.
		\end{align}
	\begin{lemma} \label{le:sgp_isometry}
		The tangent map $P _{\ast}: (T\widetilde{M}, g^{\Sb^2})  \rightarrow (T\Rb^2, g^W)$,
		where $g^{\Sb^2}$ is the standard metric on $\Sb^2$ 
		and 
		\begin{align}\label{eq:W}
		 	g^W = W^2 (\dd x^2 + \dd y^2)\,,
		 	\quad W= \left(1+ \frac{x^2 + y^2}{4} \right)^{-1} \,,
		\end{align}
		is an isometry.
	\end{lemma}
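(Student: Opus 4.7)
The plan is to verify the isometry property pointwise by checking that the pullback $P^{*} g^{W}$ coincides with the round metric $g^{\Sb^2}$ on $\widetilde{M}$. I would parametrize $\widetilde{M}$ in the spherical coordinates $(\theta,\phi) \in (0,\pi] \times (0,2\pi]$ already used in~\eqref{eq:ster_proj}, in which $g^{\Sb^2} = d\theta^2 + \sin^2\theta \, d\phi^2$, and compute both sides in terms of $d\theta$ and $d\phi$.

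First I would differentiate the formulas~\eqref{eq:ster_proj} using $\frac{d}{d\theta}\cot(\theta/2) = -\tfrac{1}{2}\csc^2(\theta/2)$ to obtain $dx$ and $dy$ as linear combinations of $d\theta$ and $d\phi$. Squaring and summing, the mixed $d\theta \, d\phi$ terms cancel because they are proportional to $\sin\phi\cos\phi - \sin\phi\cos\phi$, and the remaining coefficients collapse via $\sin^2\phi + \cos^2\phi = 1$, giving
\[
dx^2 + dy^2 \;=\; \csc^4(\theta/2) \, d\theta^2 \;+\; 4\cot^2(\theta/2)\, d\phi^2 .
\]

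Next I would compute the conformal factor. From $x^2 + y^2 = 4\cot^2(\theta/2)$ one gets $1 + (x^2+y^2)/4 = 1 + \cot^2(\theta/2) = \csc^2(\theta/2)$, so $W = \sin^2(\theta/2)$ and $W^2 = \sin^4(\theta/2)$. Multiplying through, the coefficient of $d\theta^2$ becomes $\sin^4(\theta/2)\,\csc^4(\theta/2) = 1$, while the coefficient of $d\phi^2$ becomes $4\sin^4(\theta/2)\cot^2(\theta/2) = 4\sin^2(\theta/2)\cos^2(\theta/2) = \sin^2\theta$ by the double-angle identity. This yields $P^{*} g^{W} = d\theta^2 + \sin^2\theta\, d\phi^2 = g^{\Sb^2}$, proving that $P_{*}$ is a pointwise linear isometry.

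The argument is purely a trigonometric computation, so there is no serious obstacle. The only minor bookkeeping point is that the reflection across the $x$-axis (responsible for the sign in $y = -2\cot(\theta/2)\sin\phi$) does not affect the conclusion, since the pullback metric is quadratic in the coordinate differentials and invariant under reversing the sign of a coordinate; the reflection changes only the orientation induced by $P$, which plays no role in the isometry statement.
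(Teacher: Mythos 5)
Your proof is correct and is precisely the ``direct computation from~\eqref{eq:ster_proj}'' that the paper's one-line proof invokes: differentiating the projection formulas, observing the cancellation of the cross terms, and checking that the conformal factor $W^{2}=\sin^{4}(\theta/2)$ restores $\dd\theta^{2}+\sin^{2}\theta\,\dd\phi^{2}$. There is no difference in approach, only in the level of detail; your remark that the reflection is irrelevant because the metric is quadratic in the differentials is a sensible addition.
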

	\begin{proof}
		Using the definition of the push-forward map $(\cdot)_{\ast}$
		the statement follows by a direct computation from~\eqref{eq:ster_proj}.		
	\end{proof}
	We further obtain a unitary between the square integrable functions over
	$\Cb$ with the metric $g^W$ and the square integrable functions over the pre-image $P ^{-1}(\Cb)$
	with the standard metric on $\Sb^2$.

\begin{lemma}\label{le:P_ast}
	The pullback of the stereographic projection composed with reflection across the $x$ axis \\
	 $P ^{\ast}: L^2(\Cb, g ^W; \Cb^2)  \rightarrow L^2(P ^{-1}(\Cb), g ^{\Sb^2}; \Cb^2) $ acting as
 	$(P ^{\ast} u)(\omega) \coloneqq u(P(\omega)) $ is a unitary operator.
\end{lemma}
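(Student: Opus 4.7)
The plan is to show the two defining properties of a unitary operator: that $P^\ast$ is an isometry and that it is surjective. Since $P$ is a smooth bijection between $\widetilde{M}$ and $\Cb$ (and hence between $P^{-1}(\Cb)$ and $\Cb$), surjectivity will follow immediately by taking $v \mapsto v \circ P^{-1}$ as the inverse candidate; the real content lies in the isometry computation.

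First I would unpack the relationship between the volume elements. By Lem.~\ref{le:sgp_isometry} the push-forward $P_\ast$ is an isometry between $(T\widetilde{M}, g^{\Sb^2})$ and $(T\Rb^2, g^W)$, hence $(P^{-1})^\ast g^{\Sb^2} = g^W$. Pulling back the Riemannian volume form along an isometry gives the volume form of the pulled-back metric, so
\begin{equation*}
 (P^{-1})^\ast \, dV_{g^{\Sb^2}} = dV_{g^W} = W^2 \, \dd x \wedge \dd y \,,
\end{equation*}
with $W$ as in \eqref{eq:W}. Equivalently, applying $P^\ast$ on both sides,
\begin{equation*}
 dV_{g^{\Sb^2}} = P^\ast ( W^2 \, \dd x \wedge \dd y ) \,.
\end{equation*}

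Next I would compute the norm of $P^\ast u$ directly. For any $u \in L^2(\Cb, g^W; \Cb^2)$, using the standard change of variables along the diffeomorphism $P$,
\begin{equation*}
 \|P^\ast u\|^2_{L^2(P^{-1}(\Cb), g^{\Sb^2}; \Cb^2)}
 = \int_{P^{-1}(\Cb)} |u(P(\omega))|^2_{\Cb^2} \, dV_{g^{\Sb^2}}(\omega)
 = \int_{\Cb} |u(z)|^2_{\Cb^2} \, W^2(z) \, \dd x \, \dd y
 = \|u\|^2_{L^2(\Cb, g^W; \Cb^2)} \,.
\end{equation*}
This gives the isometry property and in particular shows that $P^\ast u$ is indeed in $L^2(P^{-1}(\Cb), g^{\Sb^2}; \Cb^2)$ whenever $u$ is in $L^2(\Cb, g^W; \Cb^2)$.

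Finally, for surjectivity I would note that $P: P^{-1}(\Cb) \to \Cb$ is a diffeomorphism, so given any $v \in L^2(P^{-1}(\Cb), g^{\Sb^2}; \Cb^2)$ the function $u \coloneqq v \circ P^{-1}$ is a well-defined $\Cb^2$-valued measurable function on $\Cb$; the same change of variables as above shows $u \in L^2(\Cb, g^W; \Cb^2)$ with $\|u\|_{L^2(\Cb, g^W; \Cb^2)} = \|v\|_{L^2(P^{-1}(\Cb), g^{\Sb^2}; \Cb^2)}$, and by construction $P^\ast u = v$. I do not foresee a serious obstacle in this argument; the only subtlety is that the whole computation pivots on the identification of $dV_{g^{\Sb^2}}$ with $W^2 \, \dd x \, \dd y$ under $P$, which is exactly the statement that $P_\ast$ is an isometry established in the preceding lemma.
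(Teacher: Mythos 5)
Your proposal is correct and follows essentially the same route as the paper: both reduce the claim to the volume-form identity $dV_{g^{\Sb^2}} = P^{\ast}\left(W^2\,\dd x\wedge\dd y\right)$ and then apply the change of variables in the $L^2$ pairing. The only differences are cosmetic --- the paper obtains the volume-form identity by computing the differentials directly from~\eqref{eq:ster_proj} (finding the Jacobian factor $\sin^4(\theta/2)$), whereas you deduce it from Lem.~\ref{le:sgp_isometry}, and you additionally spell out the surjectivity of $P^{\ast}$, which the paper leaves implicit.
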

\begin{proof}
	Finding the differentials $\dd x$ and $\dd y$ from~\eqref{eq:ster_proj}
	one easily verifies that the volume form changes as
	\begin{align*}
	 	\dd \theta \wedge \sin \theta \dd \phi 
	 		&= \sin^4 (\theta/2) \dd x\wedge \dd y 
	 		  =  \left( 1+ \frac{x^2 + y^2}{4} \right) ^{-2} \dd x \wedge \dd y\,.	
	\end{align*}
	With the notation $(\cdot,\cdot)_{\Sb^2}$ for the inner product on $L^2(\Sb^2, g^{\Sb^2}; \Cb^2)$
	and $(\cdot,\cdot)_W$ for the inner product on $L^2(\Cb, g^W; \Cb^2)$
	 we then obtain
	\begin{align*}
	 	 (P ^{\ast} f_1, P ^{\ast} f_2) _{\Sb^2}
	 		&=\int_0 ^{\pi} \int_0 ^{2\pi} f_1 \circ P(\theta, \phi) \, \overline{f_2 \circ P(\theta, \phi)} \dd \theta \wedge \sin \theta \dd \phi 	\\
	 		&=\int_{\Rb^2} f_1 (x,y) \overline{f_2(x,y)} \left( 1+ \frac{x^2 + y^2}{4} \right)^{-2}  \dd x\wedge \dd y
	 		= (f_1, f_2)_W \,,
	\end{align*}
	 for any $f _{1,2} \in L^2(\Cb, g^W; \Cb^2)$ and for
	$W$ given by~\eqref{eq:W}. 
\end{proof}

\begin{definition}
	We define the Spin$^c$ spinor bundle over $\widetilde{M}$ 
	as the pullback of the Spin$^c$ spinor bundle 
	 $\mathcal{S}$ over $\Cb \sim \Rb^2$ 
	by the stereographic projection composed with reflection $P$
	\begin{align*}
	 	P^{\ast} \mathcal{S}
	 		= \{ (\omega, u)  \in \Sb^2 \times \mathcal{S} \mid \pi (u) = P(\omega) \} \,,
	\end{align*}
	where $\pi$ is the bundle projection of $\mathcal{S} $.
	We have as in Lem.~\ref{le:P_ast} the map 
	$P ^{\ast}: \Gamma(\Rb^2, \mathcal{S})  \rightarrow \Gamma(\widetilde{M}, P ^{\ast} \mathcal{S})$
	given by 
	$(P ^{\ast} u)(\omega) = (u \circ  P)(\omega)$.
	The corresponding Clifford multiplication and the Clifford connection 
	on such bundle are given by
	\begin{equation} 
	\label{eq:pullback_of_composition}
		\begin{aligned}
		 	\sigma^{\widetilde{M}}(P 	^{\ast}\zeta) P^{\ast}
		 		&\coloneqq P ^{\ast} \sigma^W(\zeta) \,,		
		 		& \zeta\in T ^{\ast}\Rb^2\\
		 	\nabla^{\widetilde{M}}_{X} P^{\ast}
		 		&\coloneqq P ^{\ast} \nabla^W_{P _{\ast}X} \,,  	
		 		& X \in T\widetilde{M} \,,
		\end{aligned} 
	\end{equation}
	where $\sigma^W$ and $\nabla^W$ refer to the Clifford multiplication 
	and Clifford connection on $\mathcal{S}$.
 \end{definition}
 Now we are ready to state a corollary which will 
 reduce our analysis of the Dirac operator on the 
 the sphere with holes
 to the investigation of the corresponding Dirac operator on 
 a disc with holes
 in a metric conformal to the standard metric on $\Rb^2\simeq \Cb$.
 
\begin{corollary}\label{cor:equivalence_sphere_and_bdd}
	The Dirac operator $D^M$ on $M$ is unitarily equivalent to the Dirac operator $D^W$
	on $P(M) \subset (\Cb, g^W)$,
	\begin{align*}
	 	 D ^{M} P ^{\ast} = P ^{\ast} D^W \,.
	\end{align*}
\end{corollary}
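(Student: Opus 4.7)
The plan is to verify the intertwining identity $D^M P^{\ast} = P^{\ast} D^W$ by a direct computation that simply chains together the definitions laid out in~\eqref{eq:pullback_of_composition}. Unitarity of $P^{\ast}$ has already been established in Lem.~\ref{le:P_ast}, so the only content is the algebraic identity at the level of smooth sections.

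First I would fix a local orthonormal frame $(e_1, e_2)$ on $(T\widetilde{M}, g^{\Sb^2})$. By Lem.~\ref{le:sgp_isometry} the push-forwards $f_j \coloneqq P_{\ast} e_j$ form a local orthonormal frame on $(T\Cb, g^W)$. The corresponding dual co-frames are then related by pullback: computing $(P^{\ast} f^j)(e_k) = f^j(P_{\ast} e_k) = f^j(f_k) = \delta^j_k$ shows that $e^j = P^{\ast} f^j$. This identification is the crucial link between the two Dirac operators, since the Dirac operator in Def.~\ref{def:Dirac_op} is expressed in terms of $\sigma$ applied to the dual co-frame.

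Next I would compute, for any $u \in \Gamma(\Cb, \mathcal{S})$,
\begin{align*}
    D^{M}(P^{\ast} u)
        &= -\ii \sum_{j} \sigma^{\widetilde{M}}(e^j) \nabla^{\widetilde{M}}_{e_j} (P^{\ast} u) \\
        &= -\ii \sum_{j} \sigma^{\widetilde{M}}(P^{\ast} f^j)\, P^{\ast} \nabla^{W}_{P_{\ast} e_j} u \\
        &= -\ii \sum_{j} P^{\ast}\bigl( \sigma^{W}(f^j) \nabla^{W}_{f_j} u \bigr)
        = P^{\ast} D^{W} u,
\end{align*}
where the second equality uses the connection rule $\nabla^{\widetilde{M}}_{X} P^{\ast} = P^{\ast} \nabla^{W}_{P_{\ast} X}$ from~\eqref{eq:pullback_of_composition}, and the third equality uses the Clifford rule $\sigma^{\widetilde{M}}(P^{\ast}\zeta) P^{\ast} = P^{\ast} \sigma^{W}(\zeta)$ from the same display. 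Note that the definition of $D^{M}$ is independent of the choice of local orthonormal frame, so the identity obtained locally is in fact globally valid on $\widetilde{M}$, and in particular on $M \subset \widetilde{M}$.

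Because there is no genuine analytic difficulty, the only thing to be careful about is bookkeeping: the correct dualisation of the frame under $P^{\ast}$, and the fact that in~\eqref{eq:pullback_of_composition} the Clifford multiplication and the connection on $\widetilde{M}$ are defined precisely to make such an intertwining automatic. Finally, since $P^{\ast}$ is a unitary between $L^2(\Cb, g^W; \Cb^2)$ and $L^2(P^{-1}(\Cb), g^{\Sb^2}; \Cb^2)$, the identity on smooth sections extends to a unitary equivalence of the maximal realisations, with domains and APS boundary conditions mapped onto one another (the latter because in Sec.~\ref{sec:sphere} we work with $A^W = W^{-1}A$ from Cor.~\ref{cor:A_W} and the boundary components under consideration are circles on which $W$ is constant).
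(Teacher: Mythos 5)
Your proof is correct and follows essentially the same route as the paper: identify the orthonormal co-frames via $P^{\ast}$ using the isometry of Lem.~\ref{le:sgp_isometry}, then chain the two defining relations in~\eqref{eq:pullback_of_composition} to obtain the intertwining. The only minor difference is at the level of domains, where the paper verifies $A^{M}P^{\ast}=P^{\ast}A^{W}$ directly by the same pullback computation (so eigenspaces, and hence the APS conditions, correspond), whereas you invoke Cor.~\ref{cor:A_W} and constancy of $W$ on the boundary circles; both work, but the paper's version is the more direct one.
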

\begin{proof}
	We denote by $s^j$, $j=1,2$ an orthonormal (in $g^{\Sb^2}$) basis on $T ^{\ast}M$,
	by $s_j$ the dual basis and 
	by $e^j$ its counterpart on $T ^{\ast}\Rb^2$ such that
	 $P ^{\ast}e^j = s^j$. 
	 Note, that by Lem.~\ref{le:sgp_isometry} 
	 the last relation defines $e^j$ that form an orthonormal frame on 
	 $T ^{\ast}\Rb^2$ in the metric $g^W$. 
	Using the definitions~\eqref{eq:pullback_of_composition} we obtain for any section $u$ on $\Rb^2$
	\begin{align*}
 		D ^{M} P ^{\ast} u
 			= \sum _{j\leq 2} \sigma^{M}(s^j) \nabla^{M}_{s_j}
 				P^{\ast} u 	
 			= \sum _{j\leq 2}  
 				P ^{\ast} (\sigma^W(e^j) \nabla^W_{e_j} u) 	
 			=  P ^{\ast}( D^W u ) \,.
	\end{align*}
	
	For the canonical boundary operators 
	$A^M$ on $\p M$ and $A^W$ on $P(\p M)$ adapted to $D^M$ and $D^W$ respectively it holds again by~\eqref{eq:pullback_of_composition}
	\begin{align*}
 		2A^M P^{\ast}
 			&= \sigma^M (P  ^{\ast}\nu) \sigma^M(P ^{\ast} \xi)
 				 \nabla^M_{X} P^{\ast}
 				- \sigma^M(P ^{\ast}\xi) \nabla^M_{X}
 					 \sigma^M(P ^{\ast}\nu) P^{\ast}	\\
			&= P ^{\ast} (\sigma^W (\nu) \sigma^W(\xi) 
							\nabla^W _{P _{\ast}{X}})
				-P ^{\ast} (\sigma^W (\xi) \nabla^W _{P _{\ast}{X}} \sigma^W(\nu))
			=2 P ^{\ast} A^W \,,
	\end{align*}
	where $\nu$ and $\xi$ are the normal and tangent co-vector fields
	on the boundary $P(\p M)$ and $X$ is the dual vector field to 
	$P^{\ast}\xi$.
	 We see that $\lambda$ is an eigenvalue of $A^W$ 
	 with eigenfunction $v$ if and only if
	 it is an eigenvalue of $A^M$ with an eigenfunction $P ^{\ast }v$.
	 Hence $\dom (D^M) = P ^{\ast} \dom (D^W)$.
\end{proof}

\section{Remarks on Möbius transform} \label{ap:mobius_transform}
 Möbius transform is a mapping $Y: \Cb  \rightarrow \Cb$ of the form $Y(z) = \frac{az + b}{cz+d}$ such that $ad - bc =1$.
	Notice that it is an analytic mapping on $\Cb \setminus \{ -\frac{d}{c}\}$ whose $z$ derivative
	reads
	\begin{align}\label{eq:Mob_trafo_der}
 		\p_z Y(z) 
 			= \frac{ad - bc}{(cz + d)^2} = \frac{1}{(cz + d)^2} \,.
	\end{align}
Such transforms can be obtained by the composition of the inverse stereographic projection
from the plane to a sphere, a rotation on the sphere and stereographic projecting back to the plane.
\begin{lemma} \label{le:def_Y}
	The Möbius transform $ Y _{\omega} = P R P ^{-1}$, 
	where $P$ is the stereographic projection from the north pole $N'$
	followed by the reflection across the $x$ axis
	(see~\eqref{eq:ster_proj})
	and $R$ is the rotation on $\Sb^2$ along $\phi = const$ (\ie along  a certain meridian)
	 which maps a point
	$\omega \in \Sb^2 \setminus \{N'\}$ to the north pole $N'$
	\begin{align*}
 		\omega = 
 			\begin{pmatrix}
  				\cos \phi_0 \sin \theta_0	\\
  				\sin \phi_0 \sin \theta_0 	\\
  				\cos \theta_0
			\end{pmatrix} 	
			\mapsto R(\omega)  =N' =
			\begin{pmatrix}
  				0	\\
  				0 	\\
  				1
			\end{pmatrix}	\,, \quad 
			\theta_0 \in (0, \pi] \,, \phi_0 \in (0, 2\pi] \,,	
	\end{align*}
	has the form $Y _{\omega}(z) = \frac{az + b}{cz+d}$ with
	the matrix of coefficients
	\begin{align*}
 			\begin{pmatrix}
 				a & b 	\\
 				c & d 	
			\end{pmatrix}
			=
			\begin{pmatrix}
 				\cos \frac{\theta_0}{2}	 					&	 2 \ee ^{-\ii \phi_0} \sin \frac{\theta_0}{2} 	\\
 				- \frac{1}{2} \ee ^{\ii \phi_0} \sin \frac{\theta_0}{2}  	&	 \cos \frac{\theta_0}{2}	
			\end{pmatrix} \,,
			\quad
			\det 
			\begin{pmatrix}
 				a & b 	\\
 				c & d 	
			\end{pmatrix}
			=1 \,.
	\end{align*}
	Moreover, for the composition $Y_{\omega_1}\circ Y_{\omega_2}$ for any $\omega _{1,2} \in \Sb^2\setminus \{N'\}$, 
	the coefficients satisfy the following relations
		\begin{align}\label{eq:coef_relations}
		a =  \overline{d} \,, 
		\quad 
		b = -4 \overline{c} \,,
		\quad
		|a|^2 + 4|c|^2 = |d|^2 + \frac{1}{4}|b|^2 = 1\,.
	\end{align}
\end{lemma}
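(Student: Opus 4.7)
The plan is to split the proof into two halves: first derive the explicit coefficients of $Y_{\omega}$, and second check that the relations in~\eqref{eq:coef_relations} persist under composition.

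For the form of $Y_\omega$, I would exploit the conformality of $P^{\pm 1}$ (stereographic projection composed with a reflection is conformal) and the isometric nature of $R$, so that $Y_\omega = P R P^{-1}$ extends to a conformal self-map of the Riemann sphere $\Cb\cup\{\infty\}$ and is therefore a Möbius transform; after rescaling the matrix entries we may assume $ad-bc=1$. Since a Möbius transform is pinned down by its values at three points, it suffices to verify the claim on three test points, and the natural choices are $z_0 \coloneqq P(\omega) = 2\cot(\theta_0/2)\ee^{-\ii\phi_0}$ (which must map to $\infty$, because $R$ sends $\omega$ to $N'$ and $P$ sends $N'$ to $\infty$), $z=\infty=P(N')$, and $z=0=P(S')$ where $S'$ is the south pole. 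Viewing $R$ as the rotation in the meridional half-plane $\phi=\phi_0$ by angle $-\theta_0$, a short spherical-geometry calculation gives $R(N')$ at spherical coordinates $(\theta_0,\phi_0+\pi)$ and $R(S')$ at $(\pi-\theta_0,\phi_0)$, whose $P$-images are $-z_0$ and $2\tan(\theta_0/2)\ee^{-\ii\phi_0}$ respectively. The three identities $cz_0+d=0$, $a/c=-z_0$, $b/d=2\tan(\theta_0/2)\ee^{-\ii\phi_0}$, together with $ad-bc=1$, determine $(a,b,c,d)$ uniquely, and a short calculation reproduces the matrix claimed in the lemma.

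For the composition half, I would first observe that the conditions in~\eqref{eq:coef_relations} are not independent: granted $a=\bar d$ and $b=-4\bar c$, one has $ad-bc=\bar d\,d+4\bar c\,c=|d|^2+4|c|^2$, so the equalities $|a|^2+4|c|^2 = |d|^2+\tfrac14|b|^2=1$ are automatic from $ad-bc=1$. Since $ad-bc=1$ is preserved by matrix multiplication, only the preservation of $a=\bar d$ and $b=-4\bar c$ under composition remains. Writing the product entries as $a=a_1a_2+b_1c_2$, $b=a_1b_2+b_1d_2$, $c=c_1a_2+d_1c_2$, $d=c_1b_2+d_1d_2$ and using $\overline{b_j}=-4c_j$ and $\overline{d_j}=a_j$, one obtains $\overline{d}=-4\overline{c_1}c_2+a_1a_2=b_1c_2+a_1a_2=a$ and similarly $-4\bar c = b$; an obvious induction extends this to arbitrary iterated compositions.

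I do not anticipate any serious obstacle. The one step that requires real care is tracking the $\pi$-phase shift and the resulting sign when $R$ carries $N'$ past the axis through $\omega$, which fixes the image $-z_0$ at infinity; everything else is routine trigonometry or bilinear algebra. The only mildly unusual structural point is the factor of $4$ in the relation $b=-4\bar c$ (rather than the customary $SU(2)$ value $1$), which originates in the factor $2$ built into the definition~\eqref{eq:ster_proj} of $P$; once one keeps this factor in mind, the relations~\eqref{eq:coef_relations} are simply the $SU(2)$ description of rotations of $\Sb^2$ in rescaled form.
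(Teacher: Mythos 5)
Your proof is correct, and it diverges from the paper's in both halves in ways worth noting. For the explicit form of $Y_\omega$, the paper pins down the transform from its two \emph{fixed} points $\pm 2\ii\ee^{-\ii\phi_0}$ (the stereographic images of the rotation axis) together with the pole condition $P(\omega)\mapsto\infty$ and the normalization $ad-bc=1$, whereas you prescribe three point--image pairs ($P(\omega)\mapsto\infty$, $\infty\mapsto -z_0$, $0\mapsto 2\tan(\theta_0/2)\ee^{-\ii\phi_0}$); your spherical bookkeeping for $R(N')$ and $R(S')$ is right, and the two routes are equally short (note only that either set of conditions fixes the matrix just up to an overall sign once $ad-bc=1$ is imposed, which is harmless for the map and for \eqref{eq:coef_relations}). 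The more substantial difference is in the second half: the paper multiplies the two trigonometric matrices explicitly and reads off the relations from the resulting expressions, while you observe that the last two equalities in \eqref{eq:coef_relations} follow from $a=\overline{d}$, $b=-4\overline{c}$ and $ad-bc=1$, and then show that the conditions $a=\overline{d}$, $b=-4\overline{c}$ are closed under matrix multiplication --- in effect, that these matrices form a conjugate copy of $SU(2)$. This is cleaner, avoids trigonometric identities, and extends for free to arbitrary iterated compositions; the paper's computation, in exchange, exhibits the composite concretely. The one step you leave implicit is the base case, namely that a single $Y_\omega$ already satisfies $a=\overline{d}$ and $b=-4\overline{c}$, but that is immediate from the matrix you derived ($a=d=\cos\tfrac{\theta_0}{2}$ real, and $-4\overline{c}=2\ee^{-\ii\phi_0}\sin\tfrac{\theta_0}{2}=b$), so there is no gap.
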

\begin{proof}	
	One can easily check that $\pm 2 \ii \ee ^{-\ii \phi_0}$ are the two fixed points of $Y _{\omega}$, 
	which with the additional conditions	
	\begin{align*}
 		ad - bc =1 \quad \text{and} \quad
 		Y _{\omega}: P(\omega)  &= 2\ee ^{-\ii \phi_0} \cot \frac{\theta_0}{2} \mapsto \infty \,,
	\end{align*}
	leads to the result
	\begin{align}\label{eq:Mob_trafo}
 		Y _{\omega}(z) = 
 		\frac{\cos \frac{\theta_0}{2} z + 2 \ee ^{-\ii \phi_0} \sin \frac{\theta_0}{2}}
 			 {- \frac{1}{2} \ee ^{\ii \phi_0} \sin \frac{\theta_0}{2}  z + \cos \frac{\theta_0}{2}} \,.
	\end{align}
	In particular, let us point out that the relations between the coefficients of the Möbius transform \eqref{eq:Mob_trafo} satisfy \eqref{eq:coef_relations}.
	For a composition of  two such Möbius transforms $Y _{\omega_1}\circ Y_{\omega_2}$
	for
	\begin{align*}
 		\omega_j = 
 			\begin{pmatrix}
  				\cos \phi_j \sin \theta_j	\\
  				\sin \phi_j \sin \theta_j 	\\
  				\cos \theta_j
			\end{pmatrix} 	\,, j =1,2 \,,
	\end{align*}
	we compute using~\eqref{eq:Mob_trafo} 
\begin{align*}
	Y _{\omega_1}\circ Y_{\omega_2}(z)  &= \frac{a z +b }{cz + d} \,, \quad \text{with}	\\
 	a &= \cos \frac{\theta_1}{2} \cos \frac{\theta_2}{2} - \ee ^{-\ii(\phi_1- \phi_2)} \sin \frac{\theta_1}{2} \sin \frac{\theta_2}{2}  \\
 	d &= \cos \frac{\theta_1}{2} \cos \frac{\theta_2}{2} - \ee ^{\ii(\phi_1- \phi_2)} \sin \frac{\theta_1}{2} \sin \frac{\theta_2}{2}   = \overline{a}\\
 	b &= 2\cos \frac{\theta_1}{2} \sin \frac{\theta_2}{2} \ee ^{-\ii \phi_2} + 2\ee ^{-\ii \phi_1} \sin \frac{\theta_1}{2} \cos \frac{\theta_2}{2} 	\\
 	c &= -\frac{1}{2} \cos \frac{\theta_1}{2} \sin \frac{\theta_2}{2} \ee ^{\ii \phi_2} - \frac{1}{2} \ee ^{\ii \phi_1} \sin \frac{\theta_1}{2} \cos \frac{\theta_2}{2}  = -\frac{\overline{b}}{4} \,.
\end{align*}
\end{proof}

In what follows the particular choice of the point $\omega$ is not important  so we will generally use the notation $Y$ 
instead of $Y _{\omega}$.
Notice that Lem.~\ref{le:sgp_isometry} implies that the tangent mapping
 $Y _{\ast}$, 
``pushforwarding'' vectors at a point $z$ to vectors at $Y(z)$,
 is an isometry on the tangent space of $(\Cb, g^W)$
 with the conformal metric 
 $g^W = W^2 g = (1+ |z|^2/4) ^{-1} g$ where $g$ is the standard metric on $\Cb$.

In the last part of this section we will find
the relation between the spinor $u$ expressed in a set of coordinates on $\Cb$
and in the coordinates which are their M\"{o}bius transform.
Let $u$ be a section of the trivial $Spin^c$ spinor bundle over $\Cb$
 and denote by $u_j(z_j)$ this section in coordinates $z_j$, $j\in \{1,2\}$.
Then we have the relation
\begin{align}\label{eq:general_trafo}
 	u_1(z_1) = \mathcal{G}(z_2) u_2(z_2) 
\end{align}
for some $\mathcal{G} \in GL(2) $.
\begin{remark}
	In fact the structure group of a 
	$Spin^c$-spinor bundle over $M$
	is the group 
	 $Spin^c(2) \coloneqq Spin(2)\times U(1)/\{\pm (1, 1)\}$, where 
	 $/\{\pm (1, 1)\}$ refers to the identification 
	of classes $[(1,1)]$ and $[(-1,-1)]$,
	and, $Spin(2)\simeq SO(2)$ is the spin group of $\Rb^2$, 
	so more precisely $\mathcal{G}\in Spin^c(2) \subset GL(2)$.
	More details on Spin and $Spin^c$ groups can be found 
	\eg in \cite{LM89, TayII}.
\end{remark}
Assume further that the coordinates are related by the Möbius
transform $Y: z_2 \mapsto  z_1 = \frac{az_2 +b}{cz_2+d}$.
Since we know how the one-forms on $\Cb$
transform under a change of coordinates, 
we can find $\mathcal{G}$
by applying relation~\eqref{eq:general_trafo} on a spinor $\sigma^W(\mathcal{T})u$.
Here $\sigma^W(\mathcal{T})$ is the Clifford multiplication 
in metric $g^W$  (see Prop.~\ref{prop:conf_changes}) by a real one form
\begin{align*}
 	\mathcal{T} = \frac{1}{2} (\overline{\tau }\widehat{\dd z} + \tau \widehat{\dd \bar{z}}) \,,
\end{align*}
where $\widehat{\dd z} = W(z) \dd z$ and similarly  
$\widehat{\dd \bar{z}} = W(z) \dd \bar{z}$ denote the orthonormal basis of one forms on $\Cb$
in metric $g^W$.
We denote by $\mathcal{T} _{j} = \Re (\overline{\tau_j} \widehat{\dd z}_j)$ 
the one form  $\mathcal{T}$ in the
bases $(\widehat{\dd z} _{j}, \widehat{\dd \bar{z}} _{j})$, $j\in \{1,2\}$
and note that 
\begin{align} \label{eq:form_trafo}
 	\tau_1(Y(z)) 
 		= \frac{W(z)}{W(Y(z)) \overline{\p_z Y(z)}} \tau_2(z) 
 		= \frac{|cz+d|^2}{(cz+d)^2} \tau_2(z)\,.
\end{align}
The second equality is a result of \eqref{eq:Mob_trafo_der}  and the relations~\eqref{eq:coef_relations} for the coefficients of a Möbius transform as
	\begin{align} \label{eq:W_ratio}
 		\frac{W(z)}{W(Y(z))}
 			&=\frac{4|cz+d|^2+ |az+ b|^2}{4+|z|^2} |cz+d| ^{-2} = |cz+d| ^{-2}\,,\quad	\text{ since }\\
 			\nonumber
		 |az + b|^2 
		  	&= |az|^2 + |b|^2 + 2\Re (az \overline{b})	
		  	= 4+|z|^2 - 4|cz+d|^2 \,.
	\end{align}
By \eqref{eq:general_trafo} (taking $\sigma^W(\mathcal{T}) u$ instead of $u$)
we now obtain 
 \begin{align*}
 	\sigma^W(\mathcal{T}_1) u_1(Y(z)) 
 		= \mathcal{G}(z) \sigma^W(\mathcal{T}_2) u_2(z)
 		=  \mathcal{G}(z) \sigma^W(\mathcal{T}_2) \mathcal{G} ^{-1}(z) u_1(Y(z)) \,.
\end{align*}
Therefore we require 
\begin{align}\label{eq:rel_sigma_to_G}
 	 \mathcal{G}(z) ^{-1} \sigma^W(\mathcal{T}_1) \mathcal{G}(z)  
 		=  \sigma^W(\mathcal{T}_2) \,.
\end{align}
Prop.~\ref{prop:conf_changes} implies
$\sigma^W(\widehat{\dd z}) = \sigma(\dd z)\,, 
\sigma^W(\widehat{\dd \bar{z}}) = \sigma(\dd \bar{z})$
	and hence by \eqref{eq:Pauli_matrices}
\begin{align*}
	\sigma^W(\mathcal{T}) = 
 	\begin{pmatrix}
		0 	& \overline{\tau} 	\\
		\tau 	& 0
	\end{pmatrix} \,.
\end{align*}
We can check that setting
\begin{align*}
 	\mathcal{G}(z) = 
 		|cz+d| ^{-1}
			\begin{pmatrix}
		 		(cz + d)		&	0	\\
		 		0			&	(\overline{cz + d})
			\end{pmatrix} \in SO(2) \,,
\end{align*}
it indeed solves \eqref{eq:rel_sigma_to_G}, as
\begin{align*}
	\mathcal{G}(z) ^{-1} \sigma^W(\mathcal{T}_1) \mathcal{G}(z) 
		=
		\begin{pmatrix}
 			0 &	\frac{(\overline{cz+d})^2}{|cz+d|^2} \overline{\tau_1}	\\
 			\frac{{cz+d}^2}{|cz+d|^2} \tau_1 	& 	0
		\end{pmatrix} 
\end{align*}
corresponds to the correct transformation~\eqref{eq:form_trafo} of the components of the one form $\mathcal{T}$
establishing the equality between the right-hand side and $\sigma^W(\mathcal{T}_2)$.

For a reference we write the transformation relation for spinors on $\Cb$ under the 
Möbius transform once more with the particular 
form of $\mathcal{G}(z)$
\begin{align} \label{eq:patching}
 	u_1(z_1) = |cz_2+d| ^{-1}
			\begin{pmatrix}
		 		(cz_2 + d)		&	0	\\
		 		0			&	(\overline{cz_2 + d})
			\end{pmatrix}u_2(z_2) \,.
\end{align}

\bibliographystyle{plain}
\bibliography{../../bib/bibliography} 

\end{document}